\newcommand{\appref}[1]{{\hyperref[proof:#1]{\appsymb}}}
\theoremstyle{plain}
\theoremstyle{definition}
\newtheorem{rrule}{Reduction Rule}
\crefname{rrule}{Rule}{Rules}
\newtheorem{frule}{Forward Rule}[section]
\newtheorem{brule}{Backward Rule}[section]
\crefname{frule}{Forward Rule}{Forward Rules} %
\Crefname{frule}{FR}{FRs} %
\crefname{brule}{Backward Rule}{Backward Rules} %
\Crefname{brule}{BR}{BRs} %
\Crefname{theorem}{Theorem}{Theorems}
\crefname{theorem}{Thm.}{Thms.}
\Crefname{proposition}{Proposition}{Propositions}
\crefname{proposition}{Prop.}{Props.}
\Crefname{observation}{Observation}{Observations}
\crefname{observation}{Obs.}{Obs.}
\crefname{definition}{Def.}{Defs.}
\crefname{corollary}{Cor.}{Cors.}
\Crefname{corollary}{Corollary}{Corollaries}
\newcommand{\N}{\mathds{N}}
\newcommand{\RR}{\mathds{R}}
\newcommand{\Oh}{\ensuremath{\mathcal{O}}}
\newcommand{\infDef}{\textsf{Inflate-Deflate}}
\newcommand{\localInfDef}{\textsf{local Inflate-Deflate}}
\newcommand{\findMeth}{\textsf{Find}}
\newcommand{\findRed}{\textsf{Find and Reduce}}
\DeclareMathOperator{\cl}{cl}
\DeclareMathOperator{\vc}{vc}
\newcommand{\problemdef}[3]{
  \begin{center}
    \begin{minipage}{0.95\textwidth}
      \normalsize\textsc{#1} \smallskip \\
      \begin{tabularx}{\textwidth}{@{}l@{\hspace{3pt}}X}
        \normalsize\textbf{Input:}    & \normalsize#2 \\
        \normalsize\textbf{Question:} & \normalsize#3
      \end{tabularx}
    \end{minipage}
  \end{center}
}
\newcommand{\VC}[1]{{\normalfont\textsc{Vertex Cover}}}
\newcommand{\independentset}{\textsc{Independent Set}}
\title{There and Back Again: On Applying Data Reduction Rules by Undoing Others}
\author{Aleksander Figiel}{Technische Universität Berlin, Algorithmics and Computational Complexity, Germany}{a.figiel@tu-berlin.de}{}{Supported by DFG project ``MaMu'' (NI369/19).}
\author{Vincent Froese}{Technische Universität Berlin, Algorithmics and Computational Complexity, Germany}{vincent.froesen@tu-berlin.de}{}{}
\author{André Nichterlein}{Technische Universität Berlin, Algorithmics and Computational Complexity, Germany}{andre.nichterlein@tu-berlin.de}{https://orcid.org/0000-0001-7451-9401}{}
\author{Rolf Niedermeier}{Technische Universität Berlin, Algorithmics and Computational Complexity, Germany}{rolf.niedermeier@tu-berlin.de}{https://orcid.org/0000-0003-1703-1236}{}
\authorrunning{A.~Figiel, V.~Froese, A.~Nichterlein, and R.~Niedermeier}
\keywords{Kernelization, Preprocessing, Vertex Cover}
\begin{document}

\maketitle

\begin{abstract}
	Data reduction rules are an established method in the algorithmic toolbox for tackling computationally challenging problems. 
	A data reduction rule is a polynomial-time algorithm that, given a problem instance as input, outputs an equivalent, typically smaller instance of the same problem. 
	The application of data reduction rules during the preprocessing of problem instances allows in many cases to considerably shrink their size, or even solve them directly.
	Commonly, these data reduction rules are applied exhaustively and in some fixed order to obtain irreducible instances. 
	It was often observed that by changing the order of the rules, different irreducible instances can be obtained. 
	We propose to ``undo'' data reduction rules on irreducible instances, by which they become larger, and then subsequently apply data reduction rules again to shrink them. 
	We show that this somewhat counter-intuitive approach can lead to significantly smaller irreducible instances.
	The process of undoing data reduction rules is not limited to ``rolling back'' data reduction rules applied to the instance during preprocessing. 
	Instead, we formulate so-called backward rules, which essentially undo a data reduction rule, but without using any information about which data reduction rules were applied to it previously. 
	In particular, based on the example of \VC{} we propose two methods applying backward rules to shrink the instances further. 
	In our experiments we show that this way smaller irreducible instances consisting of real-world graphs from the SNAP and DIMACS datasets can be computed.
\end{abstract}

\newpage

\section{Introduction}\label{sec:intro}

Kernelization by means of applying data reduction rules is a powerful (and often essential) tool for tackling computationally difficult (e.\,g.\ NP-hard) problems in theory and in practice~\cite{Fom19,ALMNSS20}. 
A data reduction rule is a polynomial-time algorithm that, given a problem instance as input, outputs an ``equivalent'' and often smaller instance of the same problem.
One may think of data reduction as identifying and removing ``easy'' parts of the problem, leaving behind a smaller instance containing only the more difficult parts.
This instance can be significantly smaller than the original instance~\cite{ACF04,AI16,KKN21,GLS21,Wei98} which makes other methods like branch\&bound algorithms a viable option for solving it.
In this work, we apply existing data reduction rules ``backwards'', that is, instead of smaller instances we produce (slightly) larger instances. 
The hope herein is, that this alteration of the instance allows subsequently applied data reduction rules to further shrink the instance, thus, producing even smaller instances than with ``standard'' application of data reduction rules.

We consider the NP-hard \VC{}---the primary ``lab animal'' in parameterized complexity theory~\cite{FJK18}---to illustrate our approach and to exemplify its strengths.
\problemdef{\VC{} \cite{GJ79}}
{An undirected graph~$G = (V,E)$ and~$k\in \N$.}
{Is there a set~$S \subseteq V$, $|S| \le k$, covering all edges, i.\,e., $\forall e \in E\colon e \cap S \neq \emptyset$?}
\VC{} is a classic problem of computational complexity theory and one of Karp's~21 NP-complete problems~\cite{Kar72}.
We remark that for presentation purposes we use the decision version of \VC{}. 
All our results transfer to the optimization version (which our implementation is build for).

To explain our approach, assume that all we have is the following data reduction rule:
\begin{rrule}[Triangle Rule {\cite{FJK18}}]\label{rr:triangle}
	Let~$(G=(V,E),k)$ be an instance of \VC{} and~$v \in V$ a vertex with exactly two neighbors~$u$ and~$w$. 
	If the edge~$\{u,w\}$ exists, then delete~$v$, $u$, and~$w$ from the graph (and their incident edges), and decrease~$k$ by two.
\end{rrule}
\begin{figure}[t]
	\centering
	\begin{subfigure}[c]{.55\textwidth}
		\centering
		\includegraphics{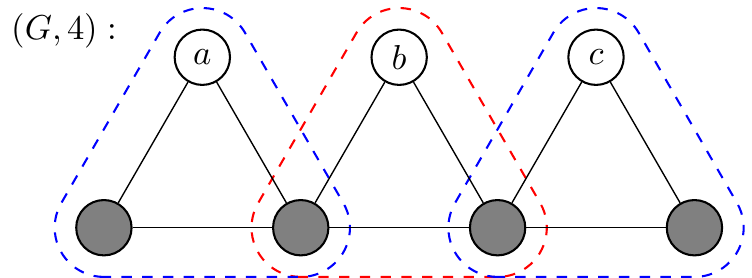}
		\caption{An instance~$(G,4)$ for \VC{}. A vertex cover of size four is indicated by the gray vertices.}
		\label{fig:bad_triangles_a}
	\end{subfigure}~~~~
	\begin{subfigure}[c]{.35\textwidth}
		\centering
		\includegraphics{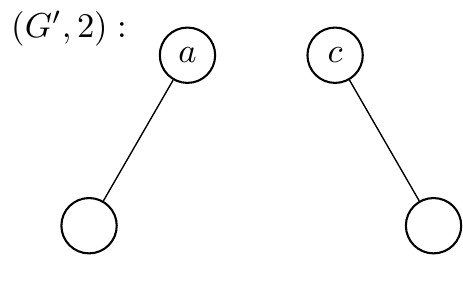}
		\caption{The outcome $(G',2)$ of applying the Triangle Rule to~$b$.}
		\label{fig:bad_triangles_b}
	\end{subfigure}
	\caption{
		An example of how the order of reduction rules can affect the final instance. 
		A \VC{} instance with $k=4$ is depicted left. 
		By applying the Triangle Rule to~$b$ (red triangle), we obtain a graph with two edges (right). 
		However, applying the rule to~$a$ and then~$c$ (therefore removing the two blue triangles) yields the smaller graph consisting only of the vertex~$b$.}
	\label{fig:bad_triangles}
\end{figure}
As illustrated in \cref{fig:bad_triangles}, there are two options to apply \cref{rr:triangle} for the instance~$(G,4)$.
Picking the ``bad'' option, that is, applying it to~$b$ yields the instance~$(G',2)$.
Note that (the correctness of) \cref{rr:triangle} implies that~$(G,4)$ and~$(G',2)$ are ``equivalent'', that is, either both of them are yes-instances or none of them are.
Hence, if we have the instance~$(G',2)$ on the right side (either through the ``bad'' application of \cref{rr:triangle} or directly as input), then we can apply \cref{rr:triangle} \emph{``backwards''} and obtain the equivalent but larger instance~$(G,4)$ on the left side.
Then, by applying \cref{rr:triangle} to~$a$ and~$c$, we can arrive at the edge-less graph~$(\{b\},\emptyset)$, thus ``solving'' the triangle-free instance~$(G',2)$ by only using the Triangle Rule.

More formally, the setting can be described as follows:
A data reduction rule for a problem~$L$ is a polynomial-time algorithm which \emph{reduces} an instance~$x$ to an equivalent instance~$x'$, that is,~$x\in L$ if and only if~$x'\in L$.
A set of data reduction rules thus implicitly partitions the space of all instances into classes of equivalent instances (two instances are in the same equivalence class if one of them can be obtained from the other by applying a subset of the data reduction rules).
The more data reduction rules we have, the fewer and larger equivalence classes we have.
Now, the overall goal of data reduction is to find the \emph{smallest} instance in the same equivalence class.
We demonstrate two approaches tailored towards (but not limited to) graph problems to tackle this task.

Let us remark that while there are some analogies to the branch\&bound paradigm (searching for a solution in a huge search space), there are also notable differences:
A branching rule creates several instances of which at least one is guaranteed to be equivalent to the original one.
The problem is that, a priori, it is not known which of these instances is the equivalent one. 
Hence, one has to ``solve'' all instances before learning the solution.
In contrast, our setting allows stopping at \emph{any} time as the currently handled instance is \emph{guaranteed} to be equivalent to the starting instance.
This allows for considerable flexibility with respect to possible combinations with other approaches like heuristics, approximation or exact algorithms.

\subparagraph*{Related Work.}
\citet{FJK18} are closest to our work.
They propose a method for automated discovery of data reduction rules looking at rules that replace a small subgraph by another one.
They noticed that if a so called \emph{profile} (which is a vector of integers) of the replaced subgraph and of the one taking its place only differ by a constant in each entry, then this replacement is a data reduction rule.
To then find data reduction rules, one can enumerate all graphs up to a certain size and compute their profile vectors. 
The downside of this approach is that in order to apply the automatically found rules one has solve a (computationally challenging) subgraph isomorphism problem or manually design new algorithms for each new rule. 
\VC{} is extensively studied from the the viewpoint of data reduction and kernelization; see \citet{FJK18} for an overview. 
\citet{AI16} and \citet{HLS20} provide exact solvers that include an extensive list of data reduction rules. 
The solver of \citet{HLS20} won the exact track for \VC{} at the 4th PACE implementation challenge~\cite{DFH19}.
A list of data reduction rules for \VC{} is provided in \cref{chap:vc_rrs}. 

\citet{AHL03} experimentally investigated by how much the so-called Struction data reduction rule for \independentset{} can shrink small random graphs.
The Struction data reduction rule can always be applied to any graph and decreases the stability number\footnote{An independent set is a set of pairwise nonadjacent vertices. The stability number or the independence number of a graph~$G$ is the size of a maximum independent set of~$G$.} of a graph by one, but may increase the number of vertices quadratically each time it is applied.
\citet{GLS21} proposed a modification of the Struction rule for the \textsc{Maximum Weighted Independent Set} problem.
They first restricted themselves to only applying data reduction rules if they do not increase the number vertices in the graph, which they call the reduction phase.
They then compared this method to an approach which allows their modified Struction rule to also increase the number of vertices in the graph by a small fraction, which they call the blow-up phase.
The experiments showed, that repetitions of the reduction and blow-up phase can significantly shrink the number of vertices compared to just the reduction phase.

\citet{Ehrig14} defined the notion of confluence from rewriting systems theory for kernelization algorithms.
Intuitively, confluence in kernelization means that the result of applying a set of data reduction rules exhaustively to the input always results in the same instance, up to isomorphism, regardless of the order in which the rules were applied.
It turns out that for our approach to work we require non-confluent data reduction rules.

\subparagraph*{Our Results.}
In \cref{sec:methods}, we provide two concrete methods to apply existing data reductions rules ``backwards'' and ``forwards'' in order to shrink the input as much as possible. 
We implemented these methods and applied them on a wide range of data reduction rules for \VC{}. %
Our experimental evaluations are provided in \cref{sec:exp} where we use our implementation on instances where the known data reductions rules are not applicable. 
Our implementation can also be used to preprocess a given graph~$G$ and it returns the smallest found kernel~$K$ after a user specified amount of time. 
Moreover, the implementation can translate a provided solution~$S_K$ for the kernel into a solution~$S_G$ for the initial instance such that~$|S_G| \le |S_K| + d$ where~$d := \tau(G) - \tau(K)$ is the difference between the sizes of minimum vertex covers of~$G$ and~$K$.
Thus, if a minimum vertex cover for~$K$ is provided it will be translated into a minimum vertex cover for~$G$.

\section{Preliminaries}

We use standard notation from graph theory and data reduction.
In this work, we only consider simple undirected graphs~$G$ with vertex set~$V(G)$ and edge set~$E(G) \subseteq \{ \{v, w\} \mid v, w \in V(G), v \neq w \}$.
We denote by~$n$ and~$m$ the number of vertices and edges, respectively.
For a vertex~$v \in V(G)$ the open (closed) neighborhood is denoted with~$N_G[v]$ ($N_G(v)$).
For a vertex subset~$S \subseteq V(G)$ we set~$N_G[S] := \bigcup_{v\in S} N_G[v]$.
When in context it is clear which graph is being referred to, the subscript~$G$ will be omitted in the subscripts.

\subparagraph*{Data Reduction Rules.}
We use notions from kernelization in parameterized algorithmics~\cite{Fom19}. 
However, we simplify the notation to unparameterized problems.
A data reduction rule for a problem~$L\subseteq \Sigma^*$ is a polynomial-time algorithm, which \emph{reduces} an instance~$x$ to an equivalent instance~$x'$.
We call an instance~$x$ \emph{irreducible} with respect to a data reduction rule, if the data reduction rule does not change the instance~$x$ any further (that is,~$x'=x$).
The property that the data reduction rule returns an equivalent instance is called \emph{safeness}.
We call an instance obtained from applying data reduction rules \emph{kernel}.

Often, data reduction rules can be considered nondeterministic, because a data reduction rule could change the input instance in a variety of ways (e.\,g., see the example in \cref{sec:intro}).
To highlight this effect and to avoid confusion, we introduce the term \emph{forward rule}.
A forward rule is a subset $R_\mathcal{A}\subseteq \Sigma^*\times \Sigma^*$ associated with a nondeterministic polynomial-time algorithm~$\mathcal{A}$, where~$(x, y) \in R_\mathcal{A}$ if and only if~$y$ is one of the possible outputs of~$\mathcal{A}$ on input~$x$.
Intuitively, a forward rule captures all possible instances that can be derived from the input instance by applying a data reduction rule a single time.
To define what it means to ``undo'' a reduction rule, we introduce the term \emph{backward rule}.
A backward rule is simply the converse relation~$R^{-1}_{\mathcal{A}} := \{ (y,x) \mid (x,y) \in R_\mathcal{A}\}$ of some forward rule~$R_\mathcal{A}$.

\subparagraph*{Confluence.}
A set of data reduction rules is said to be terminating, if for all instances~$\mathcal{I}$, the data reduction rules in the set cannot be applied to the instance~$\mathcal{I}$ infinitely many times.
A set of terminating data reduction rules is said to be confluent if any exhaustive way of applying the rules yields a unique irreducible instance, up to isomorphism~\cite{Ehrig14}.
It is not hard to see that given a set of confluent data reduction rules, undoing any of them is of no use, because subsequently applying the data reduction rules will always result in the same instance.
\begin{lemma}%
	\label{lem:confluence_bad}
	Let~$\mathcal{R}$ be a confluent set of forward rules, $\mathcal{I}$ a problem instance, and~$\mathcal{I}^\mathcal{R}$ the unique instance obtained by applying the rules in~$\mathcal{R}$ to~$\mathcal{I}$.

	Further let~$\overline{\mathcal{R}} = \{R^{-1} \mid R \in \mathcal{R}\}$ be the set of backward rules corresponding to~$\mathcal{R}$.
	Let~$\mathcal{I}^-$ be an instance that was derived by applying some rules from~$\mathcal{R} \cup \overline{\mathcal{R}}$.
	Then applying the rules in~$\mathcal{R}$ exhaustively in any order to~$\mathcal{I}^-$ will yield an instance isomorphic to~$\mathcal{I}^\mathcal{R}$.
\end{lemma}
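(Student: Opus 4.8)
The plan is to recognize this as the standard fact from the theory of abstract rewriting systems: in a system that is simultaneously terminating and confluent (i.e.\ convergent), any two objects connected by the equivalence relation generated by the rewrite relation have the same normal form. First I would set up rewriting-theoretic vocabulary matching the paper's definitions: write $\mathcal{J} \to \mathcal{J}'$ whenever $\mathcal{J}'$ arises from $\mathcal{J}$ by a single application of some forward rule in $\mathcal{R}$, so that a single application of a backward rule in $\overline{\mathcal{R}}$ is exactly a step $\mathcal{J}' \to \mathcal{J}$ read in reverse. Termination of $\mathcal{R}$ together with confluence of $\mathcal{R}$ then guarantees that every instance has a normal form unique up to isomorphism; for $\mathcal{I}$ this normal form is $\mathcal{I}^{\mathcal{R}}$ by definition, while exhaustively applying $\mathcal{R}$ to $\mathcal{I}^-$ produces the normal form $\operatorname{nf}(\mathcal{I}^-)$. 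The goal thus reduces to showing $\mathcal{I}^{\mathcal{R}} \cong \operatorname{nf}(\mathcal{I}^-)$.

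The core step is to establish that $\mathcal{I}$ and $\mathcal{I}^-$ have isomorphic normal forms. Since $\mathcal{I}^-$ is derived from $\mathcal{I}$ by a finite sequence of rules from $\mathcal{R} \cup \overline{\mathcal{R}}$, there is a chain
\[
  \mathcal{I} = \mathcal{J}_0,\; \mathcal{J}_1,\; \dots,\; \mathcal{J}_t = \mathcal{I}^-
\]
in which consecutive instances satisfy either $\mathcal{J}_{i} \to \mathcal{J}_{i+1}$ (a forward step) or $\mathcal{J}_{i+1} \to \mathcal{J}_{i}$ (a backward step). I would argue by induction on~$t$ that $\mathcal{J}_0$ and $\mathcal{J}_t$ share a normal form up to isomorphism, so it suffices to treat a single step. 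If $\mathcal{J} \to \mathcal{J}'$, then one reduction path out of $\mathcal{J}$ first moves to $\mathcal{J}'$ and afterwards reduces $\mathcal{J}'$ to its normal form; uniqueness of the normal form of~$\mathcal{J}$ (from confluence and termination) forces $\operatorname{nf}(\mathcal{J}) \cong \operatorname{nf}(\mathcal{J}')$. The backward case is the same statement read in reverse. Chaining these isomorphisms along the path yields $\mathcal{I}^{\mathcal{R}} = \operatorname{nf}(\mathcal{I}) \cong \operatorname{nf}(\mathcal{I}^-)$, which is exactly the claim.

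The only genuinely delicate point is that confluence here holds merely up to isomorphism rather than up to syntactic equality, so \emph{unique normal form} must be read throughout as \emph{unique up to isomorphism}. I would handle this by passing to isomorphism classes of instances: applicability of a forward rule is isomorphism-invariant, so $\to$ descends to a well-defined relation on these classes, and on the quotient the system is terminating and confluent in the classical, syntactic sense. The convergent-rewriting argument above then applies verbatim on the quotient, and translating back produces the isomorphism asserted in the statement. Everything else---that forward and backward steps keep instances within a single equivalence class, and that normal forms exist thanks to termination---follows immediately from the definitions recalled in the preliminaries.
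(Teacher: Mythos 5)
Your proof is correct, and it reaches the statement by the standard abstract-rewriting route rather than by the paper's own induction. The paper inducts on the \emph{number of backward-rule applications} used to derive $\mathcal{I}^-$: in the inductive step it isolates the last backward step $R^{-1}$, uses confluence to argue that the forward rules applied afterwards may be reordered so that $R$ comes first (cancelling $R^{-1}$), and then invokes the induction hypothesis on an instance derived with one fewer backward rule. You instead induct on the \emph{total length} of the derivation chain, with a uniform single-step lemma: a forward step $\mathcal{J} \to \mathcal{J}'$ (and hence also a backward step, read in reverse) preserves the normal form up to isomorphism, because one exhaustive reduction out of $\mathcal{J}$ passes through $\mathcal{J}'$. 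Both proofs run on the same engine---confluence makes the normal form independent of the reduction path---but your decomposition is more modular: forward and backward steps are handled by one symmetric lemma instead of an asymmetric reordering argument, and the result is recognizable as the classical fact that in a convergent (terminating and confluent) rewriting system all elements of a single convertibility class share a normal form, so it generalizes beyond this setting for free. You also make explicit a point the paper glosses over: since confluence here holds only up to isomorphism, you pass to isomorphism classes (using that rule applicability is isomorphism-invariant) so that ``unique normal form'' can be read literally; the paper silently chains isomorphisms instead. The price of your route is this extra quotient formalism, while the paper's argument stays closer to the concrete definitions at the cost of a less transparent inductive step.
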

{
\begin{proof}
	We prove the lemma by induction over the number of times backward rules were applied to obtain~$\mathcal{I}^-$.

	Base case: if no backward rules were applied to obtain~$\mathcal{I}^-$, then by exhaustively applying~$\mathcal{R}$ we will obtain an instance isomorphic to~$\mathcal{I}^\mathcal{R}$.

	Inductive step: Assume only~$n$ backward rules were applied to obtain~$\mathcal{I}^-$.
	Consider the sequence~$\mathcal{I} = \mathcal{I}_1, \mathcal{I}_2 \dots, \mathcal{I}_{\ell-1}, \mathcal{I}_\ell = \mathcal{I}^-$ of instances which were on the way from~$\mathcal{I}$ to~$\mathcal{I}^-$ during the application of the rules in~$\mathcal{R} \cup \overline{\mathcal{R}}$.
	Further let~$\mathcal{I}_i$ be the instance in the sequence after applying the~$n$'th backward rule~$R^{-1}$.
	After applying~$R$ to~$\mathcal{I}_i$ the instance~$\mathcal{I}_{i-1}$ can be obtained, which was obtained by using only~$n-1$ backward rules.
	Because after~$R^{-1}$ only rules in~$\mathcal{R}$ were applied, which are confluent, we may instead assume that the first rule which was applied after~$R^{-1}$ was~$R$.
	Consequently, by the induction hypothesis an instance isomorphic to~$\mathcal{I}^\mathcal{R}$ will be obtained by exhaustively applying the rules in~$\mathcal{R}$.
\end{proof}
}
\section{Two Methods for Achieving Smaller Kernels} \label{sec:methods}

We apply data reduction rules ``back and forth'' to obtain an equivalent instance as small as possible.
This gives rise to a huge search space for which exhaustive search is prohibitively expensive.
Thus, some more sophisticated search procedures are needed. 
In this section, we propose two approaches which we call the \findMeth{} and the \infDef{} method.
We implemented and tested both approaches; the experimental results are presented in \cref{sec:exp}.

The \findMeth{} method (\cref{ssec:find}) shrinks the naive search tree with heuristic pruning rules in order to identify sequences of forward and backward rules, which when applied to the input instance, produce a smaller equivalent instance.
We employ this method primarily to find such sequences which are \emph{short}, so that those sequences may actually be used to formulate \emph{new} data reduction rules.
It naturally has a \emph{local} flavor in the sense that changes of one iteration are bound to a (small) part of the input graph.

The \infDef{} method (\cref{ssec:infDef}) is much less structured.
It randomly applies backward rules until the instance size increased by a fixed percentage. 
Afterwards, all forward rules are applied exhaustively. 
If the resulting instance is smaller, then the process is repeated; otherwise, all changes are reverted.

\subsection{\findMeth{} Method}\label{ssec:find}

For finding sequences of forward and backward rules which when applied to the input instance produce a smaller instance, we propose a structured search approach based on recursion.
Let~$\mathcal{I}$ be the input instance and let~$\mathcal{F}_{\mathcal{I}}$ be the set of all instances reachable via one forward or backward rule, i.\,e., $\mathcal{F}_{\mathcal{I}} = \{\mathcal{I}' \mid (\mathcal{I}, \mathcal{I}') \in R \text{ for any forward or } \allowbreak \text{backward rule } R\}$.
If the input instance is irreducible with respect to the set of forward rules, then only backward rules will be applicable.
We branch into~$|\mathcal{F}_{\mathcal{I}}|$ cases where, for each~$\mathcal{I}' \in \mathcal{F}_{\mathcal{I}}$, we try to recursively find forward and backward rules applicable to~$\mathcal{I}'$ and branch on each of them. 
This is repeated until some maximum recursion depth is reached or the sequence of forward and backward rule applications results in a smaller instance.
In the latter case, the sequence of applied rules can be thought of as a new data reduction rule.

Note that the search space is immense:
For example, consider the backward rule corresponding to \cref{rr:triangle}, which inserts three vertices~$u$, $v$, and~$w$, makes them pairwise adjacent, and inserts an arbitrary set of edges between~$u$ and~$v$ and the original vertices.
Thus, for each original vertex, there are four options (make it adjacent to~$u$, to~$v$, to~$u$ and~$v$, or neither).
This results in~$4^n$ options for applying just this one single backward rule.
Hence, it is clear that we have to introduce suitable methods to cut off large parts of the resulting search tree.
To this end, we heavily rely on the observation that many reduction rules have a ``local flavor''.

\subparagraph*{Region of Interest.}
To avoid a very large search space we only consider applying forward and backward rules ``locally''.
For this, we define a ``region of interest'', which for graph problems is a set~$X \subseteq V$.
Any forward or backward rule must only be applied within the region of interest.
We initially start with a very modest region of interest, namely~$X = \{v\}$ for all~$v \in V$. 
Thus, we work with~$n$ regions of interest per graph, each one considered separately.

Forward and backward rules are allowed to leave the region of interest only if at least one vertex that is ``relevant'' for the rule is within the region of interest.
Moreover, the region of interest is allowed to ``grow'' as rules are applied.
This is because applications of rules might cause further rules to become applicable.
For example, rules might become applicable to the neighbors of vertices modified by the previously applied rules.

Specifically, let~$M$ be the set of ``modified'' vertices, which are new vertices or vertices which gained or lost an edge as a result of applying a forward or backward rule, and let~$D$ be the set of vertices it deleted.
In the case of \VC{}, we suggest to expand the region of interest to~$(X \cup N[M]) \setminus D$ after each rule application.
The majority of forward rules for \VC{} (see \cref{chap:vc_rrs}) are ``neighborhood based''.
Of course, the region of interest could be expanded even further, e.\,g., by extending it by~$N^2[M]$ instead, but this will of course increase the search space.
With a larger region of interest we might find more reduction rules, but at the cost of higher running time.
Additionally, the found reductions may be more complex, and modify a large subgraph and are therefore difficult to analyze or implement.

We will call the above method, which recursively applies forward and backward rules one by one restricted to only the region of interest, the \findMeth{} method.
Note that we ``accept'' a sequence of forward and backward rules if it decreases the number of vertices or the parameter~$k$, without increasing either.
We have done so to find only ``nice'' data reduction rules where there is no trade-off between decreasing the parameter~$k$ or the number of vertices.
However, different conditions to ``accept'' a sequence of forward and backward rules are also possible.
For example, by only requiring that the number of vertices or edges is decreased.

\subparagraph*{Graph modification.}

Our implementation of the \findMeth{} method outputs sequences of rules which are able to shrink the graph.
However, just knowing which rules and in what order they were applied may not be very helpful in understanding the changes made by the rules.
Specifically, it does not show how the rules were applied.

For this reason, we introduce the notion of a \emph{graph modification} based on the ideas by \citet{FJK18}.
A graph modification encodes how a single or multiple data reduction rules have changed a graph.
We say the \emph{boundary} of a subgraph~$H$ of a graph~$G$ is the set~$B$ of all vertices in~$V(H)$ whose neighborhood in~$G$ contains vertices in~$V(G) \setminus V(H)$.
\begin{definition}
A graph modification is a 4-tuple of graphs~$(G, H, H', G')$ with the following properties
\begin{itemize}
	\item~$B = V(H') \cap V(G)$,
	\item~$H$ is a subgraph of~$G$ with boundary~$B$, and
	\item~$G'$ is derived from~$G$ by deleting all vertices in~$V(H) \setminus B$ and all edges among~$B$ from~$G$ and then adding the vertices in~$V(H') \setminus B$ and adding all edges from~$H'$.
\end{itemize}
\begin{figure}[t]
	\centering
	\begin{subfigure}[c]{.35\textwidth}
		\centering
		\includegraphics{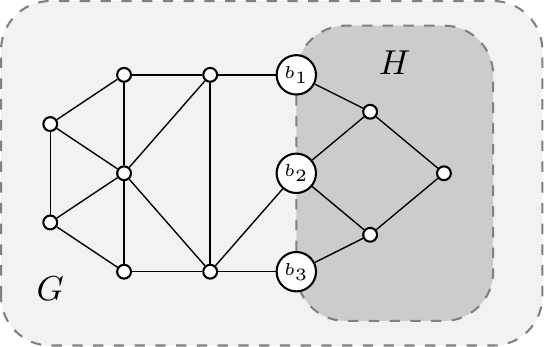}
	\end{subfigure}~~~
	\hspace*{0.7em}
	\begin{subfigure}[c]{.15\textwidth}
		\centering
		\includegraphics{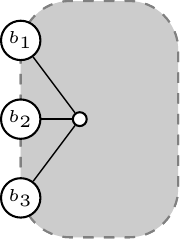}
	\end{subfigure}~
	\begin{subfigure}[c]{.35\textwidth}
		\centering
		\includegraphics{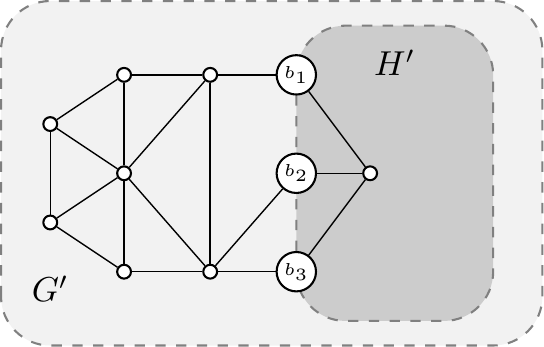}
	\end{subfigure}
\caption{A graph modification example corresponding to the application of the Degree-2 Folding Rule (\cref{rr:deg2}) that applies to vertices with exactly two neighbors that are non-adjacent. The set~$B = \{b_1, b_2, b_3\}$ is the boundary of the graph modification. Notice that only vertices in~$B$ (the boundary) are adjacent to both vertices in~$V(H)$ and~$V(G) \setminus V(H)$}
\label{fig:graph_modification}
\end{figure}
\end{definition}
See \cref{fig:graph_modification} for an example of a graph modification.
The \findMeth{} method can be extended such that in addition to printing rule sequences it also outputs the graph modifications corresponding to each application of a rule from the found sequences.
This can be achieved by keeping track of newly created or deleted vertices and edges by the applied rules.

\subparagraph*{Isomorphism.}
It may happen that two or more rule sequences change an instance in the same way, that is, they produce isomorphic graphs from the same input instance.
{
One reason is that often regions of interest~$X$ can substantially overlap, and the same reduction rule sequence could be found in overlapping regions of interest.
Another reason is that one rule A could generalize a different rule B, so if a rule sequence containing B is found, then the same rule sequence with A instead of B will also be found.

It can therefore happen that some sequences are printed too often.
This could then potentially disturb the statistics about the frequency distribution of different sequences.
One approach to solve this small problem is to not apply a rule if a different rule which it generalizes is applicable.

A more flexible approach is to not accept a reduction rule sequence which produced an instance~$(G',k')$ if a different sequence produced an instance~$(G'', k')$ where~$G' \cong G''$.

The isomorphism test for large graphs can be a challenging task.
The graph isomorphism problem is known to be in NP, however, it is also not known to be NP-hard or in P~\cite{Mat79}.
In practice the (1-dimensional) Weisfeiler-Lehman Algorithm~\cite{WL68} (also known as Color-Refinement) is able to quickly distinguish almost all graphs~\cite{BES80}.
However, the test can only distinguish some non-isomorphic graphs, therefore it cannot be used to prove isomorphism.

We propose a modified isomorphism test in which we additionally exploit the observation that after applying only a few rules, the major parts of a graph are unchanged.
For this we require the following definitions:

\begin{definition}
	Two graph modifications~$(G, H_1, H_1', G_1)$ and~$(G, H_2, H_2', G_2)$ are called \emph{locally isomorphic} if
	there exists a bijective function~$f: V(G_1) \mapsto V(G_2)$ with the following properties
	\begin{itemize}
		\item~$\{a, b\} \in E(G_1)$ if and only if~$\{f(a), f(b)\} \in E(G_2)$, and
		\item for all~$v \in V(G) \setminus ((V(H_1) \cup V(H_2))$ the vertex~$v$ is mapped to itself, i.\,e., $f(v) = v$.
	\end{itemize}
\end{definition}

Intuitively, two graph changes are locally isomorphic if they change the graph in the same way.
Note that local isomorphism corresponds to standard graph isomorphism testing if~$V(H_1) \cup V(H_2) = V(G)$.

In this modified isomorphism definition we enforce that all unmodified vertices of the two graphs are mapped to each other.
This can be implemented by adapting the Individualization-Refinement and Color-Refinement algorithms which are used for standard isomorphism testing~\cite{CG70,MP14}.
The Individualization-Refinement algorithm computes a so-called \emph{canonical labeling}~$\cl(G)$ of the vertices of a graph such that two graphs~$G,H$ are isomorphic if and only if~$\cl(G) = \cl(H)$.
Therefore, after computing the canonical labeling the task of determining isomorphism becomes a simple problem of testing for equality.
The only modification that needs to be made is that we fix a labeling for all vertices not in~$V(H_1) \cup V(H_2)$, and do not change these labels within Individualization-Refinement and Color-Refinement, which is used as a subroutine for the former.
Furthermore, it can be shown that the entire graph~$G$ does not need to be considered, as the two algorithms update the labels only based on the labels of direct neighbors. 
This modified algorithm often significantly outperforms the full isomorphism test, as already most vertices have a unique label already assigned to them.

There is still however the case of ``redundancy'' within reduction rule sequences.
For example a vertex could be split using \cref{rr:undeg2} and then immediately merged again using \cref{rr:deg2}. 
Such cases can also be discarded by testing for local isomorphism to one of the graph modifications corresponding to only applying a reduction rule sequence which is a prefix of the current reduction rule sequence.

Local isomorphism testing can be further used to find ``minimal'' forward and backward rule sequences.
If two graph modifications~$M_1$ and~$M_2$ corresponding to two different rule sequences are locally isomorphic and one sequence is longer than the other then that sequence is not minimal.
}

\subparagraph*{The \findRed{} Method.}\label{sect:FAR}

The \findRed{} Method is just a small variation of \findMeth{}.
Instead of only searching for sequences of forward and backward rules which shrink the instance, upon finding such a sequence it is also immediately applied to the instance.
The search for more sequences continues with the smaller instance.
This method serves the dual purpose of both finding sequences of rules which shrink the instance (and therefore also finding reduction rules), but also that of producing a smaller irreducible instance.
Another potential advantage of this method is that it finds only the rules which were used to produce the smaller instance, and may therefore be more practical than the ones found by \findMeth{}.
Recall, that \findMeth{} only searches for reduction rules applicable directly to the input instance.
Perhaps by applying a single such sequence, different sequences are needed to shrink the remaining instance further.

\subsection{\infDef}\label{ssec:infDef}

Both the \findMeth{} and the \findRed{} method 
only change a small part of the instance.
It may, however, be necessary to change large parts of an instance before it can be shrunk to a size smaller than it was originally.
For this reason, we propose the \infDef{} inspired by the Cyclic Blow-Up Algorithm by \citet{GLS21}.

Essentially, \infDef{} iteratively runs two phases:
First, in the inflation phase, randomly applies a set of backward rules to the instance until it becomes some fixed percentage~$\alpha$ larger than it was initially. 
Then, in the deflation phase, exhaustively apply a set of forward rules and repeat with the inflation phase again.
Our implementation has the termination condition~$|V| = 0$ which may never be met.
Thus a timeout or limit on the number of iterations has to be specified.
The inflation factor~$\alpha$ can be freely set to any value greater zero.
We investigate the effect of different inflation factors in \cref{sec:exp} for values of~$\alpha$ between 10\% and 50\%.

In our deflate procedure, we apply the set of forward rules exhaustively in a particular way:
An applicable forward rule is randomly chosen, and then it is randomly applied to the instance, but only once.
Afterwards another applicable rule is chosen randomly, and this is repeated until the instance becomes irreducible with respect to all forward rules.
This randomized exhaustive application of the forward rules ensures that the rules are not applied in a predefined way, and different ``interactions'' of the rules are tested.
For example, consider the case where in the inflate phase the Backward Degree-2 Folding Rule (\Cref{rr:deg2}) was applied, then likely one would not want to immediately exhaustively apply the Degree-2 Folding Rule which could in effect directly cancel the changes made by that backward rule.
Furthermore, in this way, each of the forward rules has a chance to be applied.
This avoids any potential problems due to an inconvenient fixed rule order.

Because large sections of an instance are modified at once, no short sequences of forward and backward rules can be extracted from this method.
As a result, it is unlikely that new reduction rules could be learned this way.
However, the method produces smaller irreducible instances as we will see in \cref{sec:exp}.

\subparagraph*{Local Inflate-Deflate.}
Within \infDef{} it may happen that if we inflate the instance and then deflate it again, often the resulting instance is larger.
In such cases the number of ``negative'' changes to the instance outweigh the number of ``positive'' changes.
To increase the success probability one may try to lower the inflation factor, however then it can also happen that positive changes are less likely.

An alternate way to try to increase the success probability, is to apply backward rules within a randomly chosen subgraph rather than the whole graph.
For example, this subgraph could be the set of all vertices with some maximum distance to a randomly chosen vertex.
Backward rules are then applied until the subgraph becomes larger by a factor of~$\alpha$, instead of the whole graph.

\section{Experimental Evaluation} \label{sec:exp}

In this section, we describe the experiments which we performed based on an implementation of the methods described in \cref{sec:methods}.

\subsection{Setup}

\subparagraph*{Computing Environment.}
All our experiments were run on a machine running Ubuntu 18.04 LTS with the Linux 4.15 kernel.
The machine is equipped with an Intel\textsuperscript{\textregistered{}} Xeon\textsuperscript{\textregistered{}} W-2125 CPU, with 4 cores and 8 threads\footnote{All our implementations are single-threaded.} clocked at 4.0 GHz and 256GB of RAM.

\subparagraph*{Datasets.}
For our experiments we used three different datasets: DIMACS, SNAP and PACE; the lists of graphs are given in \cref{chap:tables}.
The DIMACS and SNAP datasets are commonly used for graph-based problems, including \VC{} \cite{AI16,KKN21,GLS21}.
We have used the instances from the 10th DIMACS Challenge~\cite{DIMACS10}, specifically from the Clustering, Kronecker, Co-author and Citation, Street Networks, and Walshaw subdatasets.
In total these are 82 DIMACS instances.
From the SNAP Dataset Collection~\cite{SNAP} we have used the graphs from from the Social, Ground-Truth Communities, Communication, Collaboration, Web, Product Co-purchasing, Peer-to-peer, Road, Autonomous systems, Signed and Location subdatasets.
In total we obtained 52 SNAP instances.
Additionally, we used a dataset which was used specifically for benchmarking \VC{} solvers in the 2019 PACE Challenge~\cite{PACE}.
We used the set of 100 private instances~\cite{PACEDATA}, which were used for scoring submitted solvers.

\begin{table}[t]
	\caption{
		A glossary of the forward and backward rules which were used in our implementation. 
		Note that we apply the Struction Rule only if it does not increase~$k$. 
		In the columns named alias we provide shortened names for the rules.
	}
	\centering
	\footnotesize
	\renewcommand{\tabcolsep}{5pt}
	\begin{tabular}{ll l  ll l}
		\toprule
		\multicolumn{3}{c}{Forward rules} & \multicolumn{3}{c}{Backward rules} \\
		Alias & Full name & Ref. & Alias & Full name & Ref. \\
		\midrule
		Deg0 		& Degree-0\hspace{-0.5em} 				& \Cref{rr:deg0}	& Undeg2 & Backward Degree-2 Folding\hspace{-1em} & \Cref{rr:undeg2}\\
		Deg1 		& Degree-1\hspace{-0.5em} 				& \Cref{rr:deg1}	& Undeg3 & \multirow{2}{4cm}{Backward Degree-3 Independent Set} & \Cref{rr:undeg3}\\
		Deg2		& Degree-2 Folding\hspace{-0.5em}			& \Cref{rr:deg2}	&&\\ 
		Deg3		& Degree-3 Independent Set\hspace{-0.5em}		& \Cref{rr:deg3}	& Uncn & \multirow{2}{4cm}{Backward 2-Clique Neigh-\\borhood (special case)}& \Cref{rr:uncn}\\ 
		Dom		& Domination\hspace{-0.5em}				& \Cref{rr:domination}	&&\\ 
		Unconf\hspace{-0.5em} & Unconfined-$\kappa$ ($\kappa = 4$)\hspace{-0.5em}	& \Cref{rr:unconfinedpp}& Undom & Backward Domination & \Cref{rr:undom} \\ 
		Desk		& Desk\hspace{-0.5em}					& \Cref{rr:desk}	& Ununconf\hspace{-2.5em} & Backward Unconfined & \Cref{rr:ununconf}\\ 
		CN 		& 2-Clique Neighborhood\hspace{-0.5em} 		& \Cref{rr:cn} 		& OE\_Ins & Optional Edge Insertion & \Cref{rr:oe_insert}\\
		OE\_Del\hspace{-0.5em } & Optional Edge Deletion\hspace{-0.5em}		& \Cref{rr:oe_delete}	&&&\\ 
		Struct\hspace{-0.5em} & Struction ($k' \leq k$)\hspace{-0.5em}		& \Cref{rr:struction}	&&&\\ 
		Magnet\hspace{-0.5em} & Magnet\hspace{-0.5em}				& \Cref{rr:magnet}	&&&\\ 
		LP		& LP\hspace{-0.5em}					& \Cref{rr:lp}		&&&\\ 
		\bottomrule
	\end{tabular}
	\label{tab:forward_glossary}
\end{table}

\subparagraph*{Preprocessing and Filtering.}
We apply some preprocessing to our datasets.
We obtain simple, undirected graphs by ignoring any potential edge direction or weight information from the instances and by deleting self-loops.
To these graphs we apply the forward rules, see \cref{tab:forward_glossary,chap:vc_rrs} for an overview: 
Deg1, Deg2, Deg3, Unconf, Cn, LP, Struct, Magnet and Oe\_delete exhaustively in the given order.
We note that the kernels obtained this way always had fewer vertices than the kernels obtained with the data reduction suite used by \citet{AI16} and also \citet{HLS20}.

We filter out graphs that became empty as a result of applying these rules.
These were 41 DIMACS, 33 SNAP and 12 PACE instances.
Furthermore, we discard graphs which after applying these rules still had more than 50,000 vertices.
These were 12 DIMACS and 3 SNAP instances.
Because the PACE instances may also contain instances from the other two datasets, we have tested the graphs for isomorphism.
We have found one PACE instance to be isomorphic to a SNAP instance (p2p-Gnutella09), which was already excluded, because it was shrunk to an empty graph.

In total, we are left with 31 DIMACS, 16 SNAP and 88 PACE graphs with at most 50,000 vertices---all of these graphs are irreducible with respect to the forward rules.
When referring to the datasets DIMACS, SNAP and PACE we will be referring to these kernelized and filtered instances.
See \cref{chap:tables} for tables with basic properties of these graphs.

\subparagraph*{Implementation.} 
The major parts of our implementation are written in C++11 and compiled using version 7.5 of g++ using the -O2 optimization flag.
Smaller parts, such as scripts for visualization or automation were written in Python 3.6 or Bash.
We provide the source code for our implementation at 
\url{https://git.tu-berlin.de/afigiel/undo-vc-drr}. 
This implementation contains our \findMeth{} and \infDef{} method, together with the two small variations \findRed, and \localInfDef. 
\findMeth{} uses the local isomorphism test which we describe in \cref{ssec:find}, and output a description of the graph modification in addition to the found sequences.

We also provide a \VC{} solver implementation with all our forward rules implemented, and some new data reduction rules which are explained later in this section.
The solver is based on the branch-and-reduce paradigm and is very similar to the \VC{} solver by \citet{AI16}.
Moreover, we provide a lifting algorithm that can transform solutions for the kernelized instances into solution for the original instances.
We also provide a Python script that is used to visualize the graph modifications of the sequences of forward and backward rules that are output by our methods.
A sample visualization can be found in \cref{chap:tables}.
However, our focus in this section is on the \findMeth{} and \infDef{} method.

\subparagraph*{Methodology.}
We have implemented our \findMeth{} and \infDef{} methods together with their two variations: \findRed{} and \localInfDef.
Almost all forward and backward rules presented in \cref{chap:vc_rrs} are used by these methods, see \cref{tab:forward_glossary} for an overview.

All rules increase or decrease~$k$, but do not need to know~$k$ in advance.
This allows us to run \findMeth{} and \infDef{} on all graphs, without having to specify a value for~$k$.
Instead, we set~$k=0$ for all instances.
In the final instance~$(G',k')$ computed from~$(G,k=0)$ we will have~$\tau(G') - k' = \tau(G)$, where $\tau(G)$ denotes the vertex cover number of $G$.
For graphs~$G'$ which become empty, $-k'$ is the vertex cover number of the original graph~$G$.

We set a maximum recursion limit for \findMeth{} such that only sequences of at most two or three rules are found.
We will refer to \texttt{FAR2} and \texttt{FAR3} as the \findRed{} method which only searches for sequences of at most two or three forward and backward rules, respectively.

We used inflation ratios~$\alpha$ equal to 10, 20 and 50 percent, and we will refer to the different \infDef{} configurations as \texttt{ID10}, \texttt{ID20}, and \texttt{ID50}, respectively.
We also tested our \localInfDef{} method with~$\alpha = 20\%$, which we have found to work best in preliminary experiments, which we will refer to as \texttt{LID20}.

All these configurations were tested on the three datasets with a maximum running time of one hour.

\subsection{Results}\label{ssec:exp-results}

{
\subparagraph*{Confluence.}
Confluence can be proved using for example conflict pair analysis \cite{Ehrig14}, which is not straightforward and does not work for all types of data reduction rules.
However, disproving confluence is potentially much easier, as it suffices to find one example where applying the rules in different order yields different instances.

For each pair of forward rules in \cref{tab:forward_glossary}, we tested on the set of all graphs with at most 9 vertices\footnote{We obtained these graphs from Brendan McKay's website \url{https://users.cecs.anu.edu.au/~bdm/data/graphs.html}} whether we can obtain irreducible, but non-isomorphic graphs by randomly applying the two rules to the same graph.
More precisely, we test whether a set~$\mathcal{R} = \{R_a, R_b, \text{Deg0}\}$ is confluent for two forward rules~$R_a$ and~$R_b$.
We include the Degree-0 Rule in these sets, because a difference in the number of isolated vertices is only a minor detail which we do not wish to take into account. 
We note that, the inclusion of the Degree-0 Rule in these sets was never the reason that a set was not confluent.

Our results are summarized in \cref{fig:confluence_matrix}.
\begin{figure}[t!]
	\centering
	\includegraphics[width=0.49\textwidth]{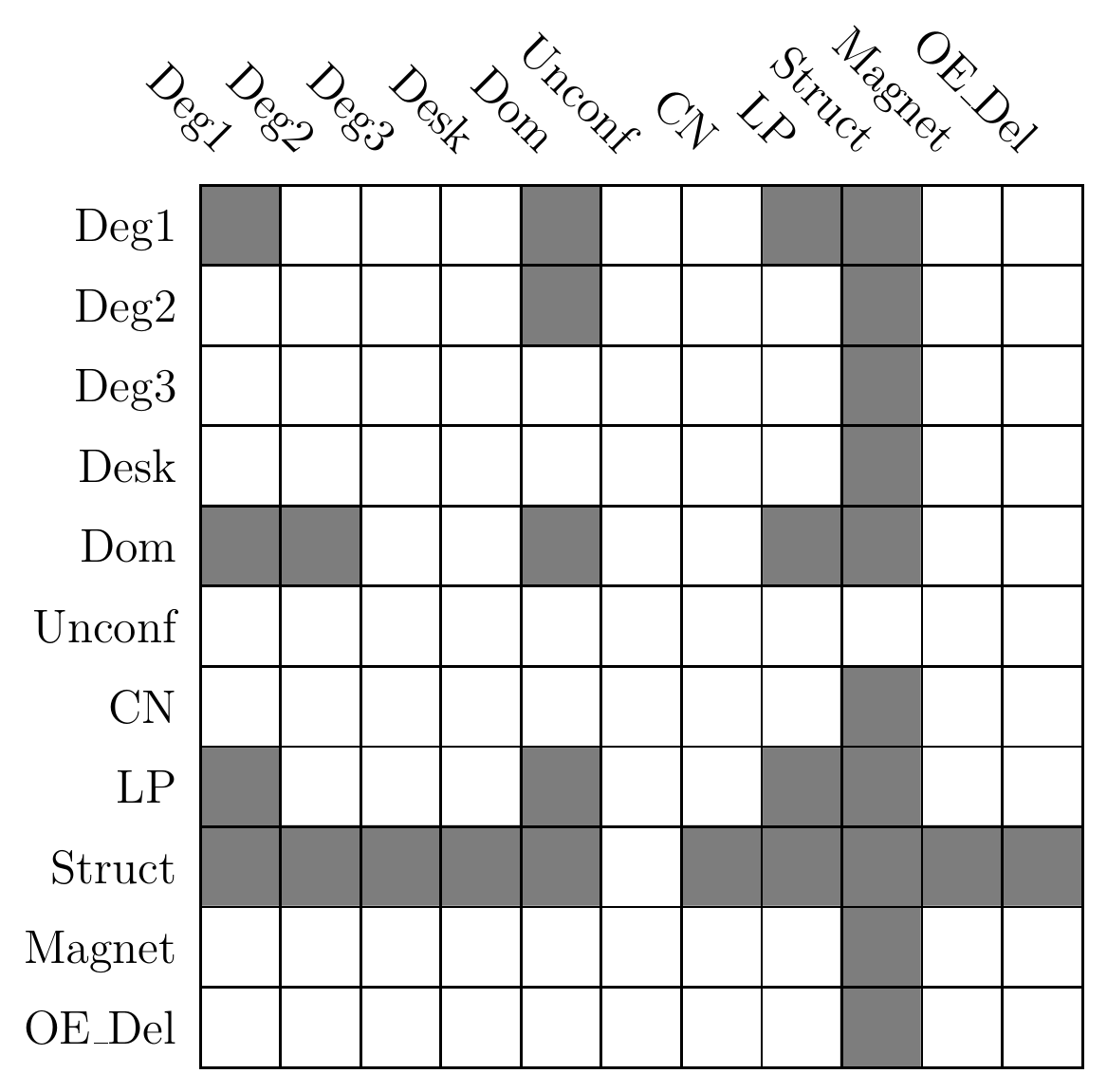}
	\caption{A matrix depicting which pairs of forward rules we have found to be non-confluent. A cell corresponding to a row rule~$R_a$ and a column rule~$R_b$ is colored white if the set~$\mathcal{R} = \{R_a, R_b, \text{Deg0}\}$ is not confluent. The Degree-0 rule is always included to not take into account differences in number of isolated vertices left after applying the rules. Gray cells correspond to sets which may be confluent, meaning that no counter-example was found for them.}
	\label{fig:confluence_matrix}
\end{figure}
The figure clearly shows that most pairs of forward rules are not confluent.
This means that the relative order of these rules may affect the final instance.
}

We demonstrate how much the \findRed{} and \infDef{} methods were able to shrink the irreducible DIMACS, SNAP, and PACE graphs, which were obtained by exhaustively applying a set of forward rules.
The results are summarized in \cref{fig:instance_shrinkage,tab:instance_shrinkage}.
For an example of a sequence of data reduction rules found with the \findMeth{} see \cref{fig:found-sequence-ex} in the appendix.
\begin{figure}[t!]
	\centering
	\includegraphics{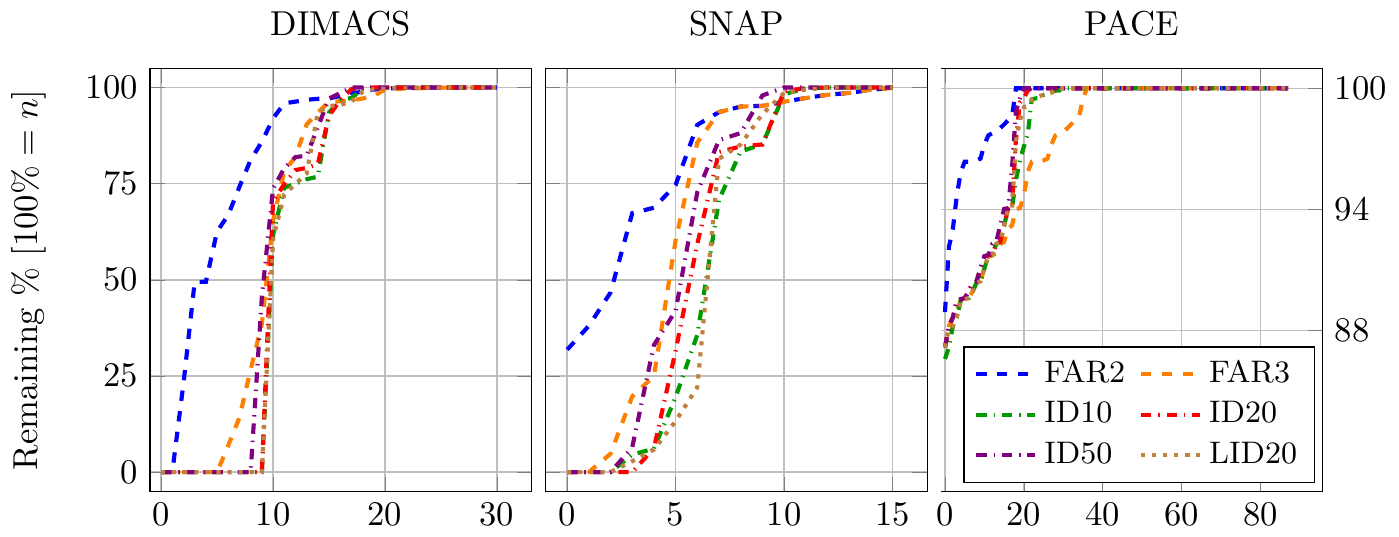}
	\caption{
		Cactus plots depicting how many irreducible instances were shrunk to a given fraction of their size (measured in vertices). 
		The order of the instances is chosen for each configuration such that the size fractions of the instances are increasing. 
		Note that the y-axis for DIMACS and SNAP start at zero (i.\,e.\ instances reduced to the empty graph); the y-axis for PACE does \emph{not}. 
	}
	\label{fig:instance_shrinkage}
\end{figure}

\begin{table}[t!]
	\caption{Summery of the average relative size (in terms of number of vertices) achieved by each of the configurations on the three datasets. The best value per dataset is given in bold.}
	\label{tab:instance_shrinkage}
	\centering
	\begin{tabularx}{0.9\textwidth}{rXXXXXX}
		\toprule
				& \texttt{FAR2}		& \texttt{FAR3}		& \texttt{ID10}		& \texttt{ID20}		& \texttt{ID50}		& \texttt{LID20} \\
		\midrule
		DIMACS 	& 82.6\% 			& 67.0\% 			& \textbf{62.9}\%	& 63.6\% 			& 66.1\% 			& 63.4\% \\
		SNAP 	& 80.6\% 			& 66.8\% 			& \textbf{56.4}\%	& 59.3\%			& 64.1\% 			& 56.4\% \\
		PACE 	& 99.2\% 			& \textbf{97.4}\%	& 97.8\% 			& 98.1\% 			& 98.1\% 			& 98.0\% \\
		\bottomrule
	\end{tabularx}
\end{table}

Notably, ten DIMACS and four SNAP instances were reduced to an empty graph by the \infDef{} methods.
Five further DIMACS graphs shrank to around 80\% of their size, and half of the DIMACS graphs did not really shrink at all.
We conclude that the \infDef{} approach seems to either work really well or nearly not at all for a given instance.

On average the \texttt{ID10} configuration produced the smallest irreducible instances, see \cref{tab:instance_shrinkage}.
From the \texttt{FAR2} configuration it can be seen that already applying only two forward/backward rules in a sequence can considerably reduce the size of some graphs.
In this case, the first rule is always a backward rule, and the second always a forward rule.
However, using up to three rules in a sequence gave significantly better results.

For \infDef{}, we see that small inflation ratios~$\alpha$ around 10\% perform best on average.
Increasing the inflation ratio leads to slightly worse results, especially for the SNAP instances.

Next, in \cref{fig:compression_vs_degree}, we see that the \texttt{ID10} method was able to shrink graphs to empty graphs mostly for graphs with the lowest average degree (8--14), with the exception of one instance with an average degree of around 45.
\begin{figure}[t!]
	\includegraphics{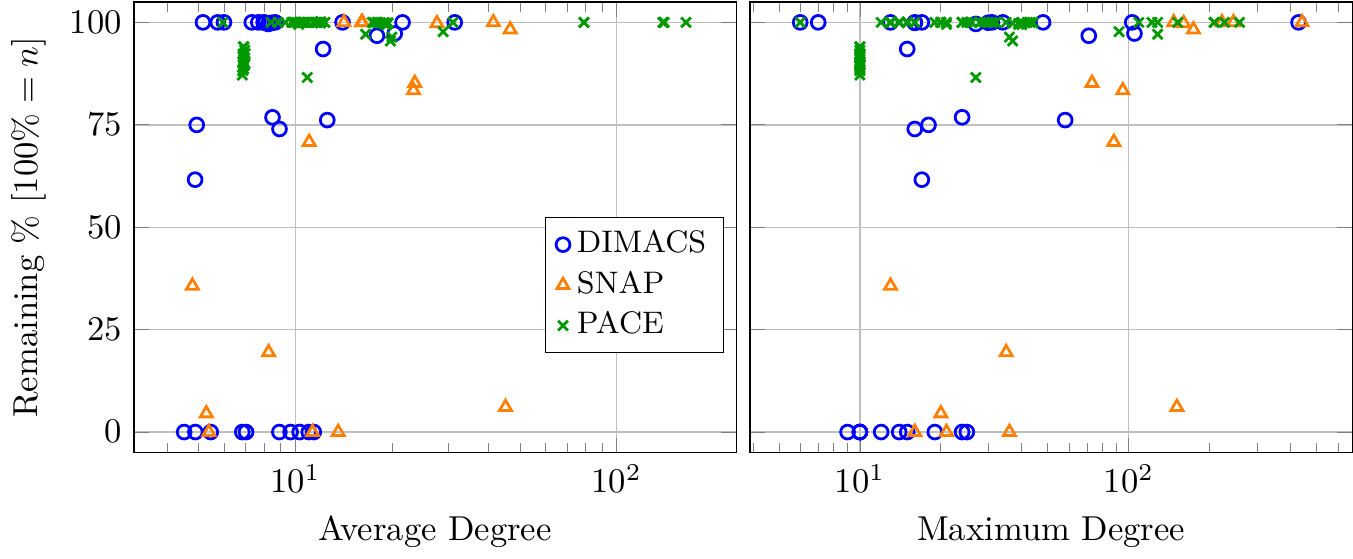}
	\caption{Scatter plot relating the relative size (measured in vertices) of the shrunk instances to the average and maximum degree of these graphs for the \texttt{ID10} configuration. Note that a logarithmic scale is used for the x-axis.}
	\label{fig:compression_vs_degree}
\end{figure}
However, a large number of graphs with an average degree of 8--14 were not shrunk considerably.
The other configurations, namely \texttt{ID20}, \texttt{ID50}, and \texttt{FAR3} exhibit the same behavior.
Similar behavior is also observed by replacing the average degree with the maximum degree.
For the most part, only graphs with a relatively small maximum degree were able to be shrunk considerably.
We conclude that our approach is most viable on sparse irreducible graphs.

In \cref{fig:shrinkage_over_time}, we show how the graph size changes over time with \findRed{} and \infDef{} on two example graphs.
\begin{figure}[t!]
	\centering
	\includegraphics{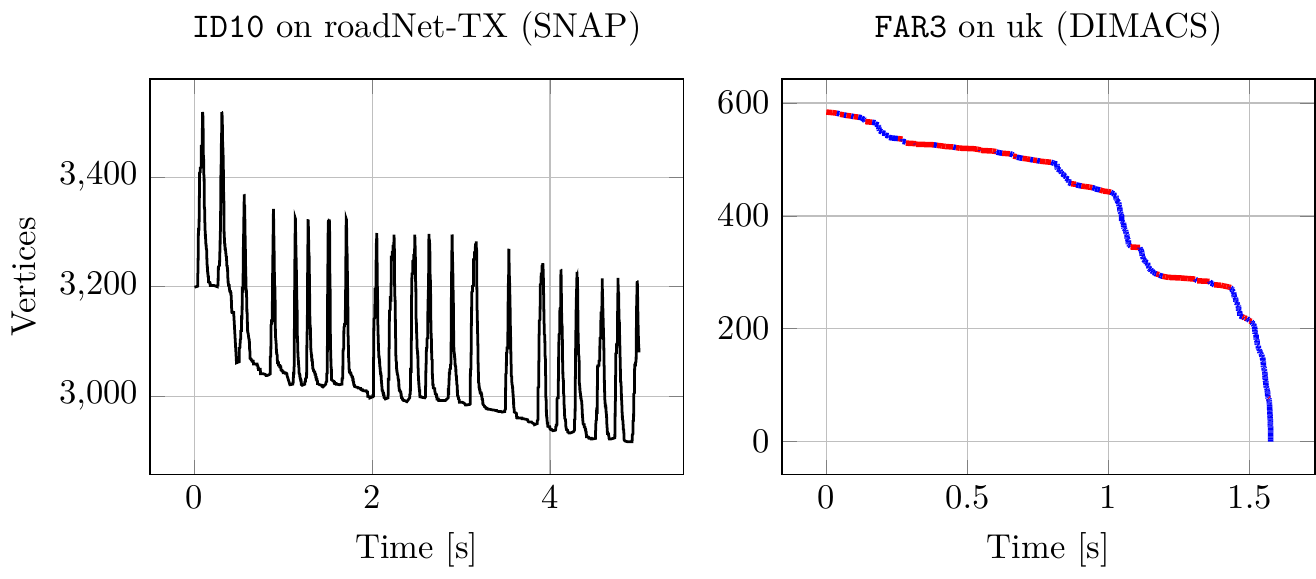}
	\caption{
		The shrinkage of two instances by ID10 and FAR3 configurations on already kernelized instances. 
		The~$x$ axis is time, and~$y$ is number of vertices. 
		In the right figure red segments mean that the graph was shrunk by using a sequence of at least two forward and backward rules, whereas the blue segments indicate the use of forward rules only.}
	\label{fig:shrinkage_over_time}
\end{figure}
It can be clearly observed how \texttt{ID10} repeatedly increases the number of vertices, which is the inflation phase, and then subsequently reduces it, which is the deflation phase.
A slow downward trend of the number of vertices can be observed.

For the \texttt{FAR3} configuration it can be seen that there are phases in which only forward rules have to be used to shrink the graph, and phases where longer sequences of forward and backward rules are needed.
Sometimes a sudden large decrease in the number of vertices using only forward rules can be observed, which one may think of as a cascading effect.
The graph only had to be changed by a small amount, triggering a cascade of forward rules.

\section{Conclusion}

Our work showed the large potential in the general idea of undoing data reduction rules to further shrink instances that are irreducible with respect to these rules.
While the results for some instances are very promising, our experiments also revealed that other instances resist our attempts of shrinking them through preprocessing. 
From a theory point of view this is no surprise, as we deal with NP-hard problems after all.
However, there is a lot of work that still can be done in this direction. 
Similar to the branch\&bound approach, many clever heuristic tricks will be needed to find solutions in the vast search space.
Such heuristics might, for example, employ machine learning to guide the search.
As mentioned before, our approach is not limited to \VC{}.
Looking at other problems is future work though.
A framework for applying our approach on graph problems could be another next step.
Also, our approach should be easily parallelizable.

Besides all these practical questions, there are also clear theoretical challenges:
For example, for a given set of data reduction rules is there \emph{always} (for each possible instance) a sequence of backwards and forward rules to obtain an equivalent instance of constant size?
Note that this would not contradict the NP-hardness of the problems: 
Such sequences are probably hard to find and could even be of exponential length.

\bibliographystyle{plainnat} 
\bibliography{extracted}

\newpage

\appendix

\section{Data Set \& Illustration}\label{chap:tables}

\pgfplotstableread[col sep = comma]{material/dimacs_big_table.csv}\data %
\begin{table}[h!]
    \caption{\normalfont 
		DIMACS instance properties. 
		$n,m$ - the number of vertices and edges in the original instance, 
		$n'_{\texttt{AIR}}$ - number of vertices after applying reductions used by \citet{AI16}, 
		$n',m'$ - number of vertices and edges after applying forward rules.
		$n''_\texttt{ID10},m''_\texttt{ID10}$ - number of vertices and edges in the smaller irreducible instance found by $\texttt{ID10}$. 
		}
	\small
\pgfplotstabletypeset[columns={instance,n,m,nwata,nstandard,mstandard,nid10,mid10},%
	columns/instance/.style={string type,column name=Graph,column type = {r}},
	columns/n/.style={column name=$n$,precision=1,column type = {r}},
	columns/m/.style={column name=$m$,precision=1,column type = {r}},
	columns/nwata/.style={column name=$n'_{\texttt{AIR}}$,precision=1,column type = {r}},
	columns/nstandard/.style={column name=$n'$,precision=1,column type = {r}},
	columns/mstandard/.style={column name=$m'$,precision=1,column type = {r}},
	columns/nid10/.style={column name=$n''_{\texttt{ID10}}$,precision=1,column type = {r}},
	columns/mid10/.style={column name=$m''_{\texttt{ID10}}$,precision=1,column type = {r}},
	every head row/.style ={before row=\toprule, after row=\midrule},
    every last row/.style ={after row=\bottomrule}]{\data}
\end{table}

\pgfplotstableread[col sep = comma]{material/snap_big_table.csv}\data %
\begin{table}[t]
    \caption{\normalfont SNAP instance properties. 
		$n,m$ - the number of vertices and edges in the original instance, 
		$n'_{\texttt{AIR}}$ - number of vertices after applying reductions used by \citet{AI16}, 
		$n',m'$ - number of vertices and edges after applying forward rules.
		$n''_\texttt{ID10},m''_\texttt{ID10}$ - number of vertices and edges in the smaller irreducible instance found by $\texttt{ID10}$. 
		On one graph we were unable to run the program by \citet{AI16}. The number of vertices was marked with -1 in this one case.
    }
	\small
\pgfplotstabletypeset[columns={instance,n,m,nwata,nstandard,mstandard,nid10,mid10},%
	columns/instance/.style={string type,column name=Graph,column type = {r}},
	columns/n/.style={column name=$n$,precision=1,column type = {r}},
	columns/m/.style={column name=$m$,precision=1,column type = {r}},
	columns/nwata/.style={column name=$n'_{\texttt{AIR}}$,precision=1,column type = {r}},
	columns/nstandard/.style={column name=$n'$,precision=1,column type = {r}},
	columns/mstandard/.style={column name=$m'$,precision=1,column type = {r}},
	columns/nid10/.style={column name=$n''_{\texttt{ID10}}$,precision=1,column type = {r}},
	columns/mid10/.style={column name=$m''_{\texttt{ID10}}$,precision=1,column type = {r}},
	every head row/.style ={before row=\toprule, after row=\midrule},
    every last row/.style ={after row=\bottomrule}]{\data}
\end{table}

\pgfplotstableread[col sep = comma]{material/pace_big_table_short.csv}\data %
\begin{table}[t]
    \caption{\normalfont PACE instance properties. 
		$n,m$ - the number of vertices and edges in the original instance, 
		$n'_{\texttt{AIR}}$ - number of vertices after applying reductions used by \citet{AI16}, 
		$n',m'$ - number of vertices and edges after applying forward rules.
		$n''_\texttt{ID10},m''_\texttt{ID10}$ - number of vertices and edges in the smaller irreducible instance found by $\texttt{ID10}$. 
    Only a third of the dataset is included.}
	\small
\pgfplotstabletypeset[columns={instance,n,m,nwata,nstandard,mstandard,nid10,mid10},%
	columns/instance/.style={string type,column name=Graph,column type = {r}},
	columns/n/.style={column name=$n$,precision=1,column type = {r}},
	columns/m/.style={column name=$m$,precision=1,column type = {r}},
	columns/nwata/.style={column name=$n'_{\texttt{AIR}}$,precision=1,column type = {r}},
	columns/nstandard/.style={column name=$n'$,precision=1,column type = {r}},
	columns/mstandard/.style={column name=$m'$,precision=1,column type = {r}},
	columns/nid10/.style={column name=$n''_{\texttt{ID10}}$,precision=1,column type = {r}},
	columns/mid10/.style={column name=$m''_{\texttt{ID10}}$,precision=1,column type = {r}},
	every head row/.style ={before row=\toprule, after row=\midrule},
    every last row/.style ={after row=\bottomrule}]{\data}
\end{table}

\begin{figure}[t!]
	\centering
	\includegraphics[scale=0.4]{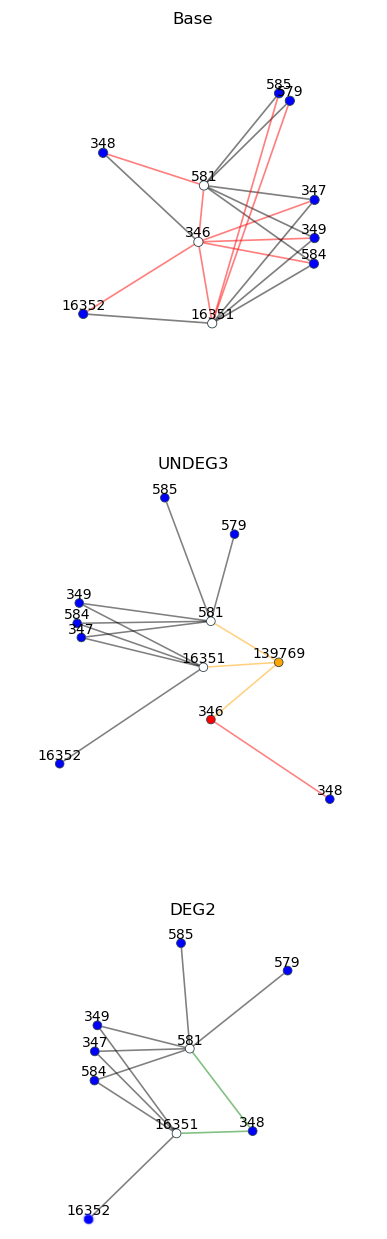}
	\caption{A visualization of a graph modification corresponding to the sequence (Undeg3,Deg2) of forward and backward rules, which in this case shrink a graph. The visualization is automatically generated from a log file of our \findMeth{} Method. Starting with the graph at the top, the smaller graph on the bottom is obtained. Over each graph except the first, the name of the rule which was applied to the previous graph is written. The blue vertices are the boundary vertices of the graph modification. The red edges and vertices which are deleted in the next layer, whereas green vertices and edges were new compared to the previous layer. A yellow vertex or edge is new, and will be deleted in the next layer.}
	\label{fig:found-sequence-ex}
\end{figure}

\clearpage

\section{Compendium of Data Reduction Rules for \VC{}}\label{chap:vc_rrs}

In this section, we aim to provide an overview of data reduction rules (forward rules) for \VC{} in \cref{sect:forward_rules} and show ways how some of them may be undone and formulate backward rules based on them in \cref{sect:backward_rules}.

Many kernelization results for \VC{} are already summarized in a recent survey paper by \citet{FJK18}.
They focus mainly on data reduction rules for kernelizing or removing low-degree vertices.
However, some data reduction rules that are used in practical implementations are not covered in their survey.
This includes, for example, the Unconfined Rule~\cite{XN13} (\cref{rr:unconfined}) used by \citet{AI16} in their \VC{} solver implementation.
The reduction rule suite of \citet{AI16} was later used by, for example, \citet{HLS20} in their winning \VC{} implementation for the 2019 PACE Challenge~\cite{PACE}.

Another interesting rule is, for example, the Struction Rule~\cite{EHW84} (\cref{rr:struction}).
The rule has received repeated interest over the years  \cite{GLS21,AHL03,Loz17,CKX10}), however only few practical experiments have been made with it.
These experiments were mostly performed on random graphs and in combination with only few other data reduction rules \cite{EHW84,AHL03}.
However, \citet{GLS21} developed a modification of the Struction Rule for the \textsc{Maximum Weight Independent Set} problem, where the rule showed promising results, in that many graphs were reduced to an empty graph because of it.
For this reason, we revisit the original Struction Rule.

In our overview we also include other, perhaps promising rules such as the Magnet and Edge Deletion rules (\cref{rr:magnet,rr:oe_delete}).
Additionally, we show that the 2-Clique Neighborhood Rule by \citet{FJK18} can also be applied to vertices of arbitrary degree in polynomial-time and we provide a generalization of the Unconfined Rule (\cref{rr:unconfinedpp}).

We remark that all listed rules with references are safe even if not explicitly stated.
Additionally, a small technicality is that the parameter~$k$ in \VC{} is defined as a natural number.
However, the application of some rules may produce negative values for~$k$.
We will also allow~$k$ to become negative, because if it does then it is clearly a no-instance.

\subsection{Forward rules}\label{sect:forward_rules}
We begin our list of forward rules (data reduction rules) starting with perhaps some of the most simple rules.
These reduction rules are typically found in many solvers, due to their simplicity and efficient implementation.

The most simple rule is the Degree-0 Rule, which simply removes isolated vertices from the graph.
\begin{frule}[Degree-0 \cite{BG93}]
\label{rr:deg0}
	Let~$v$ be a degree-0 vertex. Then remove the vertex~$v$ from the graph.
\end{frule}
Clearly, no minimum vertex cover can contain isolated vertices, therefore it is safe to delete them.

We proceed with the Degree-1 Rule, which as the name suggests is used to remove degree-1 vertices from the graph.

\begin{frule}[Degree-1 \cite{BFR98}]
\label{rr:deg1}
	Let~$v$ be a degree-1 vertex and~$u$ its unique neighbor. Then delete~$u,v$ from~$G$ and decrease~$k$ by one.
\begin{figure}[t]
	\centering
	\begin{subfigure}[c]{.45\textwidth}
		\centering
		\includegraphics{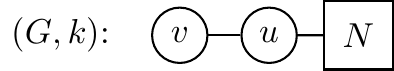}
	\end{subfigure}
	\begin{subfigure}[c]{.45\textwidth}
		\centering
		\includegraphics{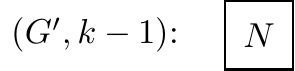}
	\end{subfigure}
\caption{An illustraion of \cref{rr:deg1}. Note that we use square nodes to represent whole sets of vertices.}
\label{fig:rr_deg1}
\end{figure}
\end{frule}
For an illustration of \cref{rr:deg1} see \cref{fig:rr_deg1}.

The next rule is the Degree-2 Folding Rule, which deals with degree-2 vertices whose neighbors are not adjacent.
This rule allows to \emph{merge} the degree-2 vertex with its two neighbors.
By merging a set of vertices~$S$ we mean creating a new vertex~$v'$, adding all edges between~$c$ and~$N_G(S)$, and then deleting~$S$ from the graph.
For an illustration of the rule see \cref{fig:rr_deg2}.

\begin{frule}[Degree-2 Folding \cite{SF99}]
\label{rr:deg2}
	Let~$v$ be a degree-2 vertex and~$a,b$ its two neighbors. If~$a,b$ are nonadjacent, then merge~$v,a,b$ into a single vertex and decrease~$k$ by one.
\begin{figure}[t]
	\centering
	\begin{subfigure}[c]{.45\textwidth}
		\centering
		\includegraphics{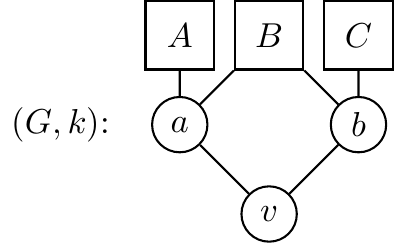}
	\end{subfigure}
	\begin{subfigure}[c]{.45\textwidth}
		\centering
		\includegraphics{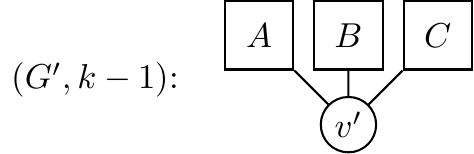}
	\end{subfigure}
\caption{An illustration for the Degree-2 folding rule (\cref{rr:deg2}). The vertices~$v,a,b$ are merged into the single vertex~$v'$.}
\label{fig:rr_deg2}
\end{figure}
\end{frule}

Similarly to the Degree-2 Folding Rule there exists a rule for handling degree-3 vertices.
However, the rule is only applicable to vertices whose neighborhood is an independent set.
For an example see \cref{fig:rr_deg3}.

\begin{frule}[Degree-3 Independent Set \cite{SF99}]
\label{rr:deg3}
	Let~$v$ be a degree-3 vertex and~$N(v) = \{a, b, c\}$ an independent set.
	Then
	\begin{itemize}
	\item remove~$v$,
	\item add the edges~$\{a, b\}, \{b, c\}$,
	\item add the edges from~$\{\{a, x\} \mid x \in N_G(b)\} \cup \{\{b, x\} \mid x \in N_G(c)\} \cup \{\{c, x\} \mid x \in N_G(a)\}$ if they do not already exist.
	\end{itemize}
\begin{figure}[t]
	\centering
	\begin{subfigure}[c]{.45\textwidth}
		\centering
		\includegraphics{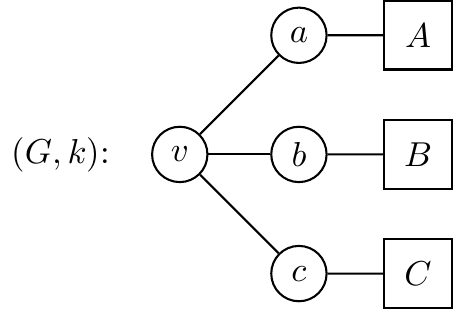}
	\end{subfigure}
	\begin{subfigure}[c]{.45\textwidth}
		\centering
		\includegraphics{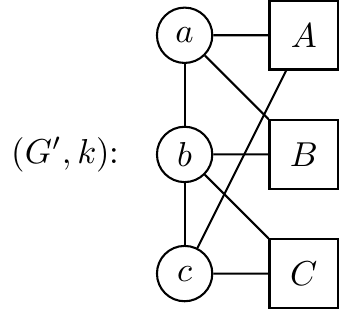}
	\end{subfigure}
\caption{An illustration for the Degree-3 Independent Set Rule (\cref{rr:deg3}).}
\label{fig:rr_deg3}
\end{figure}
\end{frule}
We note that in \cref{rr:deg3} the six possible permutations of~$\{a, b, c\}$ can result in different outcomes for the rule, meaning that the resulting graphs may not be isomorphic.
However, the number of newly inserted edges is unaffected by the permutation.

High-degree vertices may also be removed thanks to the following rule by \citet{BG93}.
This rule is also the last degree-based rule in this overview.
\begin{frule}[Degree~$> k$ \cite{BG93}]
\label{rr:degk}
	Let~$v$ be a vertex of degree greater than~$k$. Then delete~$v$ and decrease~$k$ by one.
\end{frule}
Applying \cref{rr:deg0} and \cref{rr:degk} rules exhaustively results in the known Buss kernel \cite{BG93}.
One can show that after applying the rules if the graph has more than~$k^2+k$ vertices or~$k^2$ edges, then the instance is a no-instance \cite{BG93}.

What will now follow is a series of more general data reduction rules, meaning that they are not restricted to vertices of specific degrees.
Perhaps the simplest of these is the Domination Rule\footnote{Also known as neighborhood or dominance reduction \cite{AI16,Loz17}.}.

We say a vertex~$u$ dominates another vertex~$v$ if~$N[v] \subseteq N[u]$.
For an illustration of the rule see \cref{fig:rr_dom}.

\begin{frule}[Domination \cite{SF99}]
\label{rr:domination}
	Let~$u,v$ be two adjacent vertices such that~$u$ dominates~$v$, i.\,e., $N[v] \subseteq N[u]$. Then delete~$u$ and decrease~$k$ by one.
	\begin{figure}[t]
		\centering
		\begin{subfigure}[c]{.45\textwidth}
			\centering
			\includegraphics{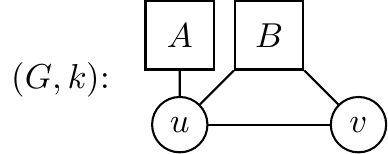}
		\end{subfigure}
		\begin{subfigure}[c]{.45\textwidth}
			\centering
			\includegraphics{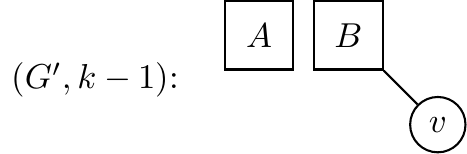}
		\end{subfigure}
	\caption{An illustration for \cref{rr:domination}. The vertex~$u$ dominates~$v$ in~$G$.}
	\label{fig:rr_dom}
	\end{figure}
\end{frule}
The Domination Rule can be thought of a generalization of the Triangle Rule we have seen in \cref{sec:intro}.

The remaining reduction rules are more complex, both in terms of their formulation and difficulty of implementation.
The Unconfined Rule has perhaps the most unique formulation out of all rules presented here, as it is directly an algorithm.
This rule can be thought of as a generalization of the Domination Rule, where an additional step is taken if a vertex~$v$ nearly dominates a vertex~$u$.

\begin{frule}[Unconfined \cite{XN13}, reformulation by \citet{AI16}]
	\label{rr:unconfined}
	Let~$v$ be a vertex for which \cref{alg:unconf} returns yes. 
	\begin{algorithm}
		\caption{Unconfined Check.}
		\label{alg:unconf}
		Let~$S = \{v\}$\;
		Find~$u \in N(S)$ such that~$|N(u) \cap S| = 1$ and~$|N(u) \setminus N[S]|$ is minimized.\;
		\lIf{there is no such vertex}{\textbf{return no}}
		\lIf{$N(u) \setminus N[S] = \emptyset$}{\textbf{return yes}}
		\lIf{$N(u) \setminus N[S] = \{w\}$}{add~$w$ to~$S$ and goto line 2}
	\end{algorithm}
	Then delete~$v$ and decrease~$k$ by one.
	\begin{figure}[t]
		\centering
		\includegraphics{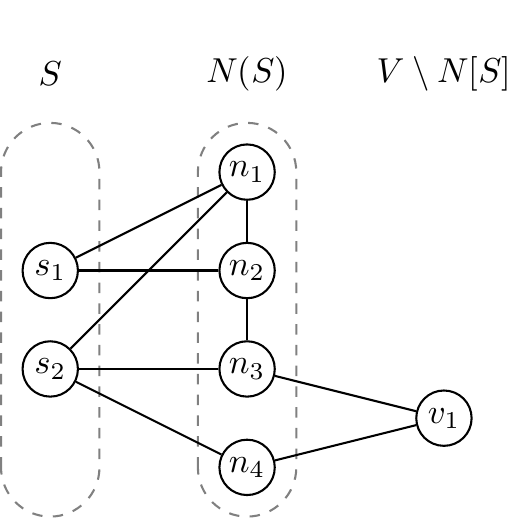}
		\caption{An example for performing a single iteration of \cref{rr:unconfined}. Assume the set~$S$ was initially equal to~$\{s_1\}$ and we want to see if~$s_1$ is unconfined. Given the depicted situation, where one iteration was already performed (by adding~$s_2$ to~$S$),  one can directly conclude that~$s_1$ is unconfined due to~$n_2$ having a single neighbor in~$S$ and no neighbors outside of~$N[S]$. If~$n_2$ did not exist, then one should instead add~$v_1$ to~$S$. Note that the vertices in~$V \setminus N[S]$ may also have neighbors outside of~$N[S]$, which have no impact on the outcome of a single iteration of the unconfined rule.}
		\label{fig:rr_unconf}
	\end{figure}
\end{frule}
Vertices for which the algorithm in \cref{rr:unconfined} returns yes are called \emph{unconfined}.
For an example of a case where the Unconfined Rule can be applied see \cref{fig:rr_unconf}.

\citet{AI16} implemented a generalization of \cref{rr:unconfined}, which in their source code is called the Diamond Reduction (which searches for a~$K_{2,2}$), but is not discussed in their paper directly.

We suggest a further generalization: the Unconfined-$\kappa$ Rule.
This rule takes as additional input a small positive constant number~$\kappa$ which we use to guarantee polynomial running time.
Setting~$\kappa$ to a large constant may allow the rule to be executed in more cases, however potentially at the cost of a higher running time.
\begin{frule}[Unconfined-$\kappa$]
\label{rr:unconfinedpp}
	Let~$v$ be a vertex for which \cref{alg:unconf-gen} returns yes. 
	\begin{algorithm}[t!]
		\caption{Unconfined-$\kappa$ Check.}
		\label{alg:unconf-gen}
		Let~$S = \{v\}$\;
		Find~$X \subseteq N(S)$ with~$0 < |X| \leq \kappa$ such that~$|N(X) \setminus N[S]| \leq 1$ and for~$Y = N(X) \cap S$ it holds that~$|Y| = |X|$ and~$G[X,Y]$ is a complete bipartite graph.\;
		\lIf{there is no such set~$X$}{\textbf{return no}}
		\lIf{$N(X) \setminus N[S] = \emptyset$}{\textbf{return yes}}
		\lIf{$N(X) \setminus N[S] = \{w\}$}{add~$w$ to~$S$ and goto line 2}
	\end{algorithm}
	Then delete~$v$ and decrease~$k$ by one.
\end{frule}
The safeness proof of \cref{rr:unconfinedpp} is analogous to that of \cref{rr:unconfined} by \citet{XN13}.
We start with the strong (and possibly incorrect) assumption that no minimum vertex cover of~$G$ contains~$v$.
The procedure behind \cref{rr:unconfinedpp} tries to derive a contradiction from this sole assumption in order to prove that there exists a minimum vertex cover which contains~$v$.
The algorithm maintains the invariant that no minimum vertex cover of~$G$ contains~$S$ and all minimum vertex covers of~$G$ contain~$N(S)$.
This rule essentially derives a proof by contradiction for the fact that the vertex~$v$ is contained in at least one minimum vertex cover of~$G$.

The following lemma is the main idea behind the Unconfined-$\kappa$ Rule.
We will denote by~$\vc(G)$ the set of all minimum vertex covers of~$G$.
\begin{lemma}
\label{lem:unconfinedpp}
	Given a graph~$G$ and a set~$S \subseteq V$ such that for all~$C \in \vc(G)$ it holds~$S \cap C = \emptyset$.
	Then for every~$X \subseteq N(S)$ and corresponding~$Y = N(X) \cap S$ such that~$|Y| = |X|$ and~$G[X,Y]$ is a complete bipartite graph, it must hold that for all~$C \in \vc(G)$~$N(X) \setminus N[S] \not \subseteq C$.
\end{lemma}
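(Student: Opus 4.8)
The plan is to prove the statement by a \emph{vertex-cover exchange argument} run through a proof by contradiction. Fix an arbitrary minimum vertex cover $C \in \vc(G)$ and suppose, for contradiction, that $N(X) \setminus N[S] \subseteq C$. From this single assumption I will build another vertex cover of the same size that intersects $S$; since it is then a minimum vertex cover meeting $S$, it contradicts the hypothesis that $S \cap C' = \emptyset$ for \emph{all} $C' \in \vc(G)$. As the fixed $C$ was arbitrary, this establishes $N(X) \setminus N[S] \not\subseteq C$ for every minimum vertex cover.

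First I would extract the two structural facts that follow from the hypothesis. Since no minimum vertex cover meets $S$, the set $S$ must be independent (an edge inside $S$ could not be covered), and moreover $N(S) \subseteq C$ for every $C \in \vc(G)$: each edge leaving $S$ is covered by its endpoint outside $S$. In particular $X \subseteq N(S) \subseteq C$, while $Y \subseteq S$ is disjoint from $C$, and $Y \neq \emptyset$ because $|Y| = |X| > 0$. I would also record that, reading ``$G[X,Y]$ is a complete bipartite graph'' as the induced subgraph on $X \cup Y$ being complete bipartite, the part $X$ is itself an \emph{independent} set; this is exactly what lets us delete all of $X$ simultaneously.

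The heart of the proof is the swap $C^\ast := (C \setminus X) \cup Y$. Because $X \subseteq C$, $Y$ is disjoint from $C$, and $|X| = |Y|$, we get $|C^\ast| = |C|$; so it suffices to show $C^\ast$ is a vertex cover, as then it is a minimum vertex cover with $Y \subseteq C^\ast \cap S \neq \emptyset$, the desired contradiction. To verify covering I would take an arbitrary edge $e$: if $e$ has no endpoint in $X$ it is still covered by $C \setminus X$. Otherwise $e = \{x,z\}$ with $x \in X$, and since $X$ is independent we have $z \notin X$; I then partition $N(X)$ along the disjoint decomposition $V = S \cup N(S) \cup (V \setminus N[S])$. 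If $z \in S$, then $z \in N(X) \cap S = Y \subseteq C^\ast$ (here the complete bipartiteness guarantees every $X$--$S$ edge lands in $Y$ and is recovered). If $z \in N(S)$, then $z \in C$ and $z \notin X$, so $z \in C \setminus X \subseteq C^\ast$. If $z \in V \setminus N[S]$, then $z \in N(X) \setminus N[S] \subseteq C$ by the contradiction hypothesis, and again $z \in C \setminus X \subseteq C^\ast$. Thus every edge is covered.

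I expect the main obstacle to be the covering verification itself, specifically choosing the exchange $(C\setminus X)\cup Y$ and seeing that the three-way split of $N(X)$ leaves no edge uncovered. The delicate point is that we remove \emph{all} of $X$ at once, so it is essential that $X$ be independent (otherwise an edge internal to $X$ would be exposed) and that the complete bipartite structure forces every edge from $X$ into $S$ to terminate in $Y$; with those in hand the argument parallels the safeness proof of the Unconfined Rule (\cref{rr:unconfined}) of \citet{XN13}, which is precisely the analogy the paper invokes for \cref{rr:unconfinedpp}.
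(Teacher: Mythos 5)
Your proof is correct and takes essentially the same route as the paper's: assume some $C \in \vc(G)$ with $N(X) \setminus N[S] \subseteq C$, deduce $N(S) \subseteq C$ (hence $X \subseteq C$) from $S \cap C = \emptyset$, and derive a contradiction via the exchange $C' = (C \setminus X) \cup Y$, which is a minimum vertex cover meeting $S$. The only difference is one of detail: the paper asserts without justification that $(C \setminus X) \cup Y$ is a vertex cover, whereas you verify it explicitly---including the observation that $X$ must be independent under the induced-subgraph reading of $G[X,Y]$, which is precisely what the paper's one-line assertion silently relies on.
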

\begin{proof}
	Let~$G, S, X, Y$ be as above.
	Assume that there exists a~$C \in vc(G)$ such that~$N(X) \setminus N[S] \subseteq C$.
	Due to~$S \cap C = \emptyset$ it holds~$N(S) \subseteq C$ and in particular~$X \subseteq C$.
	Then~$C' = (C \setminus X) \cup Y$ is a minimum vertex cover.
	However, $S \cap C' \neq \emptyset$, a contradiction to the definition of~$S$.
\end{proof}
We are now ready to prove the safeness of \cref{rr:unconfinedpp}.
\begin{lemma}
\cref{rr:unconfinedpp} is safe.
\end{lemma}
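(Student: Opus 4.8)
The plan is to reduce the safeness of \cref{rr:unconfinedpp} to the single statement that the deleted vertex~$v$ lies in at least one minimum vertex cover of~$G$, and then to derive that statement from \cref{lem:unconfinedpp} by a proof by contradiction that mirrors the run of \cref{alg:unconf-gen}.

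First I would dispatch the routine equivalence. Writing $G' = G - v$ for the graph obtained by deleting $v$ together with its incident edges, the direction ``$\Leftarrow$'' is immediate: any vertex cover of $G'$ of size at most $k-1$ becomes a vertex cover of $G$ of size at most $k$ by adding $v$, since $v$ covers exactly the deleted edges. For the converse it suffices to exhibit a minimum vertex cover of $G$ containing $v$: if $C^\ast \in \vc(G)$ with $v \in C^\ast$ and $\tau(G) \le k$, then $C^\ast \setminus \{v\}$ covers every edge of $G'$ and has size $\tau(G) - 1 \le k-1$. Hence the whole rule is safe once we know that $v \in C$ for some $C \in \vc(G)$.

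Second, I would prove that such a $C$ exists by contradiction. Assume instead that $v \notin C$ for every $C \in \vc(G)$, i.e.\ $\{v\} \cap C = \emptyset$ for all $C \in \vc(G)$. I would then track the set $S$ maintained by \cref{alg:unconf-gen} and establish the invariant that $S \cap C = \emptyset$ for all $C \in \vc(G)$ every time line~2 is reached. The base case $S = \{v\}$ is precisely the assumption. For the inductive step, whenever line~2 finds a valid set $X$ with $Y = N(X) \cap S$, the hypothesis of \cref{lem:unconfinedpp} is exactly the current invariant, so the lemma yields $N(X) \setminus N[S] \not\subseteq C$ for all $C \in \vc(G)$. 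If line~5 fires, then $N(X) \setminus N[S] = \{w\}$, whence $w \notin C$ for every $C \in \vc(G)$; adding $w$ to $S$ therefore preserves $S \cap C = \emptyset$. Since each such step enlarges $S$ by a vertex outside $N[S]$, the process terminates and the invariant holds throughout.

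Finally, the rule is applied only when \cref{alg:unconf-gen} returns yes, which occurs at line~4 with $N(X) \setminus N[S] = \emptyset$. At that moment the invariant holds, so \cref{lem:unconfinedpp} gives $\emptyset \not\subseteq C$ for all $C \in \vc(G)$; but $\emptyset \subseteq C$ holds trivially, so no $C \in \vc(G)$ can exist at all, contradicting the fact that every graph has a minimum vertex cover. This contradiction refutes the assumption, so $v$ lies in some minimum vertex cover and the rule is safe. I expect the main obstacle to be phrasing the invariant and its maintenance cleanly, and in particular pinning down that the $N(X) \setminus N[S] = \emptyset$ case of \cref{lem:unconfinedpp} is vacuously impossible (the empty set is contained in every cover), which is exactly what converts ``return yes'' into the desired contradiction.
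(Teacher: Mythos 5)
Your proposal is correct and follows essentially the same route as the paper's own proof: assume no minimum vertex cover contains $v$, maintain the invariant that $S$ is disjoint from every minimum vertex cover using \cref{lem:unconfinedpp} at each line-5 step, and derive the contradiction at line~4 from $\emptyset \subseteq C$ together with $\vc(G) \neq \emptyset$. The only difference is that you spell out the routine reduction of safeness to ``$v$ lies in some minimum vertex cover'' and note termination, both of which the paper leaves implicit.
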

\begin{proof}
We show that if the algorithm described in \cref{rr:unconfinedpp} returns yes, then the vertex~$v$ is contained in at least one minimum vertex cover of~$G$, which then immediately implies the lemma. 

Assume no minimum vertex cover of~$G$ contains~$v$.
The algorithm initially sets~$S = \{v\}$.

First, we show that the algorithm maintains the invariant that no vertex in~$S$ is contained in any minimum vertex cover of~$G$.
Initially this is true, based on the assumption that was made at the start.
A new vertex~$w$ may only be added to~$S$ in Line 5.
The sets~$X$ and~$Y$ found in Line 2, shortly before executing Line 5 satisfy the conditions of \cref{lem:unconfinedpp}.
The lemma, in turn, directly implies that no minimum vertex cover of~$G$ contains~$w$.
Therefore it is safe to add it to~$S$.

Next we show that if the algorithm returns yes, then there exists a minimum vertex cover of~$G$ that contains~$v$.
The algorithm returns yes only in Line 4.
The sets~$X$ and~$Y$ found in Line 2, shortly before executing Line 4 satisfy the conditions of \cref{lem:unconfinedpp}.
Due to~$N(X) \setminus N[S] = \emptyset$, this is a contradiction to \cref{lem:unconfinedpp}, because~$\emptyset \subseteq C$ for all~$C \in \vc(G)$ (note that~$\vc(G)$ is never empty).

Because all steps were correct, this means that the assumption that no minimum vertex cover of~$G$ contains~$v$ was incorrect.
Therefore there exists a minimum vertex cover of~$G$ that contains~$v$.
\end{proof}

The next rule is based on the concept of alternative sets by \citet{XN13}.
Two disjoint independent sets~$A,B \subseteq V$ with~$|A| = |B| > 0$ are called \emph{alternative} if there exists a minimum vertex cover~$C$ of~$G$ such that~$C \cap (A \cup B) =$~$A$ or~$B$. A chordless 4-cycle, is a induced cycle on four vertices.
\begin{frule}[Desk \cite{XN13}]
\label{rr:desk}
	Let~$u_1u_2u_3u_4$ be a chordless 4-cycle.
	Let~$A = \{u_1, u_3\}, B = \{u_2, u_4\}$.
	If~$N(A) \cap N(B) = \emptyset$ and~$|N(A) \setminus B|, |N(B) \setminus A| \leq 2$,
	then remove~$A$ and~$B$ from~$G$ and for every nonadjacent~$a \in N(A) \setminus B$ and~$b \in N(B) \setminus A$ add the edge~$\{a, b\}$.
	Finally, decrease~$k$ by two.
	\begin{figure}[t]
		\centering
		\begin{subfigure}[c]{.45\textwidth}
		\includegraphics{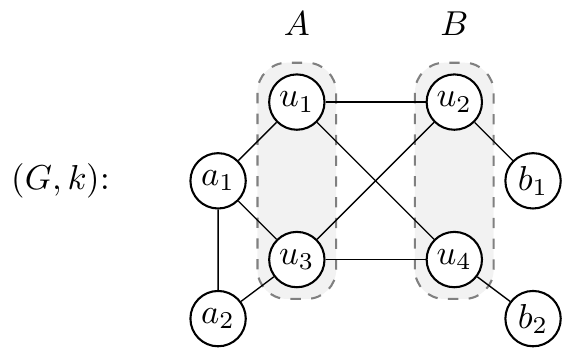}
		\end{subfigure}
		\begin{subfigure}[c]{.45\textwidth}
		\includegraphics{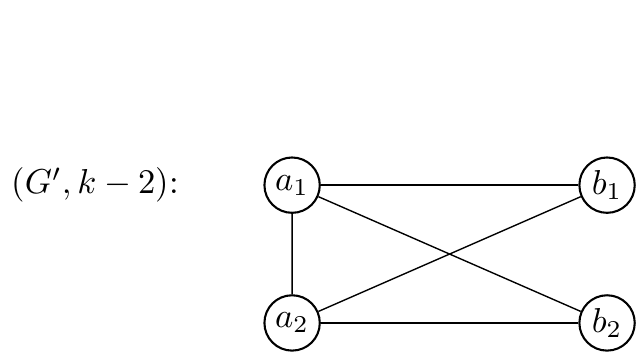}
		\end{subfigure}
		\caption{An example of the desk rule (\cref{rr:desk}).}
		\label{fig:rr_desk}
	\end{figure}
\end{frule}

The 2-Clique Neighborhood rule by \citet{FJK18} is applicable only to vertices with very dense neighborhoods.
Among others, it is required that the neighborhood can be partitioned into two cliques.

\begin{frule}[2-Clique Neighborhood \cite{FJK18}]
\label{rr:cn}
	Let~$v$ be a vertex such that there exists a partition~$(C_1, C_2)$ of~$N(v)$ with~$|C_1| \geq |C_2|$ and the following hold
	\begin{itemize}
	\item~$C_1$ and~$C_2$ are cliques in~$G$.
	\item Let~$M$ be the set of non-edges of~$G[N(v)]$. 
	For each~$c_1 \in C_1$, there is exactly one~$e \in M$ such that~$c_1 \in e$.
	\end{itemize}
	Then reduce~$(G, k)$ to~$(G', k-|C_2|)$, where~$G'$ is obtained from~$G$ by
	\begin{itemize}
	\item deleting~$v$ and~$C_2$, and
	\item for all~$\{c_1, c_2\} \in M$ with~$c_1 \in C_1, c_2 \in C_2$, add all missing edges between~$c_1$ and~$N_G(c_2)$.
	\end{itemize}
	\begin{figure}[t]
		\centering
		\begin{subfigure}[c]{.45\textwidth}
		\includegraphics{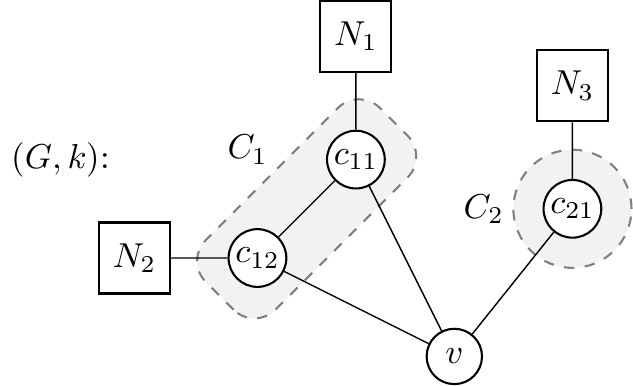}
		\end{subfigure}
		\begin{subfigure}[c]{.45\textwidth}
		\includegraphics{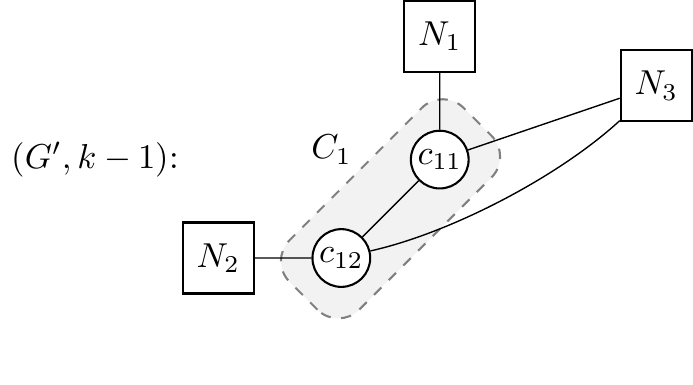}
		\end{subfigure}
		\caption{An example of the 2-clique neighborhood rule (\cref{rr:cn}).}
		\label{fig:rr_cn}
	\end{figure}
\end{frule}

Note that for \cref{rr:cn} it was not previously known in the literature, how to find the partitioning~$(C_1,C_2)$ of~$N(v)$ in polynomial-time if it exists.
\citet{FJK18} suggest applying the rule only to vertices~$v$ with~$\deg(v) \leq c$ where~$c$ is some constant.
\citet{AI16} instead use the Funnel Rule~\cite{XN13} which is a combination of repeated application of \cref{rr:domination} followed by a special case of \cref{rr:cn} where~$|C_2| = 1$.

We show that the partitioning needed for \cref{rr:cn} can indeed be found in polynomial-time if it exists.
\begin{lemma}
Given a vertex~$v \in V$, a partitioning of~$N(v)$ into two disjoint sets~$C_1, C_2$ fulfilling the requirements of \cref{rr:cn} can be found in polynomial-time, if it exists.
\end{lemma}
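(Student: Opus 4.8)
The plan is to reduce the problem to recognizing a very restricted structure in the complement of the neighborhood. Write $H := G[N(v)]$ and let $\overline{H}$ be its complement on the vertex set $N(v)$, so that the edge set of $\overline{H}$ is exactly the non-edge set $M$ of $H$. The first observation is that if $C_1$ and $C_2$ are both cliques in $H$, then no non-edge of $H$ can lie inside $C_1$ or inside $C_2$; hence every edge of $\overline{H}$ joins a vertex of $C_1$ to a vertex of $C_2$, i.e.\ $(C_1, C_2)$ is a bipartition of $\overline{H}$. The second requirement --- that each $c_1 \in C_1$ lies in exactly one non-edge --- says precisely that every vertex of $C_1$ has degree exactly $1$ in $\overline{H}$.

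Combining the two observations, I would show that a partition satisfying the clique and degree conditions exists if and only if $\overline{H}$ is a disjoint union of stars. Indeed, if such $(C_1,C_2)$ exist, then the edges of $\overline{H}$ send each (degree-$1$) vertex of $C_1$ to a single neighbor in $C_2$, while $C_2$ induces no edge of $\overline{H}$; the components are therefore stars centered at the $C_2$-vertices, together with possibly isolated $C_2$-vertices. Conversely, given that $\overline{H}$ is a disjoint union of stars, I construct a partition directly: place every isolated vertex and every star center into $C_2$, and every leaf into $C_1$; for a component that is a single edge, either endpoint may play the role of the center placed in $C_2$. One then checks immediately that $C_1$ and $C_2$ are cliques in $H$ and that each $C_1$-vertex lies in exactly one non-edge.

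From here the algorithm is routine: compute $\overline{H}$, find its connected components, and test each for being a star (a component on $c$ vertices is a star iff it has a vertex of degree $c-1$ while all remaining vertices have degree $1$); all of this is polynomial in $|N(v)|$. The only remaining subtlety is the size requirement $|C_1| \geq |C_2|$. Here I would observe that the placement is forced except for the orientation of single-edge components, and such an orientation contributes one vertex to each side regardless, while isolated vertices and star centers are forced into $C_2$. Hence $|C_1|$ and $|C_2|$ are invariants of $\overline{H}$ once it is known to be a union of stars, so it suffices to build the canonical partition above and test $|C_1| \geq |C_2|$ at the end: if it holds we output $(C_1,C_2)$, and otherwise no admissible partition exists at all.

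I expect the main obstacle to be arguing the ``only if'' direction cleanly --- specifically, pinning down that the star structure is genuinely forced and that $|C_1|,|C_2|$ are invariant, so that checking the single canonical partition is without loss of generality. The clique-plus-degree characterization itself is short; the care lies in ruling out any alternative valid partition with a different balance between the two sides.
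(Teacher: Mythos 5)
Your proposal is correct and follows essentially the same route as the paper's proof: both pass to the complement $\overline{G[N(v)]}$, characterize valid partitions by this complement being a disjoint union of stars, build the canonical partition (centers and isolated vertices into $C_2$, leaves into $C_1$), and justify the final $|C_1| \geq |C_2|$ check by the invariance of the two sizes across all valid partitions. Your explicit forcing argument for that invariance is a slightly more detailed rendering of what the paper asserts in one sentence, but it is the same proof.
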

\begin{proof}
Consider the graph~$H = \overline{G[N(v)]}$.
Since~$C_1$ and~$C_2$ are required to be cliques in~$G$, they will be independent sets in~$H$.
Additionally, for each~$c_1 \in C_1$ we must have exactly one~$c_2 \in C_2$ such that~$\{c_1, c_2\} \notin E(G)$, meaning that~$\deg_{H}(c_1) = 1$.
This implies that if such a partitioning exists, then~$H$ must be a disjoint union of stars.

The partitioning~$(C_1, C_2)$ of~$N(v)$ can therefore be constructed as follows:

If~$H$ is not a disjoint union of stars, no such partitioning exists.
Otherwise, for each connected component~$H'$ of~$H$ pick a vertex~$u$ of maximum degree. Include~$u$ in~$C_2$ and~$N(u)$ in~$C_1$.
Note that a vertex of maximum degree may not be unique if the star is a single edge.
In this case the partitioning is also not unique\footnote{Different partitions will always yield isomorphic graphs. A vertex of maximum degree is only not unique if the star is an edge, say~$\{a, b\}$. If~$a$ in added to~$C_1$, then the 2-Clique Neighborhood Rule will connect~$a$ to~$N_G(b)$ and delete~$b$. Otherwise, the rule will delete~$a$ and connect~$b$ to~$N_G(a)$. Notice that in the two cases the vertices~$a$ and~$b$ correspond to each other (they have the same neighborhoods).}.
In the end, it only remains to check if~$|C_1| \geq |C_2|$.
If not, then no valid partitioning exists.

The above procedure clearly runs in polynomial-time.
It is simple to verify that a partitioning~$(C_1, C_2)$ constructed this way fulfills the requirements of \cref{rr:cn}.
Furthermore, if a valid partitioning exists the condition~$|C_1| \geq |C_2|$ will always be fulfilled, because the two sizes~$|C_1|$ and~$|C_2|$ are shared by all valid partitionings.
Consequently, a valid partitioning will always be found if it exists.
\end{proof}

The exhaustive application of the Degree-0, Degree-2 Folding, Degree-3 Independent Set, Domination and 2-Clique Neighborhood rules (\Cref{rr:deg0,rr:domination,rr:deg2,rr:cn,rr:deg3}) results in a graph with minimum degree at least 4 (for \cref{rr:cn} it is sufficient to only consider degree-3 vertices)~\cite{FJK18}.
We note that \citet{FJK18} also suggest reduction rules which remove degree-4 vertices in some cases.

A curious rule is the Optional Edge Deletion Rule by \citet{BHH85}.
It only allows to delete certain edges, but no vertices and the parameter~$k$ is unaffected.
The edge between two adjacent vertices~$a$ and~$b$ may be removed if there is third vertex~$c$ adjacent to either~$a$ or~$b$ and~$N(c) \subseteq N(a) \cup N(b)$.
The two vertices~$a$ and~$b$ can be thought of as jointly dominating the vertex~$c$.
For an example see \cref{fig:rr_oe_delete}.

\begin{frule}[Optional Edge Deletion \cite{BHH85}]
\label{rr:oe_delete}
Let~$a,b,c \in V$ be pairwise distinct vertices such that~$\{a, b\} \in E$, the vertex~$c$ is adjacent to exactly one of~$a$ or~$b$ and~$N(c) \subseteq N(a) \cup N(b)$.
Then delete the edge~$\{a, b\}$.
\begin{figure}[t]
	\centering
	\begin{subfigure}[c]{.45\textwidth}
		\centering
		\includegraphics{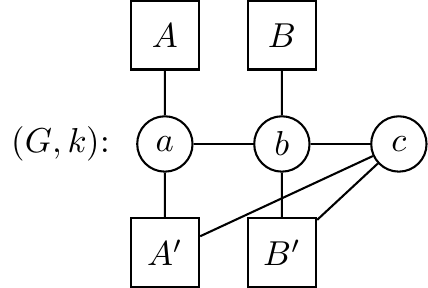}
	\end{subfigure}
	\begin{subfigure}[c]{.45\textwidth}
		\centering
		\includegraphics{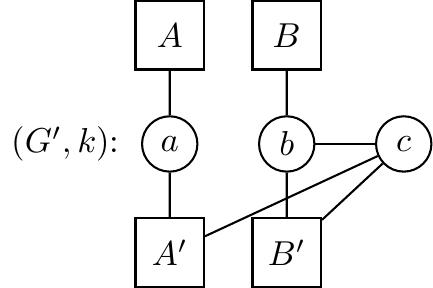}
	\end{subfigure}
\caption{An example for the application of \cref{rr:oe_delete}. Note that it is not required by the rule that~$A \cap B = \emptyset$ and~$A' \cap B' = \emptyset$.}
\label{fig:rr_oe_delete}
\end{figure}
\end{frule}

The safeness of the Optional Edge Deletion Rule is easy to verify.
If~$S$ is a vertex cover of size at most~$k$ for the original graph~$G$, then it is also a vertex cover for the reduced graph~$G'$.
For the other direction, let~$S'$ be a vertex cover of size at most~$k$ for~$G'$.
Assume without loss of generality that the vertex~$c$ is adjacent to~$b$.
If~$\{a, b\} \cap S' \neq \emptyset$, then~$S'$ is also a vertex cover for~$G$.
Otherwise, if~$\{a, b\} \cap S' = \emptyset$, then~$N_{G'}[c] \subseteq S'$, and hence~$S = (S' \cup \{b\}) \setminus \{c\}$ is a vertex cover for~$G$.

\subsubsection{Stability transformations}
The so-called stability transformations are reduction rules for \independentset{}.
The name comes from the fact that they change the stability number of a graph, which is the size of a maximum independent set, by a constant.

Reduction rules for \independentset{} can be reused also for \VC{}.
Given an instance~$(G,k)$ of \VC{} we can construct the equivalent instance~$(G, n-k)$ of \independentset{}.
The resulting \independentset{} instance may be reduced to~$(G', \alpha)$, which we can reduce back to the \VC{} instance~$(G', n'-\alpha)$.
As a result, most reduction rules for \independentset{} can be easily reformulated for \VC{}.
In this section we present slight reformulations of some stability transformations in the context of \VC{}.

The Struction Rule (short for STability number RedUCTION) has mainly been studied in the context of \independentset{}.
Intuitively, for a given vertex~$v$ the rule deletes~$N[v]$ and replaces it be a set of vertices corresponding to each non-edge in~$N(v)$.
If the set of newly created vertices is large, then~$k$ will be increased by ths rule.
For an example of the struction see \cref{fig:rr_struction}.

\begin{frule}[Struction \cite{EHW84}]
\label{rr:struction}
	Let~$v \in V$. Reduce~$(G,k)$ to~$(G', k')$ as follows:

	Let~$R = V \setminus N[v]$ and~$N(v) = \{ a_1, \dots, a_d \}$
	\begin{itemize}
	\item Remove~$N[v]$,
	\item add the following set of new vertices~$W = \{ v_{i,j} \mid 1 \leq i < j \leq d\ \text{ and } \{a_i, a_j\} \notin E\}$,
	\item for~$v_{i,j}, v_{k,l} \in W$ add the edge~$\{v_{i,j}, v_{k,l}\}$ if~$i \neq k$ or~$\{a_j, a_l\} \in E$, and
	\item for~$v_{i,j} \in W$ and~$u \in R$ add the edge~$\{v_{i,j}, u\}$ if~$\{a_i, u\} \in E$ or~$\{a_j, u\} \in E$.
	\end{itemize}

	Finally, set~$k' = k + |W| - d$.
\begin{figure}[t]
	\centering
	\begin{subfigure}[c]{.45\textwidth}
		\centering
		\includegraphics{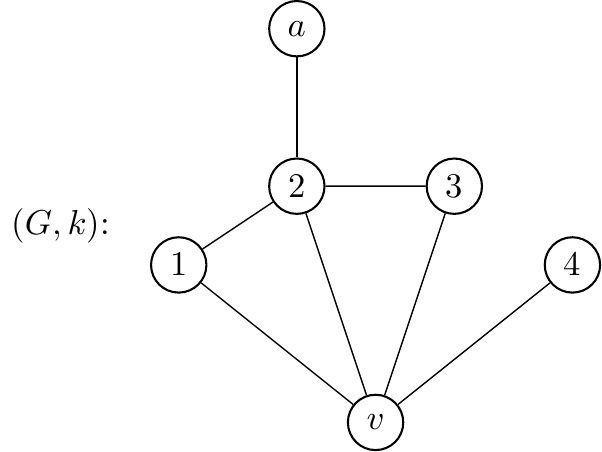}
	\end{subfigure}
	\begin{subfigure}[c]{.45\textwidth}
		\centering
		\includegraphics{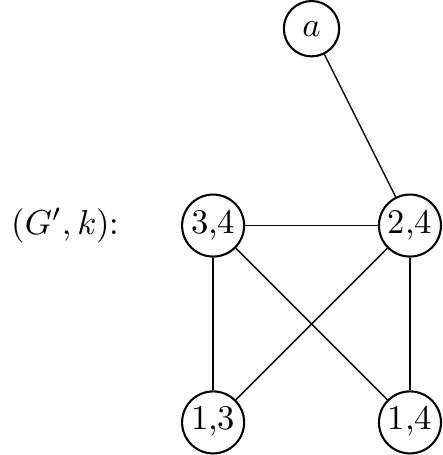}
	\end{subfigure}
\caption{An example for the application of \cref{rr:struction}. The vertex~$v$ has degree four and four missing edges in~$N(v)$.}
\label{fig:rr_struction}
\end{figure}
\end{frule}

The original proof of the safeness of \cref{rr:struction} by \citet{EHW84} relied on reformulations of pseudo-Boolean functions.
For a direct proof we refer to \citet{AHL03}.

The Struction Rule can always be applied to any vertex, runs in polynomial-time and reduces the stability number~$\alpha(G)$ by one~\cite{EHW84}.
Note that different permutations of~$N(v)$ can produce non-isomorphic~$G'$'s.

After at most~$n$ applications of the Struction Rule, an~$n$-vertex graph~$G$ can be reduced to an empty graph, as the stability number steadily decreases until it becomes zero~\cite{EHW84}.
Note that this does not immediately imply~$\text{P} = \text{NP}$, as a single application of the struction rule could increase the number of vertices to~$\Oh(n^2)$ and therefore for~$n$ applications exponential time and space might be needed.

In the following we would like to briefly analyze the cases in which the application of the struction rule leads to a decrease of~$k$.
Since~$|W|$ is equal to the number of missing edges in~$G[N(v)]$, $k$ is decreased by at least one if there are less than~$d$ missing edges in~$G[N(v)]$.
If there are~$d$ missing edges then~$k$ stays the same.
As a result, applying the struction to a vertex of degree at most three cannot increase~$k$.
However, if more than~$d$ edges are missing then~$k$ is increased by the Struction Rule.

The exhaustive application of the Struction Rule will always shrink the instance to an empty graph.
However, this may take exponential time in general.
To avoid this problem one may, for example, apply the Struction Rule only if it does not increase the number of vertices in the graph~\cite{GLS21}.
However, in this case there is no guarantee on getting an empty graph by exhaustive application of the data reduction rule.

An example of a rule which preserves the stability number is the Magnet Rule by \citet{HH91}.
For an example of the rule see \cref{fig:rr_magnet}.

\begin{frule}[Magnet \cite{HH91,HW09}]
\label{rr:magnet}
	Let~$a, b$ be two adjacent vertices and~$A = N(a) \setminus N[b], B = N(b) \setminus N[a], C = N(a) \cap N(b)$.
	If every vertex in~$A$ is adjacent to every vertex in~$B$, then reduce~$(G, k)$ to~$(G', k-1)$ as follows:
	\begin{itemize}
	\item remove~$a, b$,
	\item create a new vertex~$c$ and add all edges between~$c$ and~$C$.
	\end{itemize}
\end{frule}
\begin{figure}[t]
	\centering
	\begin{subfigure}[c]{.45\textwidth}
		\centering
		\includegraphics{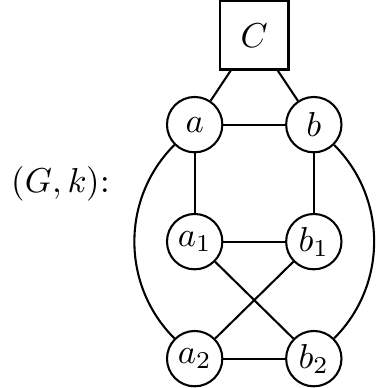}
	\end{subfigure}
	\begin{subfigure}[c]{.45\textwidth}
		\centering
		\includegraphics{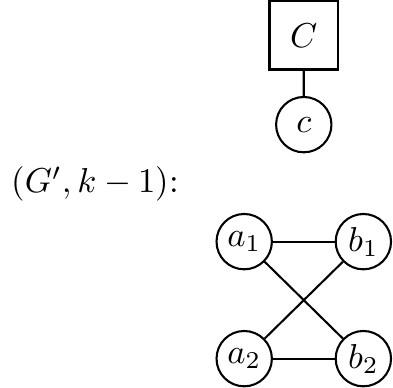}
	\end{subfigure}
\caption{An example for the application of \cref{rr:magnet}. The sets~$A = \{ a_1, a_2 \}, B = \{ b_1, b_2 \}$ do not necessarily have to be independent sets.}
\label{fig:rr_magnet}
\end{figure}

The Magnet Rule is a generalization of the Domination Rule (\cref{rr:domination}), which corresponds to the case where~$A = \emptyset$ or~$B = \emptyset$.
Similarly to the Struction rule, the Magnet Rule has been first discovered using pseudo-Boolean based arguments.

\subsubsection{Crown-based rules}

The crown rule by \citet{ACF04} looks for a so called \emph{crown} within the graph.
Once a crown is found, it can be shown that there always exists a minimum vertex cover which includes one part of the crown and excludes the rest.
The original crown definition is the following: 

\begin{definition}[\cite{ACF04}]
\label{rr:crown_def}
	A \emph{crown} is an ordered pair~$(H,I)$ of distinct vertex subsets from a graph~$G$ that satisfies the following criteria:
	\begin{itemize}
	\item~$H = N(I)$,
	\item~$I$ is a non-empty independent set, and
	\item the edges connecting~$H$ and~$I$ contain a matching in which all elements of~$H$ are matched.
	\end{itemize}
	The set~$H$ is also called the \emph{head} of the crown, and~$I$ the \emph{points}.
	\begin{figure}[t]
		\centering
		\includegraphics{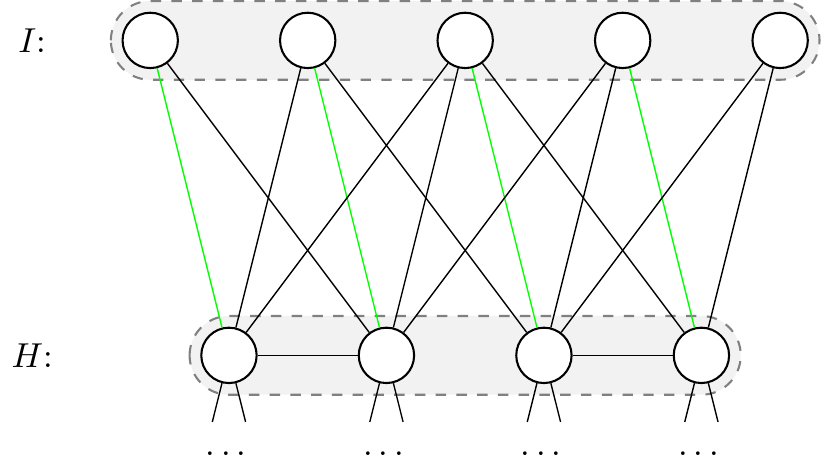}
		\caption{An example of a crown. A matching in which all vertices in~$H$ are matched is highlighted in green.}
		\label{fig:rr_crown}
	\end{figure}
\end{definition}

For an example of a crown see \cref{fig:rr_crown}.
The crown rule simply states that any crown can be directly removed from the graph~\cite{ACF04}.

\begin{frule}[Crown \cite{ACF04}]
\label{rr:crown}
Let~$(H,I)$ be a crown. Then delete~$H, I$ from~$G$ and decrease~$k$ by~$|H|$.
\end{frule}

Before the crown rule was discovered, a data reduction rule based on a linear programming relaxation of \VC{} was discovered by \citet{NT75}.
While at first glance they appear to be very different, they are in fact very closely related \cite{AI16,KKN21}.

\begin{frule}[LP \cite{NT75}]
\label{rr:lp}
	Let~$x \in \RR^n$ be an optimal solution to the following Linear Programming relaxation of \VC{}
	\begin{align*}
	& \text{minimize:}   & & \sum_{v \in V} x_v \\
	& \text{subject to:} & & x_u + x_v \geq 1 & & \text{ for all } \{u, v\} \in E \\
	&                    & & x_v \geq 0                                      & & \text{ for all } v \in V
	\end{align*}
	Let~$V_0 = \{v \in V | x_v < \frac12\}, V_{\frac12} = \{v \in V | x_v = \frac12\}, V_1 = \{v \in V | x_v > \frac12\}$. Delete~$V_0$ and~$V_1$ and decrease~$k$ by~$|V_1|$.
\end{frule}
\begin{lemma}[\cite{NT75}]
\cref{rr:lp} is safe.
\end{lemma}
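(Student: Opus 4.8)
The statement to prove is the safeness of \cref{rr:lp}, the LP relaxation rule. The plan is to show that a minimum vertex cover of $G$ can be recovered from a minimum vertex cover of $G' = G[V_{1/2}]$ by adding back all of $V_1$, and conversely, so that the two instances are equivalent after decreasing $k$ by $|V_1|$. The central structural fact I would rely on is the classical Nemhauser--Trotter theorem: there exists a minimum vertex cover $C$ of $G$ with $V_1 \subseteq C$ and $V_0 \cap C = \emptyset$. This persistence property is exactly what licenses deleting $V_0$ and $V_1$ and forcing $V_1$ into the cover.

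\textbf{Forward direction (reduced $\Rightarrow$ original).} First I would take a vertex cover $S'$ of $G'$ with $|S'| \le k - |V_1|$ and show $S = S' \cup V_1$ is a vertex cover of $G$ with $|S| \le k$. Every edge of $G$ has at least one endpoint in $V_1 \cup V_{1/2}$: an edge cannot have both endpoints in $V_0$, since if $x_u < \tfrac12$ and $x_v < \tfrac12$ then $x_u + x_v < 1$, violating the constraint for $\{u,v\} \in E$. So each edge either touches $V_1$ (covered by $V_1 \subseteq S$) or lies entirely within $V_{1/2}$ (covered by $S'$, since $S'$ covers all edges of the induced subgraph $G'$) or goes between $V_{1/2}$ and $V_1$ (again covered by $V_1$). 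Hence $S$ covers $E(G)$.

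\textbf{Reverse direction (original $\Rightarrow$ reduced).} This is where I would invoke the Nemhauser--Trotter persistence result. Suppose $G$ has a vertex cover of size at most $k$; then it has a minimum vertex cover $C$ of size $\tau(G) \le k$ with $V_1 \subseteq C$ and $C \cap V_0 = \emptyset$. Set $S' := C \cap V_{1/2} = C \setminus V_1$. I claim $S'$ is a vertex cover of $G'$: any edge of $G'$ lies in $G[V_{1/2}]$, is covered by $C$, and since its endpoints lie in $V_{1/2}$ (disjoint from $V_1$) the covering endpoint must lie in $C \cap V_{1/2} = S'$. Finally $|S'| = |C| - |V_1| \le k - |V_1|$, so $(G', k - |V_1|)$ is a yes-instance. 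I would prove the persistence claim either by citing \cite{NT75} directly or by the standard short argument: if a minimum vertex cover $C$ contained some $v \in V_0$ or omitted some $u \in V_1$, one could perturb the LP optimum $x$ toward the indicator of $C$ and derive either a strictly better LP solution (contradiction) or an equal-cost modification producing a cover that respects $V_0, V_1$.

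\textbf{Main obstacle.} The routine part is the edge-covering bookkeeping in both directions; the genuinely substantive ingredient is the persistence property establishing that $V_1$ can be forced into, and $V_0$ excluded from, some minimum vertex cover. Since \cref{rr:lp} is attributed to \cite{NT75} and the lemma cites the same source, the cleanest route is to state the persistence property as the known Nemhauser--Trotter theorem and then assemble the two set-inclusion arguments around it; the only care needed is handling the half-integrality of the optimal LP solution (that an optimum with $x_v \in \{0,\tfrac12,1\}$ exists), which I would likewise attribute to \cite{NT75} rather than reprove.
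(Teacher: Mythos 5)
Your overall route matches the paper's: the paper gives no self-contained proof but attributes the lemma directly to \citet{NT75}, and the text following the lemma states precisely the persistence property you isolate as the substantive ingredient (there is a minimum vertex cover of $G$ containing all of $V_1$ and none of $V_0$), with everything else being routine covering bookkeeping.

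One slip in that bookkeeping, though: in your forward direction, the case analysis (``touches $V_1$, lies entirely within $V_{\frac12}$, or goes between $V_{\frac12}$ and $V_1$'') silently omits edges between $V_0$ and $V_{\frac12}$, and the fact you actually establish---that no edge has both endpoints in $V_0$---is too weak to exclude them. Such an edge, if it existed, would not be covered by $S' \cup V_1$, since it is not an edge of $G' = G[V_{\frac12}]$ and hence its $V_{\frac12}$-endpoint need not lie in $S'$. The repair is the same one-line LP argument you already use: if $u \in V_0$ then $x_u < \frac12$, so every neighbor $v$ of $u$ must satisfy $x_v > \frac12$, i.e.\ $v \in V_1$. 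Thus every edge incident to $V_0$ goes to $V_1$, the three listed cases are exhaustive, and with that strengthening your argument is complete and consistent with what the paper delegates to \citet{NT75}.
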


\citet{NT75} show that there exists a minimum vertex cover of~$G$ which contains none of~$V_0$ and all of~$V_1$.
Furthermore they show there exists a half-integral solution to the LP relaxation, that is, there is an optimal solution~$x^*$ such that~$x_v^* \in \{0, \frac12, 1\}$ for all~$v$.
Additionally, they show that such a half-integral solution can be computed directly by solving a matching problem in a bipartite graph.
The total running time to find the solution is~$\Oh(m\sqrt{n})$ if the Hopcroft-Karp algorithm is used to compute the matching.

\citet{IOY14} furthermore show how to compute a half-integral solution with a minimum number of~$\frac12$ variables (an extreme half-integral solution) from any half-integral solution in linear time.
This can be achieved computing strongly connected components in a residual flow graph.
The well-established crown reduction by \citet{ACF04} is contained in the LP-reduction if an extreme half-integral solution is used~\cite{AI16}.
Furthermore, the graphs resulting from the exhaustive application of \cref{rr:crown} and \cref{rr:lp} are the same~\cite{KKN21}.
\cref{rr:lp} can be applied exhaustively in~$\Oh(m\sqrt{n})$ time~\cite{AI16}.

\citet{CKX10} provide a notion of a \emph{relaxed crown}, which becomes a normal crown if a vertex from its head is removed.
\begin{definition}
	A \emph{relaxed crown} in a crown-free graph~$G = (V,E)$ is a pair~$(H,I)$ such that
	\begin{itemize}
	\item~$I \subseteq V$ is an independent set in~$G$,
	\item~$H = \bigcup_{v \in I} N(v)$, and
	\item there exists a~$v \in H$, such that~$(H \setminus \{v\}, I)$ is a crown in~$G-v$
	\end{itemize}
\end{definition}
\begin{frule}[Relaxed crown \cite{CKX10}]
\label{rr:relcrown}
	Let~$G$ be crown-free and~$(H,I)$ be a relaxed crown.
	\begin{itemize}
	\item If~$H$ is an independent set, merge the vertices in~$I \cup H$ into a single vertex and decrease~$k$ by~$|I|$
	\item otherwise, remove~$I \cup H$ from~$G$ and decrease~$k$ by~$|H|$
	\end{itemize}
\end{frule}

\citet{AI16} use a Twin reduction rule, where~$u, v$ are twins if~$N(u) = N(v) = \{a, b, c\}$~\cite{XN13}.
The Twin Rule is a special case of \cref{rr:relcrown}.

\subsection{Relations between forward rules} \label{sect:rrule_relations}
In this section, we briefly summarize some relations between the forward rules in \cref{sect:forward_rules}.
One such relation is that of generalization.
We say a forward rule~$R_A$ \emph{generalizes} another rule~$R_B$ if~$R_B \subseteq R_A$.
This means, that rule~$R_B$ is merely a ``special case'' of rule~$R_A$.
\begin{figure}[t]
	\centering
	\includegraphics{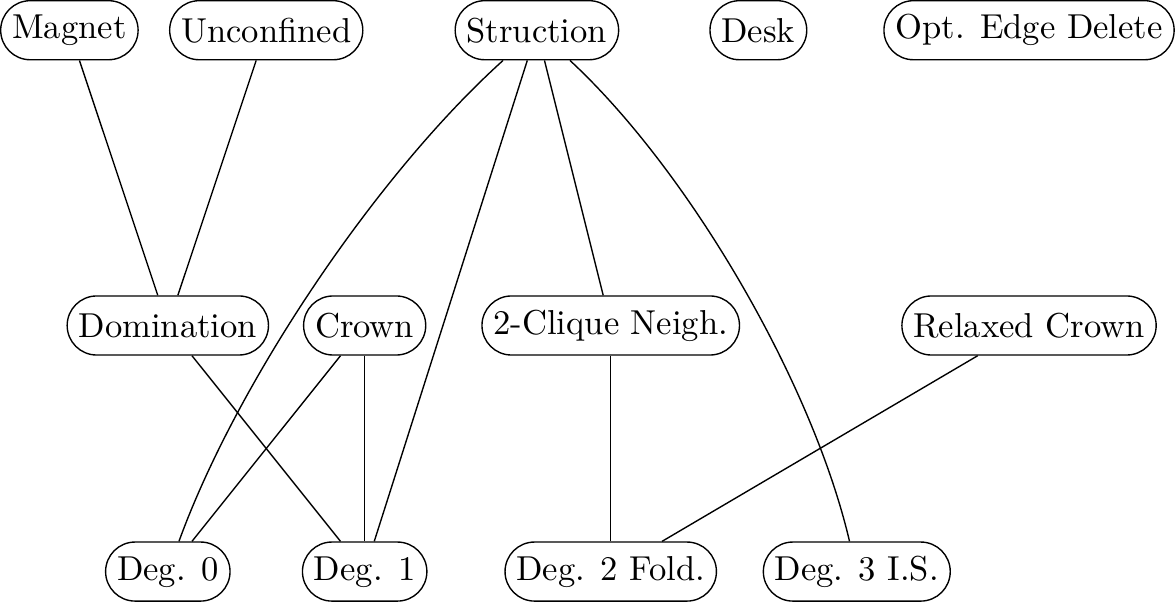}
	\caption{A diagram depicting rule generalization. An edge is drawn from a rule~$A$ to another rule~$B$ above it, if~$B$ generalizes~$A$. Note that formally the Domination Rule requires the addition of the Degree-0 Rule to generalize the Degree-1 Rule, and the Relaxed Crown is applicable only to crown-free graphs while Degree-2 Folding Rule has no such requirement, but we still draw these edges.}
	\label{fig:generalization}
\end{figure}

\cref{fig:generalization} summarizes which forward rules are special cases of the other rules.
We provide the 11 generalization proofs in \cref{appendix:proofs}, as they are rather repetitive.
Omitted from the figure is the LP Rule (\cref{rr:lp}), the exhaustive application of which is equivalent to the exhaustive application of the Crown Rule (\cref{rr:crown})~\cite{KKN21}.
Additionally, what the figure does not show is that the Magnet rule (\cref{rr:magnet}) can be realized as a combination of the Optional Edge Deletion Rule (\cref{rr:oe_delete}) and the Domination Rule (\cref{rr:domination})~\cite{AHL03}.

\subsection{Backward rules}\label{sect:backward_rules}
In this section, we formulate some backward rules based on the forward rules in \cref{sect:forward_rules}.
Recall that applying a backward rule corresponds to ``undoing'' a forward rule.
A forward rule~$R$ can be thought of as a transformation of an instance~$\mathcal{I}$ into a different instance~$\mathcal{I'}$, and the backward rule is the inverse transformation which takes~$\mathcal{I'}$ and produces~$\mathcal{I''}$, such that~$R$ applied on~$\mathcal{I''}$ produces~$\mathcal{I}$.
An important observation is that a data reduction rule may reduce multiple instances to the same instance~$\mathcal{I}$ or even one instance to different instances (due to nondeterminism of some rules).
As a result the inverse transformation is generally not unique.
This non-uniqueness in the backward rules allows for much more freedom in how the rule can be applied, compared to forward rules, as we will subsequently see.

To show that a rule is a backward rule of some corresponding forward rule, one has to show that by first applying the forward rule and then the backward rule to an instance~$\mathcal{I}$ one can always obtain the original instance~$\mathcal{I}$.

To prove the safeness of a backward rule, one only has to show that by applying the backward rule and then its corresponding forward rule one may always return to the original instance.
The safeness of the backward rule then follows from the safeness of the forward rule.

The illustrations we will see for each backward rule already hint that it is always possible to apply a backward rule and then a forward rule or vice versa to obtain an unchanged instance.
For brevity, most of these proofs will be omitted, as they are rather simple.

Due to the large number of forward rules described in \cref{sect:forward_rules}, not all corresponding backward rules will be formulated.
Instead, we primarily focus on simple backward rules.
Recall, that after applying a backward rule we aim to further apply forward or backward rules so that the graph is shrunk.
This is only possible with non-confluent rules.
However, it may be very challenging to prove or disprove confluence for each subset, or even each pair, of our forward rules.

We aim to develop backward rules that change the graph in a ``non-trivial'' way.
For example, we consider the addition of isolated vertices to the graph (undoing the Degree-0 Rule)  a ``trivial'' modification.
Likely, the insertion of isolated vertices will not aid in our final goal of shrinking the graph.

The first backward rule we want to introduce is that of Degree-2 Folding (\cref{rr:deg2}).
This backward rule is also called Vertex Splitting~\cite{Loz17}.
It allows one to split a vertex into three vertices, and distribute the neighbors of the original vertex among two of them.
This is at the cost of increasing the parameter~$k$ by one.
For an example see \cref{fig:undeg2}.

\begin{brule}[Backward Degree-2 Folding (Vertex Splitting)]
\label{rr:undeg2}
	Let~$v$ be a vertex. Split~$v$ as follows:
	\begin{itemize}
		\item create nonadjacent vertices~$a,b$,
		\item create edges adjacent to~$a$ or~$b$ such that~$N(a)\cup N(b) = N(v)$,
		\item delete any edges adjacent to~$v$, and
		\item create the edges~$\{v, a\}$ and~$\{v, b\}$.
	\end{itemize}
	Finally, increase~$k$ by one.
\begin{figure}[t]
	\centering
	\begin{subfigure}[c]{.45\textwidth}
		\centering
		\includegraphics{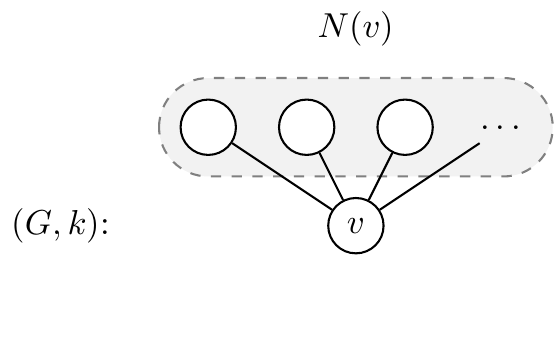}
	\end{subfigure}
	\begin{subfigure}[c]{.45\textwidth}
		\centering
		\includegraphics{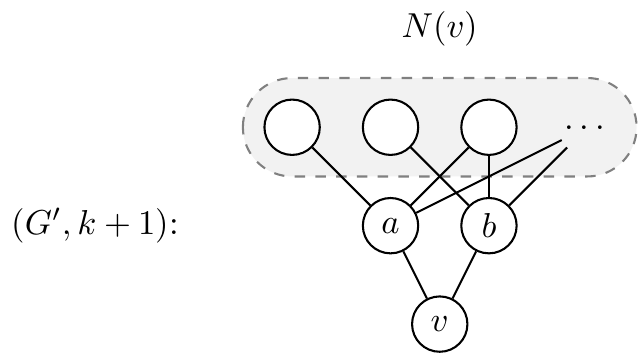}
	\end{subfigure}
\caption{An illustration of the Backward Degree-2 Folding Rule (\cref{rr:undeg2}). This operation described by the rule is sometimes also called Vertex Splitting.}
\label{fig:undeg2}
\end{figure}
\end{brule}
\begin{lemma}
\cref{rr:undeg2} is safe.
\end{lemma}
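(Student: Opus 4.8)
The plan is to follow exactly the recipe laid out in the preamble to \cref{sect:backward_rules}: to prove safeness of a backward rule it suffices to show that applying the backward rule and then its corresponding forward rule returns the original instance, after which safeness follows from the already-established safeness of the forward rule (here \cref{rr:deg2}). So the goal reduces to a purely structural verification, not a fresh vertex-cover argument.

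First I would fix an arbitrary instance $(G,k)$ and an arbitrary vertex $v$ to which \cref{rr:undeg2} is applied. Let $(G',k+1)$ be a resulting instance: the rule creates nonadjacent vertices $a,b$ with $N(a)\cup N(b)=N_G(v)$, deletes the old edges at $v$, adds $\{v,a\}$ and $\{v,b\}$, and increments $k$. I would then observe that in $G'$ the vertex $v$ has exactly the two neighbors $a$ and $b$, and that $a,b$ are nonadjacent by construction. Hence $v$ is precisely a degree-2 vertex with nonadjacent neighbors, so the Degree-2 Folding Rule (\cref{rr:deg2}) is applicable to $v$ in $(G',k+1)$.

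Next I would apply \cref{rr:deg2} to $v$ in $G'$: it merges $v,a,b$ into a single new vertex $v'$ whose neighborhood is $N_{G'}(\{v,a,b\})\setminus\{v,a,b\}$, and decreases the parameter by one, yielding parameter $k$ again. The key computation is that the merged neighborhood equals $N_{G'}(a)\cup N_{G'}(b)$ minus $\{v\}$, which is exactly $N_G(v)$ by the defining property $N(a)\cup N(b)=N_G(v)$ together with the fact that the only edges touching $a,b$ in $G'$ are those to the original neighbors of $v$ plus the edges to $v$ itself. Thus $v'$ has neighborhood $N_G(v)$, and the rest of the graph was untouched, so the resulting instance is isomorphic to $(G,k)$ (identifying $v'$ with $v$). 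This establishes that forward-after-backward returns the original instance.

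The only genuinely delicate point—the step I would treat most carefully—is the bookkeeping of which edges survive the merge, in particular ensuring no spurious self-loops or extra incidences appear: one must check that the edges $\{v,a\},\{v,b\}$ become internal to the merged set and vanish, while every original edge $\{v,w\}$ with $w\in N_G(v)$ is faithfully recreated through $a$ or $b$. Because $a$ and $b$ are nonadjacent and share no edges other than to $v$ and to $N_G(v)$, this is routine, so I would state it briefly rather than belabor it. Having verified the round trip, I would invoke the general principle from \cref{sect:backward_rules} together with the safeness of \cref{rr:deg2} to conclude that \cref{rr:undeg2} is safe.
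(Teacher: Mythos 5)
Your proposal is correct and follows essentially the same approach as the paper's own proof: apply \cref{rr:undeg2}, observe that the split vertex~$v$ becomes a degree-2 vertex with nonadjacent neighbors so that \cref{rr:deg2} applies, verify that the merge recreates a graph isomorphic to~$(G,k)$, and then invoke the safeness of \cref{rr:deg2}. Your extra care about the edge bookkeeping during the merge is a minor elaboration of the paper's one-line observation that~$N_{G''}(v') = N_G(v)$, not a different argument.
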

\begin{proof}
Let~$(G,k)$ be an instance of \VC{}, $v \in V$ any vertex and~$(G',k+1)$ the instance obtained from splitting the vertex~$v$ using \cref{rr:undeg2} (in any way). 
Let~$v,a$, and~$b$ be the vertices in~$G'$ into which~$v$ was split.
The vertex~$v$ in~$G'$ has a degree of two, and its two neighbors~$a$ and~$b$ are nonadjacent.
Consequently, the Degree-2 Folding Rule can be applied to~$v$ in~$G'$.
The rule merges the vertices~$v,a$, and~$b$ into a single vertex~$v'$, such that the neighborhood of~$v'$ is~$N_G'(\{v,a,b\})$, and decreases~$k$ by one.
This results in the instance~$(G'', k$).
Notice that~$N_{G''}(v') = N_G(v)$.
As a consequence, $G$ and~$G''$ are isomorphic.

Because the Degree-2 Folding Rule is safe, the instances~$(G'', k)$ and~$(G',k+1)$ are equivalent, which implies that~$(G,k)$ and~$(G',k+1)$ are also equivalent.
\end{proof}

Vertex Splitting as described by \cref{rr:undeg2} can be used to show that \VC{} is NP-hard even on graphs with maximum degree at most three \cite{PV97,Loz17}.
This statement can be easily proven as follows.
Given a graph~$G$, split each vertex~$v$ with~$\deg_G(v) = d > 3$ into the vertices~$v,a,b$ using \cref{rr:undeg2} by assigning two of~$v$'s neighbors to~$a$, and the remaining to~$b$.
This way~$\deg(v) = 2, \deg(a) = 3, \deg(b) = d - 1$, and the degrees of the remaining vertices are unchanged.
This is repeated as long as as long as there exists a vertex of degree at least four.

Another interesting observation can be made.
If a vertex~$v$ has two nonadjacent neighbors~$x,y$ (which is always true if the Domination Rule (\cref{rr:domination}) is not applicable), then we can trigger the Degree-3 Independent Set Rule as follows (see also \cref{fig:rr_undeg2_deg3}:
split~$v$ into three vertices~$v,a,b$ as described in \cref{rr:undeg2} and assign~$x,y$ to~$a$ and the remaining neighbors of~$v$ to~$b$.
As a result, the vertex~$a$ has degree-3 and its neighborhood is an independent set, which means the Degree-3 Independent Set Rule can be applied to~$a$.
After applying these forward and backward rules~$k$ is increase by one.
However, the resulting neighborhood of the vertex~$v$ has been ``rewired'' substantially.
Therefore, cases may exist where other rules may now be applied to further shrink the graph.

\begin{figure}[t]
	\centering
	\begin{subfigure}[c]{.45\textwidth}
		\centering
		\includegraphics{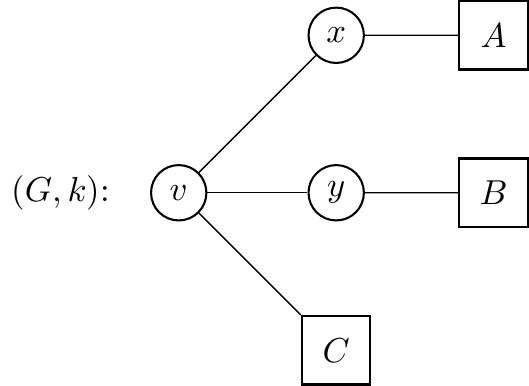}
		\caption{The original graph~$G$}
	\end{subfigure}
	\begin{subfigure}[c]{.45\textwidth}
		\centering
		\includegraphics{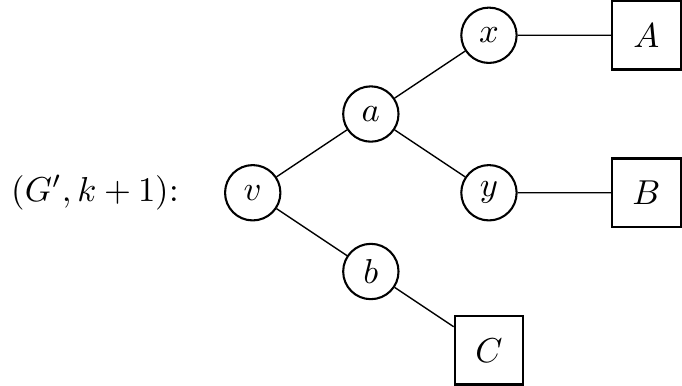}
		\caption{The graph obtained by applying the backward degree 2 folding rule to~$v$ in~$G$.}
	\end{subfigure}
	\begin{subfigure}[c]{.45\textwidth}
		\centering
		\includegraphics{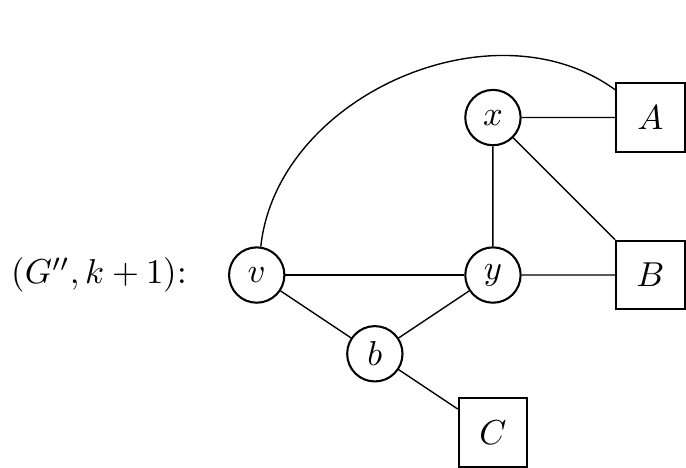}
		\caption{The graph obtained by applying the degree 3 independent set rule to~$a$ in~$G'$.}
	\end{subfigure}
\caption{An example for how the degree 3 independent set rule can be triggered using the backward degree 2 folding (vertex splitting).} 
\label{fig:rr_undeg2_deg3}
\end{figure}

We next provide a backward rule for the Degree-3 Independent Set forward rule.
See also \cref{fig:rr_undeg3} for an example.
\begin{brule}[Backward Degree-3 Independent Set]
\label{rr:undeg3}
	Let~$abc$ be a an induced~$P_3$ and~$S = \{a, b, c\}$. If for all~$u \in N(S)$ we have~$|N(u) \cap S| \geq 2$, then
	\begin{itemize}
		\item delete the edges~$\{a,b\}, \{b,c\}$,
		\item for all~$u \in N(S)$:
		\begin{itemize}
			\item if~$S \subseteq N(u)$ then optionally delete one of the edges connecting~$u$ to~$S$,
			\item if~$a \notin N(u)$, then delete~$u$'s edge to~$b$,
			\item if~$b \notin N(u)$, then delete~$u$'s edge to~$c$,
			\item if~$c \notin N(u)$, then delete~$u$'s edge to~$a$.
		\end{itemize}
		\item create a new vertex~$v$ and together with all edges from~$v$ to~$a,b$, and~$c$.
	\end{itemize}
\begin{figure}[t]
	\centering
	\begin{subfigure}[c]{.45\textwidth}
		\centering
		\includegraphics{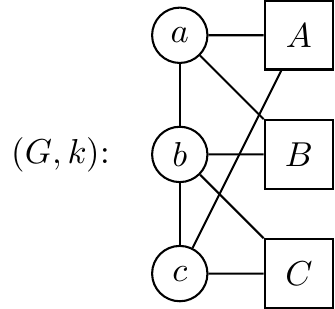}
	\end{subfigure}
	\begin{subfigure}[c]{.45\textwidth}
		\centering
		\includegraphics{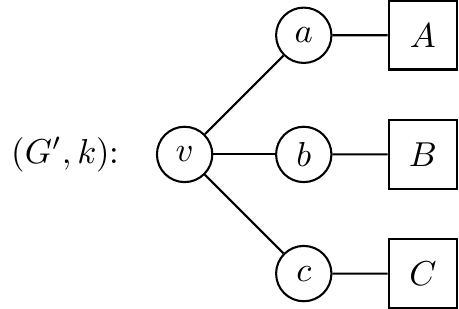}
	\end{subfigure}
\caption{An example for the Backward Degree-3 Independent Set Rule (\cref{rr:undeg3}). The case where a vertex is adjacent to~$a,b$ and~$c$ in~$G$ is not depicted.}
\label{fig:rr_undeg3}
\end{figure}
\end{brule}
The example in \cref{fig:rr_undeg2_deg3} also demonstrates that one can sometimes undo the Degree-3 Rule and then apply the Degree-2 Folding rule to shrink the graph.

Additionally, we provide a special case of the backward rule for the 2-Clique Neighborhood Rule (\cref{rr:cn}).
In this simplified rule we only look for a very small clique, namely an edge.
For an illustration see \cref{fig:uncn}.
\begin{brule}[Backward 2-Clique Neighborhood (special case)]
\label{rr:uncn}
	Let~$a,b$ be two adjacent vertices and~$C = N(a) \cap N(b)$. Then
	\begin{itemize}
		\item create two new vertices~$v,c$,
		\item connect~$v$ to~$a,b,c$,
		\item delete~$a$ and~$b$'s edges to~$C$, and
		\item create all edges between~$c$ and~$C$.
	\end{itemize}
	Finally, increase~$k$ by one.
\begin{figure}[t]
	\centering
	\begin{subfigure}[c]{.45\textwidth}
		\centering
		\includegraphics{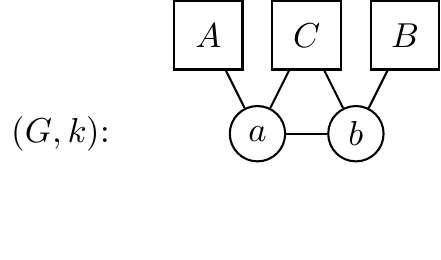}
	\end{subfigure}
	\begin{subfigure}[c]{.45\textwidth}
		\centering
		\includegraphics{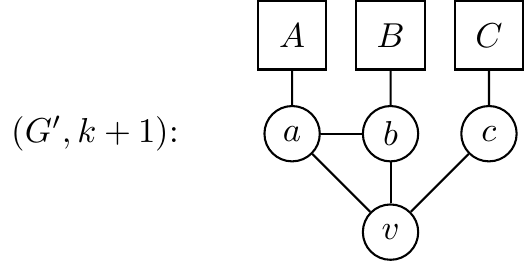}
	\end{subfigure}
	\caption{An illustration of \cref{rr:uncn}.}
	\label{fig:uncn}
\end{figure}
\end{brule}
\begin{lemma}
\cref{rr:uncn} is safe.
\end{lemma}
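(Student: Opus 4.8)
The plan is to follow the general recipe for backward-rule safeness stated at the beginning of \cref{sect:backward_rules}: it suffices to exhibit, for any instance produced by \cref{rr:uncn}, a single application of the corresponding forward rule---here the 2-Clique Neighborhood Rule (\cref{rr:cn})---that returns the original instance. Since \cref{rr:cn} is safe, equivalence of the two instances then propagates back to equivalence of the pre- and post-backward-rule instances. So let $(G,k)$ be the input, let $a,b$ be adjacent with $C = N(a) \cap N(b)$, and let $(G',k+1)$ be the result of applying \cref{rr:uncn}, where $v$ and $c$ are the two freshly created vertices.

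The key structural observation is that the newly created vertex $v$ is a natural target for the forward rule. By construction $N_{G'}(v) = \{a,b,c\}$. Moreover $a$ and $b$ remain adjacent (their mutual edge is untouched), while $c$ is adjacent to neither $a$ nor $b$ (it is only joined to $v$ and to $C$, and $a,b \notin C$ since no vertex is its own neighbor). Hence the non-edges of $G'[N(v)]$ are exactly $M = \{\{a,c\},\{b,c\}\}$. I would then take the partition $C_1 = \{a,b\}$, $C_2 = \{c\}$ and check the hypotheses of \cref{rr:cn}: both parts are cliques ($\{a,b\} \in E$ and $C_2$ is a singleton), $|C_1| = 2 \ge 1 = |C_2|$, and each of $a,b$ lies in exactly one non-edge of $M$. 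So \cref{rr:cn} applies to $v$ with this partition and reduces the parameter by $|C_2| = 1$, bringing it back to $k$.

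It remains to verify that the resulting graph is exactly $G$. Here \cref{rr:cn} deletes $v$ and $C_2 = \{c\}$, and for each non-edge $\{a,c\},\{b,c\}$ adds the missing edges between $a$ (resp.\ $b$) and $N_{G'}(c) = \{v\} \cup C$. The edges to $v$ are irrelevant since $v$ is deleted, so the net effect is to reconnect $a$ and $b$ to every vertex of $C$---precisely the edges that \cref{rr:uncn} had removed. After also discarding the edges incident to the deleted $v$ and $c$, the edge set is restored to $E(G)$ on the vertex set $V(G)$, so the output is isomorphic to $(G,k)$. The main point requiring care is this edge bookkeeping: one must confirm that no spurious edges survive (the $a$--$v$, $b$--$v$, and $c$--$C$ edges all vanish together with $v,c$) and that the degenerate case $C = \emptyset$ is handled uniformly (then nothing is re-added, and $c$ is simply a pendant of $v$ that disappears). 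With $G'' \cong G$ and the parameter back at $k$, safeness of \cref{rr:cn} yields safeness of \cref{rr:uncn}.
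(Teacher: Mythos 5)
Your proof is correct and follows exactly the recipe the paper uses: the paper's own justification is the one-liner that safeness of \cref{rr:uncn} follows from that of \cref{rr:cn}, relying on the general principle stated at the start of \cref{sect:backward_rules} (apply the backward rule, then the forward rule, and recover the original instance). You have simply filled in the bookkeeping the paper omits---choosing $C_1=\{a,b\}$, $C_2=\{c\}$ and verifying that applying \cref{rr:cn} to the new vertex $v$ restores $(G,k)$---and that verification is accurate.
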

The safeness of \cref{rr:uncn} follows from the safeness of \cref{rr:cn}.

We also consider two backward rules which only create a new vertex and some edges adjacent to it.
The first rule of this type is the Backward Domination.
\begin{brule}[Backward Domination]
\label{rr:undom}
	Let~$v$ be any vertex and~$S \subseteq V$.
	Then create a new vertex~$u$ and add all edges between~$u$ and~$S \cup N[v]$, and increase~$k$ by one.
\begin{figure}[t]
	\centering
	\begin{subfigure}[c]{.45\textwidth}
		\centering
		\includegraphics{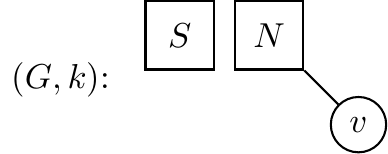}
	\end{subfigure}
	\begin{subfigure}[c]{.45\textwidth}
		\centering
		\includegraphics{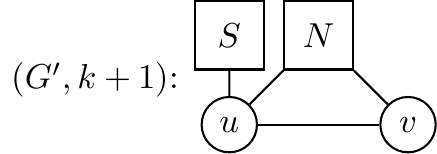}
	\end{subfigure}
\caption{An illustration for \cref{rr:undom}. The set~$N$ represents the neighbors of~$v$. The vertex~$u$ is added in such a way that it dominates~$v$ in~$G'$. The set~$S$ can be chosen arbitrarily.}
\label{fig:rr_undom}
\end{figure}
\end{brule}
\begin{lemma}
\cref{rr:undom} is safe.
\end{lemma}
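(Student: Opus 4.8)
The plan is to reuse the general principle for backward rules stated earlier in the excerpt: to prove safeness of a backward rule it suffices to exhibit, for the instance~$(G',k+1)$ produced by the backward rule, an application of the corresponding forward rule (here the Domination Rule,~\cref{rr:domination}) that returns the original instance~$(G,k)$; safeness then follows from the safeness of~\cref{rr:domination}. So the whole argument reduces to checking that the newly inserted vertex~$u$ is a legal target for the Domination Rule and that deleting it recovers~$(G,k)$ exactly.

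First I would fix notation: start from~$(G,k)$, pick any vertex~$v$ and any set~$S\subseteq V$, and let~$(G',k+1)$ be the instance obtained by adding a new vertex~$u$ with~$N_{G'}(u)=S\cup N_G[v]$. The key observation is that in~$G'$ the vertex~$u$ dominates~$v$. Indeed, $N_{G'}[v]=N_G[v]\cup\{u\}$ since the only edge incident to~$v$ that the rule adds is~$\{u,v\}$ (note~$v\in N_G[v]\subseteq N_{G'}(u)$, so~$u$ and~$v$ are adjacent), while~$N_{G'}[u]=\{u\}\cup S\cup N_G[v]$. Hence~$N_{G'}[v]=N_G[v]\cup\{u\}\subseteq \{u\}\cup N_G[v]\cup S=N_{G'}[u]$, so the domination condition~$N_{G'}[v]\subseteq N_{G'}[u]$ holds and~$u,v$ are adjacent, exactly as required by~\cref{rr:domination}.

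Next I would apply the Domination Rule to the dominating vertex~$u$ in~$(G',k+1)$: this deletes~$u$ and decreases the parameter by one, yielding an instance~$(G'',k)$. Since~$u$ was the only vertex and the only edges added by the backward rule, deleting~$u$ removes precisely those additions, so~$G''=G$ and the parameter is back to~$k$; thus~$(G'',k)=(G,k)$. By the safeness of~\cref{rr:domination}, $(G',k+1)$ and~$(G'',k)=(G,k)$ are equivalent instances, which is exactly the claim that~\cref{rr:undom} is safe.

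I do not expect a genuine obstacle here, as the argument is a direct instance of the backward-rule proof template. The only point needing mild care is the edge-case bookkeeping: one must confirm that~$u$ is genuinely \emph{new} (so no pre-existing edges are disturbed), that~$u$ is adjacent to~$v$ because~$v\in N_G[v]$ is included in its neighborhood, and that the arbitrary extra set~$S$ does not break domination---it only enlarges~$N_{G'}[u]$ and therefore can only help the containment~$N_{G'}[v]\subseteq N_{G'}[u]$. Writing these three checks out explicitly is all that is required.
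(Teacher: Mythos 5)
Your proof is correct and follows exactly the approach the paper intends: the paper disposes of this lemma with the one-line remark that safeness follows from the safeness of the Domination Rule (\cref{rr:domination}), relying on the general backward-rule template stated at the start of that section. You have simply written out the details of that template---verifying that the new vertex~$u$ dominates~$v$ in~$G'$, that applying \cref{rr:domination} to~$u$ recovers~$(G,k)$ exactly, and hence that equivalence transfers---which is precisely the argument the paper leaves implicit.
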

The safeness of \cref{rr:undom} follows from the safeness of \cref{rr:domination}.

The second backward rule which simply only creates a new vertex and some edges incident to it is the Backward Unconfined Rule.
\begin{brule}[Backward Unconfined]
\label{rr:ununconf}
	Let~$S \subseteq V$ and~$G'$ be the graph obtained from~$G$ by adding a new vertex~$v$ whose neighborhood is~$S$.
	If~$v$ is an unconfined vertex (\cref{rr:unconfined}) then change~$G$ for~$G'$ and increase~$k$ by one.
\end{brule}
\begin{lemma}
\cref{rr:ununconf} is safe.
\end{lemma}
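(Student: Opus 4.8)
The plan is to invoke the general principle for backward-rule safeness stated just before this rule: to prove a backward rule safe it suffices to exhibit that applying the backward rule and then its corresponding forward rule returns the original instance, after which safeness follows from safeness of the forward rule. Here the corresponding forward rule is the Unconfined Rule (\cref{rr:unconfined}), which is assumed safe.

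First I would fix notation. Start from an instance~$(G,k)$, choose a set~$S \subseteq V(G)$, and let~$(G',k+1)$ be the instance produced by \cref{rr:ununconf}, so that~$G'$ is~$G$ together with a single fresh vertex~$v$ with~$N_{G'}(v) = S$; by the precondition of the rule, $v$ is unconfined in~$G'$. The crucial observation is that this precondition is \emph{exactly} the applicability condition of the forward Unconfined Rule: \cref{rr:unconfined} applies to a vertex precisely when \cref{alg:unconf} returns yes, i.\,e.\ when that vertex is unconfined. Hence the forward rule can be applied to~$v$ in~$G'$, deleting~$v$ and decreasing the parameter by one.

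Next I would check the bookkeeping that this recovers~$(G,k)$. Since~$v$ is newly introduced and all of its incident edges go to vertices of~$S \subseteq V(G)$, deleting~$v$ from~$G'$ leaves exactly the graph~$G$; and the parameter becomes~$(k+1)-1 = k$. Thus applying the backward rule followed by its forward rule returns the original instance~$(G,k)$. By safeness of \cref{rr:unconfined}, the instances~$(G',k+1)$ and~$(G,k)$ are equivalent, which is precisely the statement that \cref{rr:ununconf} is safe.

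There is no real obstacle here; the argument is essentially immediate once the framework is in place. The only point meriting explicit care is the alignment between the backward rule's precondition (``$v$ is unconfined'') and the forward rule's applicability condition, together with the trivial verification that deleting the single added vertex~$v$ restores~$G$ verbatim. Both are routine, which is why I expect the published proof to be a one-line appeal to the safeness of \cref{rr:unconfined}, exactly as the preceding backward-rule lemmas (\cref{rr:uncn,rr:undom}) are dispatched.
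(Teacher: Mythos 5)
Your proof is correct and takes exactly the paper's approach: the paper dispatches this lemma with a one-line appeal to the safeness of \cref{rr:unconfined}, and your argument simply spells out the routine verification (backward rule followed by forward rule restores~$(G,k)$) that this appeal implicitly relies on, as anticipated in the paper's general framework for backward-rule safeness.
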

The safeness of \cref{rr:ununconf} follows from the safeness of \cref{rr:unconfined}.

The last backward rule which we will consider is that of the Optional Edge Deletion forward rule.
The edge between two nonadjacent vertices~$a$ and~$b$ may be inserted if there is third vertex~$c$ adjacent to either~$a$ or~$b$ and~$N(c) \subseteq N(a) \cup N(B)$.
The two vertices~$a$ and~$b$ can be thought of as jointly dominating the vertex~$c$.
\begin{brule}[Backward Optional Edge Deletion / Optional Edge Insertion]
\label{rr:oe_insert}
Let~$a,b,c \in V$ be pairwise distinct vertices such that~$\{a, b\} \notin E$, the vertex~$c$ is adjacent to exactly one of~$a$ or~$b$ and~$N(c) \subseteq N(a) \cup N(b)$.
Then insert the edge~$\{a, b\}$.
\begin{figure}[t]
	\centering
	\begin{subfigure}[c]{.45\textwidth}
		\centering
		\includegraphics{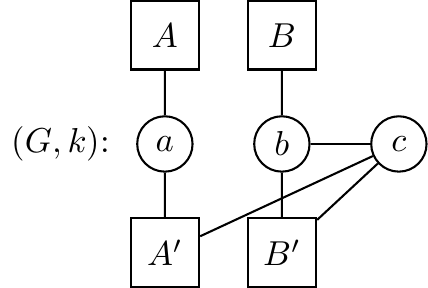}
	\end{subfigure}
	\begin{subfigure}[c]{.45\textwidth}
		\centering
		\includegraphics{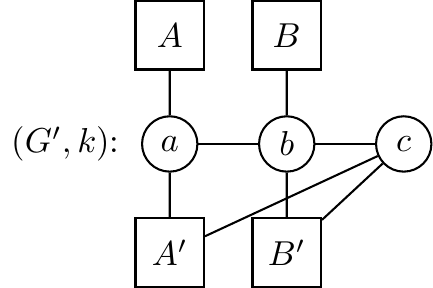}
	\end{subfigure}
\caption{An example for the application of \cref{rr:oe_insert}.}
\label{fig:rr_oe_insert}
\end{figure}
\end{brule}
\begin{lemma}
\cref{rr:oe_insert} is safe.
\end{lemma}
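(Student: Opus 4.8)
The plan is to use the general recipe for proving backward-rule safeness stated earlier in this section: it suffices to show that applying the backward rule (\cref{rr:oe_insert}) and then its corresponding forward rule (\cref{rr:oe_delete}) returns the original instance; safeness then follows from the safeness of \cref{rr:oe_delete}. Since the Optional Edge Deletion Rule leaves~$k$ untouched, so does its inverse, and hence no parameter bookkeeping is needed.

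Concretely, I would start from a graph~$G$ together with vertices~$a,b,c$ satisfying the preconditions of \cref{rr:oe_insert}: the edge~$\{a,b\}$ is absent, $c$ is adjacent to exactly one of~$a,b$, and~$N_G(c) \subseteq N_G(a) \cup N_G(b)$. Let~$G'$ be the graph obtained from~$G$ by inserting~$\{a,b\}$. The core of the argument is to verify that the \emph{same} triple~$a,b,c$ witnesses applicability of \cref{rr:oe_delete} in~$G'$, which amounts to checking three conditions. First, $\{a,b\} \in E(G')$ holds by construction. Second, $c$ is still adjacent to exactly one of~$a,b$ in~$G'$, because inserting the edge~$\{a,b\}$ changes only the mutual adjacency of~$a$ and~$b$ and does not affect any edge incident to~$c$ (recall~$c \notin \{a,b\}$). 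Third, $N_{G'}(c) \subseteq N_{G'}(a) \cup N_{G'}(b)$: since~$c \notin \{a,b\}$ we have~$N_{G'}(c) = N_G(c)$, while inserting~$\{a,b\}$ only enlarges the neighborhoods of~$a$ and~$b$, so~$N_G(a) \cup N_G(b) \subseteq N_{G'}(a) \cup N_{G'}(b)$; chaining this with the precondition yields the desired inclusion.

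With all three conditions verified, \cref{rr:oe_delete} is applicable to~$G'$ with the triple~$a,b,c$ and deletes exactly the edge~$\{a,b\}$, thereby recovering~$G$. Hence applying the backward rule followed by its corresponding forward rule returns the original instance, and by the recipe above together with the already established safeness of \cref{rr:oe_delete}, the backward rule is safe.

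I do not expect a genuine obstacle here; the only point requiring care is confirming that inserting~$\{a,b\}$ preserves the neighborhood-containment condition, which is immediate once one observes that the neighborhoods of~$a$ and~$b$ can only grow while that of~$c$ is untouched.
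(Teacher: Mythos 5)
Your proof is correct and follows exactly the recipe the paper itself invokes: the paper's own justification is the one-line remark that safeness of \cref{rr:oe_insert} follows from the safeness of \cref{rr:oe_delete}, with the round-trip verification left implicit. You have simply filled in those omitted details (checking that the same triple~$a,b,c$ witnesses applicability of the forward rule after the insertion), which matches the paper's intended argument.
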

Again, the safeness of \cref{rr:oe_insert} follows from the safeness of \cref{rr:oe_delete}.

\section{Generalization proofs}\label{appendix:proofs}
\begin{figure}[t!]
	\centering
	\includegraphics{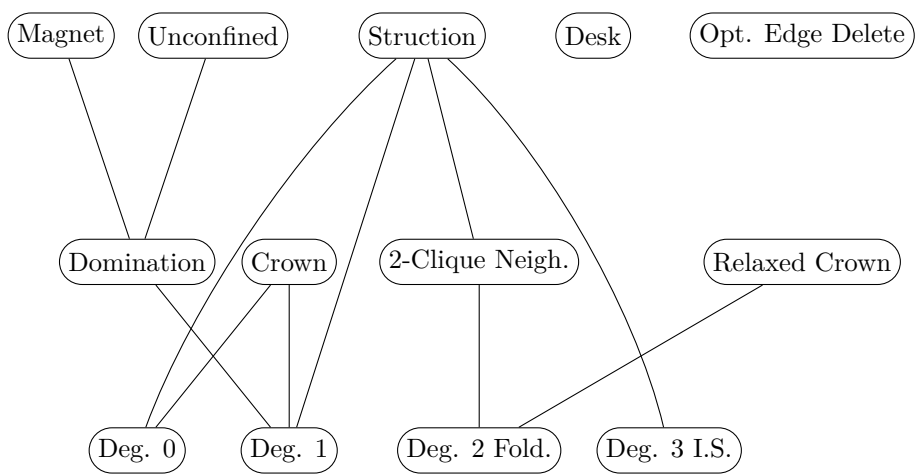}
	\caption{A diagram depicting rule generalization. An edge is drawn from a rule~$A$ to another rule~$B$ above it, if~$B$ generalizes~$A$. Note that formally the Domination Rule requires the addition of the Degree-0 Rule to generalize the Degree-1 Rule, and the Relaxed Crown only generalizes the Degree-2 Folding Rule when the graph is crown-free, but we still draw these edges.}
	\label{fig:generalization2}
\end{figure}
In this appendix we give the missing proofs for hierarchy of generalization among forward rules, which was presented in \cref{sect:rrule_relations}.
For convenience, we include a copy of the figure here as \cref{fig:generalization2}.

\begin{lemma}
The Magnet Rule (\Cref{rr:magnet}) is a generalization of the Domination Rule (\Cref{rr:domination}).
\end{lemma}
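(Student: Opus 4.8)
The plan is to show that whenever the Domination Rule (\Cref{rr:domination}) applies, the Magnet Rule (\Cref{rr:magnet}) applies to the same vertex pair and produces an isomorphic outcome, hence $R_{\text{Domination}} \subseteq R_{\text{Magnet}}$ in the sense of generalization. Suppose the Domination Rule applies to adjacent vertices $u,v$ with $N[v] \subseteq N[u]$; it deletes $u$ and decreases $k$ by one. I want to instantiate the Magnet Rule on the \emph{same} pair, taking $a := v$ and $b := u$, and check that the Magnet precondition is met and that the resulting graph matches what Domination produces.

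First I would unpack the Magnet sets for this instantiation: with $a = v$ and $b = u$, set $A = N(v) \setminus N[u]$, $B = N(u) \setminus N[v]$, and $C = N(v) \cap N(u)$. The key observation is that the domination hypothesis $N[v] \subseteq N[u]$ forces $N(v) \subseteq N[u]$, so every neighbor of $v$ lies in $N[u]$; therefore $A = N(v) \setminus N[u] = \emptyset$. The Magnet precondition ``every vertex in $A$ is adjacent to every vertex in $B$'' is then \emph{vacuously} satisfied because $A$ is empty, so the Magnet Rule is indeed applicable. This is exactly the case flagged in the remark following \Cref{rr:magnet} that Domination corresponds to $A = \emptyset$ or $B = \emptyset$.

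Next I would verify that the two outcomes coincide up to isomorphism. The Magnet Rule removes $a = v$ and $b = u$, creates a new vertex $c$, and joins $c$ to $C = N(v)\cap N(u)$; it also decreases $k$ by one. Because $N[v] \subseteq N[u]$, we have $N(v) \subseteq N(u) \cup \{u\}$, so every neighbor of $v$ other than $u$ already belongs to $N(u)$, i.e. $N(v)\setminus\{u\} \subseteq C$; combined with $A=\emptyset$ this lets me identify the new vertex $c$ (with neighborhood $C$) with the surviving vertex $v$ in the Domination output, whose neighborhood is $N(v)\setminus\{u\}$. I would check carefully that $C = N(v)\setminus\{u\}$: the inclusion $N(v)\setminus\{u\}\subseteq C$ follows from domination, and $C \subseteq N(v)$ holds by definition while $u \notin C$, giving equality. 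Thus the Magnet output (delete $u,v$, add $c$ adjacent to $C$) is isomorphic to the Domination output (delete $u$, keep $v$ with its neighbors), both decreasing $k$ by one.

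The main obstacle is the bookkeeping around the new vertex $c$ versus the retained vertex $v$: I must make the identification precise so that the claimed isomorphism genuinely matches neighborhoods, and confirm $C = N(v)\setminus\{u\}$ rather than merely $C \subseteq N(v)$. A secondary subtlety is the direction of the generalization relation and the role of the Degree-0 Rule (as hinted in the caption of \Cref{fig:generalization}); since Magnet deletes both $u$ and $v$ and reintroduces $c$, one should confirm no spurious isolated vertices arise, but here every vertex of $C$ retains an edge to $c$, so no such cleanup is needed.
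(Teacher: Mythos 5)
Your proof is correct and follows essentially the same route as the paper's: both instantiate the Magnet Rule on the dominated/dominating pair, observe that the set $A$ of private neighbors of the dominated vertex is empty (so the Magnet precondition holds vacuously), and identify the new vertex $c$ (adjacent to $C$, which equals the surviving vertex's neighborhood after deletion) with the vertex kept by Domination, with $k$ decreasing by one in both cases. The only difference is cosmetic variable naming (the paper relabels so that $v$ dominates $u$), so nothing further is needed.
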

\begin{proof}
Let~$v$ be a vertex which dominates another vertex~$u$, i.\,e., $N[u] \subseteq N[v]$. 
For~$A = N(u) \setminus N[v] = \emptyset{}$, $B = N(v) \setminus N[u]$, and~$C = N(u) \cap N(v)$ we see that all vertices in~$A$ are adjacent to~$B$, and so is~$u$ and~$v$.
It follows that the Magnet rule can be applied to~$(u,v)$.
As a result of applying the rule, $u$ and~$v$ are deleted and a vertex~$c$ is created and connected to the vertices in~$C = N(u) \setminus \{v\}$.
The vertex~$c$ corresponds to~$u$ in the graph obtained by deleting~$v$ as part of applying the Domination Rule.
Therefore, applying the Domination Rule to~$u,v$ results in the same instance as applying the Magnet Rule to~$u,v$.
The parameter~$k$ is decreased by one in both rules.
\end{proof}

\begin{lemma}
The Unconfined Rule (\Cref{rr:unconfined}) is a generalization of the Domination Rule (\Cref{rr:domination}).
\end{lemma}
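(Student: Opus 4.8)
The goal is to establish the containment \(R_{\mathrm{Dom}} \subseteq R_{\mathrm{Unconf}}\): every reduction carried out by the Domination Rule is also carried out by the Unconfined Rule, producing the same instance. The plan is to start from an arbitrary application of \cref{rr:domination} and exhibit a run of the Unconfined Check (\cref{alg:unconf}) that returns yes on the very vertex that Domination deletes, so that both rules delete the same vertex and decrease \(k\) by one.

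Concretely, suppose \cref{rr:domination} applies: there are adjacent vertices \(p,q\) with \(p\) dominating \(q\), that is \(N[q] \subseteq N[p]\), and the rule deletes \(p\) and decreases \(k\) by one. I would run \cref{alg:unconf} with the initial set \(S = \{p\}\). The key observation is that the dominated vertex \(q\) is an ideal candidate in the search of line~2: since \(q \in N(p)\) we have \(q \in N(S)\) and \(|N(q) \cap S| = 1\) automatically (as \(S=\{p\}\)), and since \(N(q) \subseteq N[q] \subseteq N[p] = N[S]\) we get \(N(q) \setminus N[S] = \emptyset\). Hence the quantity \(|N(u) \setminus N[S]|\) minimized in line~2 attains the value \(0\).

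Because the algorithm selects a minimizer \(u\), the chosen \(u\) must also satisfy \(|N(u)\setminus N[S]| = 0\), i.e.\ \(N(u) \setminus N[S] = \emptyset\). Therefore the test ``if \(N(u)\setminus N[S] = \emptyset\) then return yes'' succeeds on the first iteration, and \cref{alg:unconf} returns yes for \(p\). Consequently \cref{rr:unconfined} applies to \(p\), deleting \(p\) and decreasing \(k\) by one, which is exactly the instance produced by \cref{rr:domination}. This establishes the desired containment and hence that the Unconfined Rule generalizes the Domination Rule.

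The argument is essentially routine; I expect the only point requiring care to be the minimization step. One must note that \cref{alg:unconf} may pick some minimizer \(u \neq q\), but this is harmless: any minimizer inherits the value \(0\) witnessed by \(q\), so the algorithm terminates with yes regardless of which minimizer is selected. A secondary bookkeeping matter is aligning the naming conventions of the two rules, since the vertex deleted by Domination is precisely the one that must be fed as the start vertex of the Unconfined Check.
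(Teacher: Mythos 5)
Your proof is correct and follows essentially the same route as the paper's: start the Unconfined Check with $S$ equal to the dominating vertex, observe that the dominated vertex witnesses a candidate with $N(u)\setminus N[S]=\emptyset$, and conclude that the check returns yes in one iteration, so both rules delete the same vertex and decrease $k$ by one. Your extra remark about the minimizer possibly differing from the dominated vertex is a small refinement the paper glosses over, but it does not change the argument.
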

\begin{proof}
Let~$v$ be a vertex which dominates another vertex~$u$, i.\,e., $N[u] \subseteq N[v]$. 
Further let~$S = \{v\}$.
Notice that~$u \in N(S)$, $|N(u) \cap S| = 1$, and~$|N(u) \setminus N[S]| = 0$.
Therefore after one iteration of the procedure in the Unconfined Rule yes will be returned.
That means~$v$ is unconfined, and will be removed by both the Domination and Unconfined rules.
The parameter~$k$ is decreased by one by both rules.
\end{proof}

\begin{lemma}
The Struction Rule (\Cref{rr:struction}) is a generalization of the Degree-0 Rule (\Cref{rr:deg0}).
\end{lemma}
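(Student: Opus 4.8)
The plan is to unfold the definition of generalization from \cref{sect:rrule_relations}: to show that the Struction Rule (\Cref{rr:struction}) generalizes the Degree-0 Rule (\Cref{rr:deg0}), I must establish the relation inclusion $R_{\text{Deg0}} \subseteq R_{\text{Struction}}$, i.e., every input–output pair realizable by the Degree-0 Rule is also realizable by a single application of the Struction Rule.

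First I would take an arbitrary pair $((G,k),(G',k'))$ in the Degree-0 relation. By definition of \Cref{rr:deg0} this means that $G$ contains an isolated vertex $v$, that $G' = G - v$, and that $k' = k$. The goal is then to exhibit an application of the Struction Rule to the same instance $(G,k)$ that produces exactly this output.

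Next I would apply the Struction Rule to the vertex $v$ and evaluate each of its steps under the hypothesis $\deg_G(v) = 0$. Since $N(v) = \emptyset$ we have $d = 0$ and $N[v] = \{v\}$, so the removal step deletes precisely $v$. The vertex set $W = \{v_{i,j} \mid 1 \le i < j \le d,\ \{a_i, a_j\} \notin E\}$ is empty, because there are no indices with $1 \le i < j \le 0$; consequently the remaining two steps introduce no new vertices and no new edges. Finally the parameter update gives $k' = k + |W| - d = k + 0 - 0 = k$. Hence the Struction Rule applied to $v$ outputs exactly $(G - v, k) = (G',k')$, which establishes the inclusion and thus the lemma.

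I do not expect any genuine obstacle here; the verification is immediate once $d = 0$ is substituted throughout. The only point that warrants care is the bookkeeping in this degenerate case, namely confirming that the index range defining $W$ is empty and that the parameter update collapses to $k' = k$, so that the two rules agree not merely on the resulting graph but also on the parameter value.
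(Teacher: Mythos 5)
Your proposal is correct and follows essentially the same argument as the paper: apply the Struction Rule to the isolated vertex $v$, observe that $N[v]=\{v\}$ and $W=\emptyset$ since $d=0$, so the rule merely deletes $v$ and leaves $k$ unchanged ($k' = k + 0 - 0$), matching the Degree-0 Rule exactly. Your version is merely more explicit about unfolding the relation-inclusion definition of generalization and the empty index range defining $W$.
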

\begin{proof}
Let~$v$ be a degree-0 vertex.
The Degree-0 Rule simply deletes~$v$.
The Struction Rule deletes~$N[v]$ and adds a set~$W$ of new vertices for each missing edge in~$N(v)$, which in this case is empty.
Therefore the Struction just deletes~$v$.
The parameter~$k$ is not changed in either case.
\end{proof}

\begin{lemma}
The Struction Rule (\Cref{rr:struction}) is a generalization of the Degree-1 Rule (\Cref{rr:deg1}).
\end{lemma}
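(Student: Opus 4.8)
The plan is to apply the Struction Rule (\cref{rr:struction}) directly to the degree-1 vertex $v$ and to check that its output coincides exactly with that of the Degree-1 Rule (\cref{rr:deg1}); since generalization means $R_{\mathrm{Deg1}} \subseteq R_{\mathrm{Struction}}$, it suffices to realize every Degree-1 transformation as a Struction transformation. First I would record the target: if $v$ is a degree-1 vertex with unique neighbor $u$, then the Degree-1 Rule deletes $u$ and $v$ together with their incident edges and decreases $k$ by one, producing the instance $(G - \{u,v\},\, k-1)$, whose graph is precisely the induced subgraph $G[V \setminus \{u,v\}]$.

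Next I would trace the Struction Rule applied to the same vertex $v$. Because $\deg(v) = 1$, we have $d = 1$ and $N(v) = \{a_1\}$ with $a_1 = u$, so $R = V \setminus N[v] = V \setminus \{v,u\}$. The key observation is that the set $W = \{v_{i,j} \mid 1 \le i < j \le d,\ \{a_i,a_j\} \notin E\}$ is empty, as there is no pair of indices $i < j$ contained in $\{1\}$. Consequently every edge-insertion step of the rule ranges over the empty set $W$ and adds nothing, the rule merely removes $N[v] = \{v,u\}$ (and their incident edges), and it sets $k' = k + |W| - d = k + 0 - 1 = k - 1$. Thus the Struction output is also $(G[V \setminus \{u,v\}],\, k-1)$.

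Finally I would conclude by matching the two outputs: both leave exactly $G[V \setminus \{u,v\}]$ on the vertex set $R$ and both decrease $k$ by one, so the transformation performed by the Degree-1 Rule is among those performed by the Struction Rule, i.e., Struction generalizes Degree-1. I expect no genuine obstacle here; the only point needing care is the bookkeeping that $W = \emptyset$ precisely because $d = 1$ admits no pair $i < j$, which collapses the otherwise intricate Struction construction to a plain deletion of $N[v]$. This parallels the preceding lemma for the Degree-0 Rule, where $W$ was empty for the analogous reason that the empty neighborhood contains no missing edge.
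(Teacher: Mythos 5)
Your proposal is correct and follows essentially the same route as the paper's proof: apply the Struction Rule to the degree-1 vertex $v$, observe that $W = \emptyset$ (since $d=1$ admits no pair $i<j$, i.e.\ no missing edges in $N(v)$), so the rule collapses to deleting $N[v] = \{u,v\}$ with $k' = k + 0 - 1 = k-1$, matching the Degree-1 Rule. Your version is just more explicit about the bookkeeping of $d$, $W$, and $k'$ than the paper's brief argument.
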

\begin{proof}
Let~$v$ be a degree-1 vertex and~$u$ its unique neighbor..
The Degree-1 Rule simply deletes~$u$ and~$v$.
The Struction Rule deletes~$N[v]$ and adds a set~$W$ of new vertices for each missing edge in~$N(v)$, which in this case is empty.
Therefore the Struction also just deletes~$u$ and~$v$.
The parameter~$k$ is decreased by one by both rules.
\end{proof}

\begin{lemma}
The Struction Rule (\Cref{rr:struction}) is a generalization of the 2-Clique Neighborhood Rule (\Cref{rr:cn}).
\end{lemma}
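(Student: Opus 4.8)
The plan is to exploit the nondeterminism of the Struction Rule in the ordering of $N(v)$. Since generalization means $R_{\text{CN}} \subseteq R_{\text{Struction}}$, it suffices to show that whenever \cref{rr:cn} applies to a vertex $v$ with partition $(C_1, C_2)$, there is a labeling of $N(v)$ for which \cref{rr:struction} applied to the same $v$ produces an isomorphic instance. First I would fix such a labeling, listing $C_1 = \{a_1, \dots, a_p\}$ first and $C_2 = \{a_{p+1}, \dots, a_{p+q}\}$ afterwards, where $p = |C_1|$, $q = |C_2|$, and $d = p + q$.

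Next I would analyze the non-edge set $M$ of $G[N(v)]$ under the hypotheses of \cref{rr:cn}. Because $C_1$ and $C_2$ are both cliques, every non-edge of $G[N(v)]$ joins a vertex of $C_1$ to a vertex of $C_2$; combined with the requirement that each $c_1 \in C_1$ lies in exactly one non-edge, this yields a map $\sigma$ sending each index $i \le p$ to the index $\sigma(i) > p$ of its unique non-adjacent partner in $C_2$, and shows $|M| = |C_1|$. Consequently the set $W$ created by the Struction consists of exactly the vertices $v_{i,\sigma(i)}$ for $1 \le i \le p$, so $|W| = |C_1|$ and the parameter update $k' = k + |W| - d = k - |C_2|$ coincides with that of \cref{rr:cn}.

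I would then define the isomorphism $f$ from the Struction output to the $2$-Clique Neighborhood output to be the identity on $R = V \setminus N[v]$ and to map each $v_{i,\sigma(i)}$ to the retained $C_1$-vertex $a_i$, and verify the three families of edges. For adjacencies into $R$, the Struction joins $v_{i,\sigma(i)}$ to $u \in R$ exactly when $u \in N_G(a_i) \cup N_G(a_{\sigma(i)}) = N_G(c_1) \cup N_G(c_2)$, which is precisely the neighborhood that \cref{rr:cn} assigns to $c_1$ after adding the missing edges to $N_G(c_2)$. For adjacencies inside $W$, any two distinct Struction vertices $v_{i,\sigma(i)}, v_{k,\sigma(k)}$ have $i \ne k$, so the Struction edge condition ``$i \ne k$ or $\{a_j,a_l\} \in E$'' is always satisfied and $W$ induces a clique; this matches the fact that $C_1$ is a clique in $G$ and remains one in the reduced graph. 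Finally, edges of $G$ incident to the deleted vertices $v$ and $C_2$ play no role in either output, and the $C_1$-to-$C_1$ edges that \cref{rr:cn} would add while connecting $c_1$ to $N_G(c_2) \cap C_1$ are redundant since $C_1$ is already complete.

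The main obstacle I anticipate is the bookkeeping needed to reconcile the two rather different descriptions of the reduced graph: \cref{rr:cn} phrases the new adjacencies as ``add all missing edges between $c_1$ and $N_G(c_2)$'', where $N_G(c_2)$ also contains $v$, further $C_2$-vertices, and $C_1$-vertices, whereas the Struction refers only to $R$ and to the combinatorics of $M$. Showing that the extra endpoints contribute nothing new---they are either deleted or already adjacent to $c_1$---and checking that the opaque internal edge rule of the Struction collapses to ``clique'' under the chosen labeling are the delicate points; once the labeling is fixed, the remaining verifications are routine.
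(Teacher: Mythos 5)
Your proof is correct and follows essentially the same route as the paper's: fix the labeling so that $C_1$ precedes $C_2$, observe that the non-edges of $G[N(v)]$ pair each $a_i \in C_1$ with its unique non-neighbor in $C_2$, and map each struction vertex $v_{i,\sigma(i)}$ to the retained vertex $a_i$, matching neighborhoods in $R$, the clique structure among the new vertices, and the parameter decrease $|C_2| = d - p$. Your version merely spells out verifications (e.g., that the struction's internal edge condition yields a clique, and that deleted or already-adjacent endpoints of $N_G(c_2)$ contribute nothing) that the paper leaves implicit.
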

\begin{proof}
Let~$v$ be a vertex and~$N(v) = \{ a_1, \dots, a_d \}$, such that~$C_1 = \{a_1, \dots, a_{p}$ and~$C_2 = \{a_{p+1}, \dots, a_d\}$ form a partition of~$N(v)$ that satisfies the condition of the 2-Clique Neighborhood Rule.
That means, $p = |C_1| \geq |C_2| = d-p$, $C_1$ and~$C_2$ are both cliques and each vertex in in~$C_1$ is missing exactly one edge to a vertex in~$C_2$.
For~$a_i \in C_1$ let ~$m(a_i)$ be the vertex in~$C_2$ not adjacent to~$a_i$.

The 2-Clique Neighborhood Rule deletes~$v$ and~$C_2$ and connects~$a_i$ to all vertices in~$N_G(m(a_i)) \setminus N[v]$.

The Struction Rule deletes~$N[v]$ and instead creates a set~$W = \{v_{i,j} \mid a_i \in C_1, a_j = m(a_i)\}$ of new vertices, which become a clique.
Furthermore~$v_{i,j}$ is connected to~$N_G(\{a_i, a_j\}) \setminus N[v]$.
This means the vertex~$v_{i,j}$ corresponds to the vertex~$a_i$ in the case where the 2-Clique Neighborhood was applied.

The parameter~$k$ is decreased by~$d-p$ by both rules.
\end{proof}

\begin{lemma}
The Struction Rule (\Cref{rr:struction}) is a generalization of the Degree-3 Independent Set Rule (\Cref{rr:deg3}).
\end{lemma}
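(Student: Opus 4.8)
The plan is to establish the inclusion $R_{\mathrm{Deg3}} \subseteq R_{\mathrm{Struction}}$ by exhibiting, for every possible output of \cref{rr:deg3}, a way of applying \cref{rr:struction} to the same vertex that yields an isomorphic instance, where the isomorphism fixes every vertex outside $N[v]$. First I would fix a degree-3 vertex $v$ whose neighborhood $N(v) = \{a_1, a_2, a_3\}$ is an independent set, and apply the Struction with the ordering $(a_1, a_2, a_3)$. Since $N(v)$ is independent, all three pairs $\{a_1,a_2\}, \{a_1,a_3\}, \{a_2,a_3\}$ are non-edges, so $d = 3$ and $W = \{v_{1,2}, v_{1,3}, v_{2,3}\}$ with $|W| = 3 = d$; hence $k' = k + |W| - d = k$, matching the fact that \cref{rr:deg3} leaves $k$ unchanged.

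Next I would determine the edges of $G'$ created by the Struction. For the edges inside $W$, the rule adds $\{v_{i,j}, v_{k,l}\}$ iff $i \neq k$ or $\{a_j, a_l\} \in E$. Checking the three pairs, $v_{2,3}$ is adjacent to both $v_{1,2}$ and $v_{1,3}$ (the first indices differ in each case), whereas $v_{1,2}$ and $v_{1,3}$ are non-adjacent (same first index $1$, and $\{a_2, a_3\} \notin E$). Thus $W$ induces the path $v_{1,2} - v_{2,3} - v_{1,3}$ with $v_{2,3}$ in the middle. For the edges to $R = V \setminus N[v]$, vertex $v_{i,j}$ is joined to $(N_G(a_i) \cup N_G(a_j)) \setminus \{v\}$; here the removal of $v$ is harmless, and since $N(v)$ is independent these sets already lie entirely in $R$ (the only neighbor of any $a_i$ inside $N[v]$ is $v$).

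Finally I would match this against \cref{rr:deg3} applied with the labeling $(a, b, c) = (a_1, a_2, a_3)$, which builds the path $a - b - c$ and rewires $a$ to $N_G(b)$, $b$ to $N_G(c)$, and $c$ to $N_G(a)$, so that (after discarding the removed vertex $v$) the external neighborhoods become $(N_G(a_1) \cup N_G(a_2)) \setminus \{v\}$, $(N_G(a_2) \cup N_G(a_3)) \setminus \{v\}$, and $(N_G(a_3) \cup N_G(a_1)) \setminus \{v\}$, respectively. The correspondence $a \mapsto v_{1,2}$, $b \mapsto v_{2,3}$, $c \mapsto v_{1,3}$ then maps the path $a - b - c$ onto $v_{1,2} - v_{2,3} - v_{1,3}$ and identifies the external neighborhoods one-to-one, giving the desired isomorphism; the remaining five orderings of $N(v)$ in \cref{rr:deg3} are handled by choosing the matching ordering of $N(v)$ in the Struction. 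The hard part is precisely this bookkeeping step: confirming that the Struction's ``middle'' vertex $v_{2,3}$ plays the role of the middle path vertex $b$, and that the Struction's symmetric assignment $v_{i,j} \mapsto N_G(a_i) \cup N_G(a_j)$ reproduces the cyclic rewiring $a \to N_G(b)$, $b \to N_G(c)$, $c \to N_G(a)$ of the Degree-3 rule under this one correspondence.
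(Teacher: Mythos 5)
Your proposal is correct and follows essentially the same route as the paper's proof: apply the Struction to the same vertex $v$, observe that independence of $N(v)$ forces $|W| = d = 3$ (so $k$ is unchanged), and identify $v_{1,2}, v_{2,3}, v_{1,3}$ with $a_1, a_2, a_3$ respectively — exactly the correspondence the paper uses. Your write-up is more detailed (explicit verification of the path structure inside $W$, of the external neighborhoods, and of the six orderings), but the underlying argument is identical.
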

\begin{proof}
Let~$v$ be a degree-3 vertex and~$N(v) = \{ a_1, a_2, a_3 \}$ an independent set.

The Degree-3 Independent Set Rule adds the edges~$\{a_1, a_2\}$, $\{a_2, a_3\}$, and all edges between~$a_1$ and~$N(a_2)$, $a_2$ and~$N(a_3)$ and between~$a_3$ and~$N(a_1)$.
It also deletes~$v$.

The Struction Rule instead deletes~$N[v]$ and creates a set~$W = \{v_{1,2}, v_{1,3}, v_{2,3}\}$ and adds the edges~$\{v_{1,2}, v_{2,3}\}, \{v_{1,3}, v_{2,3}\}$.
Furthermore~$v_{i,j}$ is connected to~$N_G(\{a_i, a_j\}) \setminus N[v]$.
This means the vertices~$v_{1,2}, v_{2,3}$ and~$v_{1,3}$ correspond to the vertices~$a_1, a_2$ and~$a_3$, respectively, in the case where the Degree-3 Rule was applied.

The parameter~$k$ is not changed by either rule.
\end{proof}

\begin{lemma}
The Domination Rule (\Cref{rr:domination}) in combination with the Degree-0 Rule (\Cref{rr:deg0}) is a generalization of the Degree-1 Rule (\Cref{rr:deg1}).
\end{lemma}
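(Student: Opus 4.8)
The plan is to show that the Degree-1 Rule (\cref{rr:deg1}) is a special case of applying the Domination Rule (\cref{rr:domination}) and then the Degree-0 Rule (\cref{rr:deg0}). So I would start by taking an arbitrary instance to which the Degree-1 Rule applies: let $v$ be a degree-1 vertex and $u$ its unique neighbor. The Degree-1 Rule deletes both $u$ and $v$ and decreases $k$ by one. I need to reach the same resulting instance by a (short) sequence of the two other rules.

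First I would verify that $u$ dominates $v$, so that the Domination Rule is applicable to the pair $(u,v)$. Since $\deg(v) = 1$, we have $N[v] = \{u, v\}$, and because $u$ is adjacent to $v$ we certainly have $\{u, v\} \subseteq N[u]$, hence $N[v] \subseteq N[u]$. The two vertices $u,v$ are adjacent by assumption, so the preconditions of the Domination Rule are met. Applying it deletes $u$ and decreases $k$ by one.

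Next I would observe what remains: after deleting $u$, the vertex $v$ loses its only edge and becomes isolated (a degree-0 vertex), while all other vertices are unaffected. The Degree-0 Rule then deletes $v$ without changing $k$. The net effect of this two-step sequence is the deletion of both $u$ and $v$ together with a decrease of $k$ by one, which is exactly the outcome of the Degree-1 Rule. Thus every application of the Degree-1 Rule is reproduced by the Domination Rule followed by the Degree-0 Rule, establishing the claimed generalization.

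The argument is essentially immediate once the neighborhood containment is checked, so I do not anticipate a genuine obstacle; the only mild subtlety worth stating explicitly is that the Domination Rule alone deletes only $u$ (the dominating vertex) and leaves $v$ behind as an isolated vertex, which is precisely why the auxiliary Degree-0 Rule is needed to match the Degree-1 Rule exactly — this is the reason the statement couples the two rules rather than claiming Domination by itself suffices.
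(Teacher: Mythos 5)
Your proposal is correct and follows exactly the same route as the paper's proof: apply the Domination Rule to delete the dominating vertex~$u$ (decreasing~$k$ by one), then remove the now-isolated vertex~$v$ with the Degree-0 Rule, matching the effect of the Degree-1 Rule. Your version merely spells out the containment~$N[v] = \{u,v\} \subseteq N[u]$ explicitly, which the paper leaves implicit.
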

\begin{proof}
Let~$v$ be a degree-1 vertex and~$u$ its unique neighbor.
The vertex~$u$ dominates~$v$, and by the Domination rule can be removed.
The vertex~$v$ then becomes a degree-0 vertex and it may be deleted using the Degree-0 Rule.
The Degree-1 Rule simply directly deletes~$u$ and~$v$.
Both rules decrease~$k$ by one.
\end{proof}

\begin{lemma}
The Crown Rule (\Cref{rr:crown}) is a generalization of the Degree-0 Rule (\Cref{rr:deg0}).
\end{lemma}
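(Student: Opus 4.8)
The plan is to exhibit, for every application of the Degree-0 Rule, a crown whose removal under the Crown Rule (\cref{rr:crown}) has exactly the same effect on the instance; this directly shows the containment $R_{\text{Deg0}} \subseteq R_{\text{Crown}}$ that defines generalization. Concretely, given a degree-0 vertex $v$, I would set $I = \{v\}$ and $H = N(I)$. Since $v$ has no neighbors, $H = N(v) = \emptyset$.

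Next I would verify the three conditions of the crown definition (\cref{rr:crown_def}) for the pair $(H,I) = (\emptyset, \{v\})$. Condition $H = N(I)$ holds by construction. The set $I = \{v\}$ is nonempty and trivially independent, since a single vertex forms an independent set. The matching condition---a matching among the edges between $H$ and $I$ that saturates all of $H$---is satisfied vacuously, because $H = \emptyset$ has no vertices to match, so the empty matching saturates every vertex of the (empty) head. I would also note that $\emptyset$ and $\{v\}$ are distinct subsets, as the definition requires. Hence $(\emptyset, \{v\})$ is a valid crown.

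Finally, I would compare the two transformations. Applying the Crown Rule to $(\emptyset, \{v\})$ deletes $H \cup I = \{v\}$ and decreases $k$ by $|H| = 0$; that is, it removes exactly $v$ and leaves $k$ unchanged. This coincides precisely with the action of the Degree-0 Rule (\cref{rr:deg0}). Since the choice of $v$ was arbitrary, every instance pair realized by the Degree-0 Rule is also realized by the Crown Rule, which establishes the claimed generalization.

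The only point requiring care---and the closest thing to an obstacle here---is confirming that the crown definition admits an \emph{empty} head $H$ together with a vacuously satisfied matching condition; once this reading is granted, the rest of the argument is immediate, and the parameter bookkeeping ($k$ unchanged in both rules) matches trivially because $|H| = 0$.
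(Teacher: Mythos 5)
Your proof is correct and follows exactly the same route as the paper's: both exhibit the crown $(\emptyset, \{v\})$ for a degree-0 vertex $v$ and observe that the Crown Rule then deletes just $v$ while leaving $k$ unchanged, matching the Degree-0 Rule. Your version merely spells out the verification of the crown conditions (including the vacuous matching for the empty head) that the paper dismisses as ``easy to see.''
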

\begin{proof}
Let~$v$ be a degree-0 vertex.
It is easy to see that~$(\emptyset, \{v\})$ forms a crown.
Both rules simply remove the vertex~$v$ without changing~$k$.
\end{proof}

\begin{lemma}
The Crown Rule (\Cref{rr:crown}) is a generalization of the Degree-1 Rule (\Cref{rr:deg1}).
\end{lemma}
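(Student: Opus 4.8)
The plan is to reuse exactly the template of the immediately preceding proof, where the Crown Rule was shown to generalize the Degree-0 Rule by exhibiting the crown $(\emptyset,\{v\})$. Here I would instead exhibit, for each instance to which the Degree-1 Rule applies, an explicit crown whose removal by \cref{rr:crown} coincides with the action of \cref{rr:deg1}; this witnesses that the relation of the Degree-1 Rule is contained in that of the Crown Rule, which is what generalization means.

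Concretely, let $v$ be a degree-1 vertex and $u$ its unique neighbor. I would set $I = \{v\}$ and $H = \{u\}$ and check the three conditions of \cref{rr:crown_def}. The equality $H = N(I)$ holds because $u$ is the only neighbor of $v$, so $N(\{v\}) = \{u\}$. The set $I = \{v\}$ is a non-empty independent set, being a singleton. Finally, the lone edge $\{u,v\}$ joining $H$ and $I$ is by itself a matching that saturates all of $H$. Hence $(\{u\},\{v\})$ is a crown.

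It then only remains to compare outputs: \cref{rr:crown} deletes $H \cup I = \{u,v\}$ and decreases $k$ by $|H| = 1$, which is precisely what \cref{rr:deg1} does, and the parameter update agrees in both rules. I expect no genuine obstacle, as the argument is a direct instantiation of the crown definition fully parallel to the Degree-0 case; the only new ingredient is the trivial remark that a single edge matches its unique $H$-endpoint.
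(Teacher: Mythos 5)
Your proof is correct and follows exactly the paper's approach: exhibit $(\{u\},\{v\})$ as a crown and observe that both rules delete $u,v$ and decrease $k$ by one. The paper merely states that this pair ``is easy to see'' to form a crown, whereas you verify the three conditions of the crown definition explicitly, which is a harmless elaboration of the same argument.
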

\begin{proof}
Let~$v$ be a degree-1 vertex and~$u$ its unique neighbor.
It is easy to see that~$(\{u\}, \{v\})$ forms a crown.
Both the Crown and Degree-1 rules remove the vertices~$u$ and~$v$ and decrease~$k$ by one.
\end{proof}

\begin{lemma}
The 2-Clique Neighborhood Rule (\Cref{rr:cn}) is a generalization of the Degree-2 Folding Rule (\Cref{rr:deg2}).
\end{lemma}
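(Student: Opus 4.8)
The plan is to exhibit, for an arbitrary application of the Degree-2 Folding Rule (\Cref{rr:deg2}), a specific choice of partition that makes the 2-Clique Neighborhood Rule (\Cref{rr:cn}) applicable to the same vertex and produces an isomorphic instance; this witnesses the required inclusion of relations from \cref{sect:rrule_relations}. So let $v$ be a degree-2 vertex with nonadjacent neighbors $a,b$, i.e.\ $N(v) = \{a,b\}$, and set $C_1 = \{a\}$ and $C_2 = \{b\}$.

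First I would verify that $(C_1,C_2)$ meets the preconditions of \Cref{rr:cn}. It is a partition of $N(v)$ with $|C_1| = |C_2| = 1$, so $|C_1| \ge |C_2|$ holds; both singletons are trivially cliques; and since $a,b$ are nonadjacent, the only non-edge of $G[N(v)]$ is $M = \{\{a,b\}\}$, so the unique vertex $c_1 = a \in C_1$ lies in exactly one element of $M$. Hence \Cref{rr:cn} applies with this partition: it deletes $v$ and $C_2 = \{b\}$ and decreases $k$ by $|C_2| = 1$, which matches the unit decrease of the Degree-2 Folding Rule.

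Second I would compute the surviving neighborhood of $a$ and compare it to the neighborhood of the merged vertex. For the single pair $\{a,b\} \in M$ with $a \in C_1$ and $b \in C_2$, \Cref{rr:cn} adds all missing edges between $a$ and $N_G(b)$; after deleting $v$ and $b$, using $a \notin N(b)$ and $b \notin N(a)$, the neighborhood of $a$ becomes $(N(a) \cup N(b)) \setminus \{v\}$. On the other side, merging $\{v,a,b\}$ creates a vertex $v'$ with $N(v') = N_G(\{v,a,b\}) = (N(a) \cup N(b)) \setminus \{v\}$. Thus the surviving vertex $a$ in the output of \Cref{rr:cn} corresponds exactly to the merged vertex $v'$, and the two resulting graphs are isomorphic.

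There is essentially no hard step here; the only care needed is the bookkeeping in the previous paragraph: tracking which edges at $a$ survive the deletions of $v$ and $b$, checking that adding the edges toward $N_G(b)$ reproduces $N(b)$ (minus the deleted $v$) without creating a self-loop or a spurious edge to $b$, and confirming that common neighbors of $a$ and $b$ cause no double-counting in the simple graph. Once the surviving vertex's neighborhood is shown to equal $N_G(\{v,a,b\})$, the containment claimed by the statement follows immediately.
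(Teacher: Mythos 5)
Your proof is correct and follows essentially the same route as the paper's: both apply the 2-Clique Neighborhood Rule with the partition $C_1=\{a\}$, $C_2=\{b\}$, identify the surviving vertex $a$ with the merged vertex of the Degree-2 Folding Rule via the equal neighborhood $(N(a)\cup N(b))\setminus\{v\}$, and note both rules decrease $k$ by one. Your version is somewhat more careful in explicitly verifying the preconditions of \Cref{rr:cn} (the clique and non-edge conditions), which the paper's proof leaves implicit.
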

\begin{proof}
Let~$v$ be a degree-2 vertex and~$a, b$ its neighbors, which are non-adjacent.
The Degree-2 Folding Rule deletes~$v, a$ and~$b$ and creates a new vertex~$c$ whose neighborhood is~$(N(a) \cup N(b)) \setminus \{v\}$.

The 2-Clique Neighborhood Rule deletes~$v$ and~$b$ (or~$a$), and connects~$a$ with~$N(b) \setminus \{v\}$.
Therefore the vertex~$a$ corresponds to the vertex~$c$.
Both the rules decrease~$k$ by one.
\end{proof}

\begin{lemma}
The Relaxed Crown Rule (\Cref{rr:relcrown}) is a generalization of the Degree-2 Folding Rule (\Cref{rr:deg2}) in crown-free graphs.
\end{lemma}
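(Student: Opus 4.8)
The plan is to exhibit, for any degree-2 vertex handled by the Degree-2 Folding Rule, a relaxed crown whose reduction via \cref{rr:relcrown} produces exactly the same reduced instance. First I would fix the setup: let $v$ be a degree-2 vertex with nonadjacent neighbors $a,b$, as required by \cref{rr:deg2}, and set $I = \{v\}$ and $H = N(v) = \{a,b\}$. Then $I$ is trivially an independent set, and $H = \bigcup_{u \in I} N(u)$ holds by construction, so the first two conditions in the definition of a relaxed crown hold immediately.

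The key step is to verify the third condition, namely that some witness vertex $w \in H$ makes $(H \setminus \{w\}, I)$ a crown in $G - w$. I would take $w = a$ and check the pair $(\{b\}, \{v\})$ against the crown definition (\cref{rr:crown_def}) in the graph $G - a$: the head must equal $N_{G-a}(\{v\}) = \{b\}$, which holds since deleting $a$ leaves $v$ with the single neighbor $b$; the point set $\{v\}$ is a nonempty independent set; and the single edge $\{v,b\}$ is a matching saturating the head $\{b\}$. Hence $(\{b\},\{v\})$ is a crown in $G - a$, and therefore $(H,I) = (\{a,b\},\{v\})$ is a relaxed crown in $G$.

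It then remains to match the outcomes. Because $a$ and $b$ are nonadjacent, $H = \{a,b\}$ is an independent set, so the first branch of \cref{rr:relcrown} applies: it merges $I \cup H = \{v,a,b\}$ into a single vertex and decreases $k$ by $|I| = 1$. This is precisely the action of \cref{rr:deg2}, which merges $v,a,b$ and decreases $k$ by one; since the merged vertex set is identical, the merged vertex inherits the same neighborhood $N_G(\{v,a,b\}) \setminus \{v,a,b\}$ in both cases, the resulting graphs coincide, and the parameter change agrees.

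I expect the only real subtlety to lie in the crown-freeness hypothesis built into \cref{rr:relcrown}, which is exactly why the statement is restricted to crown-free graphs. The hard part is to confirm that this restriction does not trivialize the claim: a degree-2 vertex with nonadjacent neighbors need not itself induce a crown, since the candidate $(\{a,b\},\{v\})$ fails the matching condition (the single vertex $v$ cannot saturate the two-element head $\{a,b\}$). Thus such vertices genuinely survive in crown-free graphs, and the relaxed-crown construction above is precisely what is needed to reduce them, so the generalization is meaningful rather than vacuous.
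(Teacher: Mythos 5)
Your proof is correct and follows essentially the same approach as the paper: both exhibit $(\{a,b\},\{v\})$ as a relaxed crown whose head is an independent set, so that \cref{rr:relcrown} merges $\{v,a,b\}$ and decreases $k$ by one exactly as \cref{rr:deg2} does. Your write-up additionally verifies the witness condition (taking $w=a$ and checking that $(\{b\},\{v\})$ is a crown in $G-a$) and notes non-vacuity under crown-freeness, details the paper's terser proof leaves implicit.
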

\begin{proof}
Let~$G$ be crown-free and~$v$ be a degree-2 vertex and~$a, b$ its neighbors, which are non-adjacent.
The Degree-2 Folding Rule merges~$v, a$ and~$b$ into a single vertex.

The pair~$(\{a,b\}, \{v\})$ is a relaxed crown, where the head is an independent set.
The Relaxed Crown Rule therefore merges~$v, a$ and~$b$ into a single vertex.
Both the rules decrease~$k$ by one.
\end{proof}

\end{document}